\newif\iffull
\patchcmd{\maketitle}{\@copyrightspace}{}{}{}
\renewcommand\footnotetextcopyrightpermission[1]{} % removes footnote with conference information in first column
\g@addto@macro\normalsize{%
	\setlength\abovedisplayskip{4pt}
	\setlength\belowdisplayskip{3pt}
	\setlength\abovedisplayshortskip{3pt}
	\setlength\belowdisplayshortskip{3pt}
}
\setlist[description]{
    labelindent=.1cm,
    style=unboxed,
    leftmargin=.3cm,
    font=\itshape,
    topsep=.5ex,
    itemsep=-.1ex
}
\setlist[itemize]{
    topsep=.5ex,
    leftmargin=.6cm,
    itemsep=-.1ex
}
\setlist[enumerate]{
    topsep=.5ex,
    leftmargin=.6cm,
    itemsep=-.1ex
}
\newenvironment{mybox}[1][gray!20]{
	\begin{tcolorbox}[   %% Adjust the following parameters at will.
		breakable,
		left=0pt,
		right=0pt,
		top=0pt,
		bottom=-1pt,
		colback=#1,
		colframe=#1,
		width=0.475\dimexpr\textwidth\relax,
		%enlarge left by=0mm,
		boxsep=4pt,
		arc=0pt,outer arc=0pt,
		%after={\vspace{-0.8cm}},
		]
	}{
\end{tcolorbox}
}
\newcounter{resq}%[section]
\newenvironment{resq}[1][]
{
	\vspace{-2mm}
	\refstepcounter{resq}\par\medskip
	\begin{mybox}
		\noindent \textbf{EQ~\theresq. #1} \rmfamily
	}{
\end{mybox}
\vspace{-2.5mm}
\medskip
}
\newcommand{\rone}{(\emph{i})~}
\newcommand{\rtwo}{(\emph{ii})~}
\definecolor{mGreen}{rgb}{0,0.6,0}
\definecolor{mGray}{rgb}{0.5,0.5,0.5}
\definecolor{mPurple}{rgb}{0.58,0,0.82}
\lstdefinestyle{CStyle}{
	commentstyle=\color{mGreen},
	keywordstyle=\color{mGreen},
	numberstyle=\tiny\color{mGray},
	escapeinside={(*@}{@*)},
	stringstyle=\color{mPurple},
	basicstyle=\footnotesize\ttfamily,
	breakatwhitespace=false,
	breaklines=true,
	captionpos=b,
	keepspaces=true,
	numbers=left,
	numbersep=5pt,
	showspaces=false,
	showstringspaces=false,
	showtabs=false,
	tabsize=2,
	language=C
}
\newcommand{\mypar}[1]{\vspace{0.5mm}\noindent  \textit{#1}}
\newcommand{\sem}[1]{\llbracket #1 \rrbracket\xspace}
\newcommand{\semgamma}[1]{\llbracket #1 \rrbracket^\#\xspace}
\newcommand\eqdef{\stackrel{{\normalfont\mbox{\tiny{def}}}}{=}}
\newcommand{\xmark}{\ding{55}}%
\newcommand{\mkadditive}{h}
\newcommand{\Se}{\mathcal S}
\newcommand{\slint}{\mathcal{S\!L}}
\newcommand{\linset}[1]{\langle #1\rangle}
\newcommand{\project}{\textsc{proj}_{\slint}}
\newcommand{\projectls}{\textsc{proj}_{\Se}}
\newcommand{\projectint}{\textsc{proj}_{\vec{\integ}}}
\newcommand{\removeif}{\textsc{RemIf}}
\def\tr{\textsf{t}}
\def\fa{\textsf{f}}
\def\var{\mathcal{V}}
\newcommand{\bset}{\textit{bset}}
\newcommand{\eqs}{\textit{eqs}}
\def\name{\textsc{nay}\xspace}
\def\namesl{\textsc{nay}$_\slint$\xspace}
\def\namehorn{\textsc{nay}$_\text{Horn}$\xspace}
\def\nope{\textsc{nope}\xspace}
\def\sygus{\textsc{SyGuS}\xspace}
\def\seahorn{\textsc{SeaHorn}\xspace}
\def\z3{\textsc{Z3}\xspace}
\def\cvc4{\textsc{CVC4}\xspace}
\def\cex{{E}\xspace}
\def\cegis{\textsc{CEGIS}\xspace}
\def\solved{70\xspace}
\def\solvednope{59\xspace}
\def\naymore{11\xspace}
\def\numSolved{\solved}
\def\numBenchmarks{132\xspace}
\def\numIfBenchmarks{57\xspace}
\def\numPlusBenchmarks{30\xspace}
\def\numConstBenchmarks{45\xspace}
\def\avgtime{1.97s\xspace}
\def\avgtimenope{15.59s\xspace}
\def\avgtimehorn{0.63s\xspace}
\def\nat{\mathbb{N}}
\def\integ{\mathbb{Z}}
\def\bool{\mathbb{B}}
\def\sy{{sy}}
\def\node{q}
\newcommand{\gammaHat}{\widehat{\gamma}}
\newcommand{\LogicLang}{\mathcal{L}}
\newcommand{\eval}[1]{{\lsyn #1 \rsyn}}
\newcommand{\lsyn}{\lbrack\!\lbrack}
\newcommand{\rsyn}{\rbrack\!\rbrack}
\newcommand{\B}[1]{\langle{#1}\rangle}
\newcommand{\calG}{\mathcal{G}}
\newcommand{\calGclia}{{{\mathcal{G}^{\tiny\text{CLIA}+}_{\tiny E}}}}
\newcommand{\calGOne}{{\mathcal{G}_1}}
\newcommand{\calD}{\mathcal{D}}
\newcommand{\calP}{\mathcal{P}}
\newcommand{\LIA}{\text{LIA}\xspace}
\newcommand{\LIAPLUS}{{\text{LIA}^+}}
\newcommand{\CLIAPLUS}{{\text{CLIA}^+}}
\newcommand{\Start}{\textit{Start}\xspace}
\newcommand{\SOne}{\textit{S1}}
\newcommand{\STwo}{\textit{S2}}
\newcommand{\SThree}{\textit{S3}}
\DeclareMathOperator{\extend}{\otimes}
\DeclareMathOperator{\combine}{\oplus}
\DeclareMathOperator*{\Combine}{\bigoplus}
\newcommand\wideDownarrow{\mathrel{\scalebox{1.5}[1]{$\Downarrow$}}}
\newcommand\bigwideDownarrow{\mathrel{\scalebox{1.5}[1]{$\bigg\Downarrow$}}}
\newcommand{\ostar}{\circledast}
\newcommand{\srZero}{\underline{0}}
\newcommand{\srOne}{\underline{1}}
\newcommand{\wsum}{\displaystyle\combine\limits_{i \in \nat}}
\newcommand{\Forall}[1]{\forall {#1}\,.}
\newcommand{\Exists}[1]{\exists {#1}\,.}
\newcommand{\Omit}[1]{}
\def\lang{L}
\newcommand{\ballns}[1]{\tikz[baseline=(myanchor.base)] \node[scale=0.9,circle,fill=.,inner sep=1pt,minimum size=1pt] (myanchor) {\color{-.}\bfseries\footnotesize #1};}
\begin{document}

%% Title information
\title[Proving Unrealizability of \sygus Problems]{Exact and Approximate Methods for Proving Unrealizability of Syntax-Guided Synthesis Problems}         %% [Short Title] is optional;
                                        %% when present, will be used in
                                        %% header instead of Full Title.
%\titlenote{with title note}             %% \titlenote is optional;
                                        %% can be repeated if necessary;
                                        %% contents suppressed with 'anonymous'
%\subtitle{Subtitle}                     %% \subtitle is optional
%\subtitlenote{with subtitle note}       %% \subtitlenote is optional;
                                        %% can be repeated if necessary;
                                        %% contents suppressed with 'anonymous'

%% Author information
%% Contents and number of authors suppressed with 'anonymous'.
%% Each author should be introduced by \author, followed by
%% \authornote (optional), \orcid (optional), \affiliation, and
%% \email.
%% An author may have multiple affiliations and/or emails; repeat the
%% appropriate command.
%% Many elements are not rendered, but should be provided for metadata
%% extraction tools.

%% Author with single affiliation.
\author{Qinheping Hu}
\affiliation{            %% \department is recommended
  \institution{University of Wisconsin-Madison}    \country{USA}         %% \country is recommended
}   %% \email is recommended
\author{John Cyphert}
\affiliation{            %% \department is recommended
	\institution{University of Wisconsin-Madison}  \country{USA}           %% \country is recommended
}   %% \email is recommended
\author{Loris D'Antoni}
\affiliation{            %% \department is recommended
	\institution{University of Wisconsin-Madison}  \country{USA}           %% \country is recommended
}   %% \email is recommended
\author{Thomas Reps}
\affiliation{            %% \department is recommended
	\institution{University of Wisconsin-Madison}  
	\country{USA}        %% \country is recommended
}   %% \email is recommended

%% Abstract
%% Note: \begin{abstract}...\end{abstract} environment must come
%% before \maketitle command
\begin{abstract}
We consider the problem of automatically establishing that a
given syntax-guided-synthesis (\sygus) problem is
unrealizable (i.e., has no solution). 
We formulate the problem of proving that a \sygus
problem is unrealizable over a finite set of examples as one of
solving a set of equations:
the solution yields an overapproximation of the set of possible outputs
that any term in the search space can produce on the given examples.
	If none of the possible outputs agrees with all of the examples,
our technique has proven that the given \sygus problem is unrealizable.
We then present an algorithm for exactly solving the set of equations that result from \sygus problems over 
linear integer arithmetic (LIA) and  LIA with conditionals (CLIA), thereby
showing that LIA and CLIA \sygus problems over finitely many examples are decidable.
We implement the proposed technique and algorithms in a tool called \name.
\name can prove unrealizability for \solved/\numBenchmarks  existing \sygus benchmarks,
with running times comparable to those of the state-of-the-art tool \nope. Moreover, \name can solve \naymore benchmarks that
\nope cannot solve.
\end{abstract}

%% 2012 ACM Computing Classification System (CSS) concepts
%% Generate at 'http://dl.acm.org/ccs/ccs.cfm'.
\begin{CCSXML}
	<ccs2012>
	<concept>
	<concept_id>10011007.10011074.10011092.10011782</concept_id>
	<concept_desc>Software and its engineering~Automatic programming</concept_desc>
	<concept_significance>500</concept_significance>
	</concept>
	<concept>
	<concept_id>10003752.10003790.10011119</concept_id>
	<concept_desc>Theory of computation~Abstraction</concept_desc>
	<concept_significance>500</concept_significance>
	</concept>
	</ccs2012>
\end{CCSXML}

\ccsdesc[500]{Software and its engineering~Automatic programming}
\ccsdesc[500]{Theory of computation~Abstraction}
%% End of generated code

%% Keywords
%% comma separated list
\keywords{Program Synthesis, Unrealizability, Grammar Flow Analysis,  Syntax-Guided Synthesis (\sygus)}  %% \keywords are mandatory in final camera-ready submission

%% \maketitle
%% Note: \maketitle command must come after title commands, author
%% commands, abstract environment, Computing Classification System
%% environment and commands, and keywords command.
\maketitle

% -*- TeX-master: t; TeX-PDF-mode: t -*-

\section{Introduction}
\label{Se:Introduction}

The goal of program synthesis is to find a program in some search
space that meets a specification---e.g., satisfies a set of examples or a
logical formula.
Recently, a large family of synthesis problems has been unified into a framework called
\emph{syntax-guided synthesis} (\sygus).
A \sygus problem is specified by a regular-tree grammar that describes
the search space of programs,
and a logical formula that constitutes the behavioral specification.
Many synthesizers support a specific format for \sygus problems~\cite{sygus},
and compete in annual synthesis competitions~\cite{alur2016sygus}.
These solvers are now quite mature and
are finding a wealth of applications~\cite{Hasan16,pldi17inversion}.

While existing \sygus synthesizers are good at finding a solution 
when one exists, there has been only a small amount of work on methods to prove that
a given \sygus  problem is \emph{unrealizable}---i.e., the problem does not admit a solution.
The problem of proving unrealizability arises in applications such as pruning 
infeasible paths in symbolic-execution engines~\cite{Mechtaev18}
and computing syntactically optimal solutions to \sygus problems~\cite{HuD18}. 
However, proving that a \sygus problem is unrealizable is particularly hard and, in general,
undecidable~\cite{CaulfieldRST15}. 
When a \sygus problem is realizable, any search technique that
	systematically explores the infinite search space of possible programs
	will eventually identify a solution to the synthesis problem. 
	In contrast, proving that a problem is unrealizable requires showing
	that \emph{every} program in the \emph{infinite} search space
	\emph{fails to
		satisfy} the specification.

Although we cannot hope to have a complete algorithm for establishing unrealizability, 
the goal of this paper is to develop a framework for solving 
the kinds of problems that appear in practice.
Our framework can be used in tandem with existing synthesizers that use the 
\emph{counterexample-guided inductive synthesis} (CEGIS) approach, in which the synthesizer iteratively builds 
a set of input examples  and finds programs consistent with the examples.
%If the problem being tackled by the synthesizer is unrealizable, the
%CEGIS algorithm might eventually find a set of examples for which no
%solution exists in the input grammar, at which point most synthesizers
%would get stuck in an infinite search over the input grammar.
%At each step, a tool implementing
%our framework would be executed in parallel to attempt to prove that no solution exists with the current set of examples. If the tool succeeds, the answer ``unrealizable'' is returned.

Our approach builds on the observation that unrealizability of a
\sygus problem $\sy$ can be proved by showing, for some finite
set of examples $\cex$, that $\sy^\cex$---the same problem with the
weaker specification of merely satisfying the examples in
$\cex$---is unrealizable \cite{cav19}.
We combine this observation with techniques from the abstract-interpretation literature to show that determining realizability of a linear integer arithmetic (LIA) \sygus problem over a finite set of examples  is actually \emph{decidable}. 
Our work gives a decision procedure to show unrealizability for a $\sy^E$ instance, whereas the prior work by \citet{cav19} reduced the problem to a program-reachability problem. In their approach, if an assertion inside a constructed program is shown to be valid, then the original problem is unrealizable. The issue with prior work is that the resulting reachability problem is passed to an incomplete solver that may not terminate or may only return unknown.

Even though we consider a finite set of examples, showing
realizability is non-trivial because the grammar
can still generate an infinite set of terms.
The main idea of this paper is to use an abstract domain to
overapproximate the possibly infinite set of outputs that
the terms derivable from each non-terminal of the grammar
of $\sy^\cex$ can produce on examples $\cex$.
The overapproximation is formalized using \emph{grammar-flow-analysis} (GFA),
a method that extends dataflow analysis to grammars \cite{SAGA:MW91}.
We define a GFA problem whose solution associates an overapproximating
abstract-domain value with each non-terminal of the \sygus grammar.
We then use the notion of \emph{symbolic concretization}
\cite{VMCAI:RSY04} to represent the abstract values as logical
formulas, which get combined with the \sygus specification to
produce an SMT query whose result can imply that the original problem
is unrealizable.

Using this framework, a variety of abstract domains can be used to
show unrealizability for arbitrary \sygus problems.
However, we also give a particular instantiation of the framework to
obtain a \emph{decision procedure for (un)realizability of LIA \sygus problems
over a finite set of examples}.
The key to this reduction is the use of the abstract domain of
\emph{semi-linear sets}.
We show that the GFA problem over semi-linear sets can be solved to
yield a semi-linear set that \emph{exactly} captures the set of
 possible outputs of the \sygus grammar.
	The problem $\sy^E$ is unrealizable if and only if the semi-linear set
	for the start non-terminal of the  grammar contains no value that satisfies the specification.
We extend this result to \sygus problems whose grammar contains LIA
terms and conditionals (CLIA).
%% Our decidability result carries over to this setting as well.

Our work makes the following three contributions:

\noindent \textbf{(1)} We reduce the problem of proving unrealizability
    of a \sygus problem, where the specification is given by examples,
    to the problem of solving a set of equations
    in an abstract domain (\sectref{overview}). 
    The correctness of our reduction is based on the
    framework of grammar-flow analysis (\sectref{background} and
    \sectref{gfa}).

\noindent \textbf{(2)} We show that the equations resulting from our reduction can be solved exactly for
    \sygus problems in which the grammars only generate terms in LIA
    (\sectref{ProvingUnrealizabilityOfSyGuSProblemsInLIA}) 
    and CLIA (\sectref{SolvingSyGuSProblemsInCLIA}), therefore yielding
    the first \emph{decision procedures} for LIA and CLIA 
    \sygus problems over a finite set of examples.

\noindent \textbf{(3)} We implement our technique in a tool, \name  (\sectref{implementation}).
    \name can prove unrealizability for \numSolved/\numBenchmarks benchamrks that were used to evaluate the state-of-the-art tool
    \nope. In particular, \name can solve \naymore benchmarks that \nope could not solve  (\sectref{evaluation}).

\sectref{RelatedWork} discusses related work.
\iffull
\else
Proofs and additional details can be found in the supplementary material.
\fi

%%% Local Variables: 
%%% mode: latex
%%% TeX-master: "main.tex"
%%% End: 

% -*- TeX-master: t; TeX-PDF-mode: t -*-

\section{Illustrative Examples}
\label{Se:overview}
\mypar{\sygus problems in LIA.} Consider the \sygus problem in which the goal is to create a term $e_f$ whose meaning is 
$e_f(x):=2x+2$, but where $e_f$ is in the language of the following regular tree grammar $G_1$:\footnote{
\label{Footnote:ExpandedGrammar}
For readability, we allow grammars to contain $n$-ary Plus symbols and trees. 
In the next sections, we will
write the grammar $G_1$ as follows:
$\arraycolsep=1.4pt
\begin{array}{rclrcl}
\hspace{5mm}\Start	 &::=&  \textrm{Plus}(\SOne, \Start) \mid \textrm{Num}(0) & \hspace{5mm}
\SOne	 &::=&  \textrm{Plus}(\STwo, \textrm{Var}(x))\\
\STwo	 &::=& \textrm{Plus}(\SThree, \textrm{Var}(x)) &
\SThree &::=& \textrm{Var}(x).
\end{array}$
}
\begin{align}
\Start	::= \textrm{Plus}(\textrm{Var}(x), \textrm{Var}(x), \textrm{Var}(x), \Start) \mid \textrm{Num}(0) \label{Eq:grammar1}	
\end{align}
This problem is unrealizable because every term in the grammar $G_1$ is of the form $3kx$ (with $k\geq 0$).
%In the following, we show how our technique can prove this fact automatically.

A typical synthesizer tries to solve this problem using  a counterexample-guided inductive synthesis (CEGIS)
strategy
that searches for a program consistent with a finite set of examples $E$.
Here, let's assume that the initial input example in $E$ is $i_1$, which has $x$ set to $1$---i.e $i_1(x)=1$.
For this example, the input $i_1$ corresponds to the output $o_1=4$. 

In this particular case, there exists no term in the grammar $G_1$ that is  consistent with the example $i_1$.
To prove that this grammar does not contain a term that is consistent
with the specification on the example $i_1$, we compute for each
nonterminal $A$ a value $n_{1,E}(A)$ \footnote{
This section uses a simplified notation for readability. In \sectref{gfa} 
the term $n_{1,E}(A)$ is written $n_{\calG_{1E}}$ where $\calG_1$ is used to denote
a GFA problem.} that describes the set of values any term
derived from $A$ can produce when evaluated on $i_1$---i.e.,
$\gamma({n_{1,E}(A)}) \supseteq \{\sem{e}(i_1)\mid e\in \lang_{G_1}(A)\}$,
where, as usual in abstract interpretation, $\gamma$ denotes the concretization function.
As we  show in \sectref{gfa},  for $n_{1,E}(A)$ to be an overapproximation of the set of output values that
any term derived from $A$ can produce for the current set of examples $E$, it should satisfy the following equation:
\begin{equation}
\begin{split}
n_{1,E}(\Start)	& =	 
\semgamma{\textrm{Plus}}_E(\semgamma{\textrm{Var}(x)}_E,\semgamma{\textrm{Var}(x)}_E,\semgamma{\textrm{Var}(x)}_E,\\
& n_{1,E}(\Start)) 
\oplus \semgamma{\textrm{Num}(0)}_E.\label{Eq:gfa1}	
\end{split}
\end{equation}
For every term $e$, the notation $\semgamma{e}_E$ denotes an abstract semantics of $e$---i.e.,
$\semgamma{e}_E$ overapproximates the set of values $e$ can produce when evaluated on the examples in $E$---and $\oplus$ denotes the \emph{join} operator, which overapproximates $\cup$.

In this example, we represent each $n_{1,E}(A)$ using a
\emph{semi-linear set}---i.e., a set of terms $\{l_1, \ldots, l_n\}$,
where each $l_i$ is a term of the form $c+\lambda_1c_1+\cdots+\lambda_kc_k$
(called a \emph{linear set}), the values $\lambda_i \in \nat$ are parameters,
and the values $c_j \in \integ$ are fixed coefficients.
We then replace each $\semgamma{e}_E$ with a corresponding semi-linear-set interpretation. 
For example, $\semgamma{\textrm{Var}(x)}_E$ is the vector of inputs $E$
projected onto the $x$ coordinate---i.e., $\semgamma{\textrm{Var}(x)}_E= \{i_1(x)\} = \{1\}$.
We rewrite $\semgamma{\textrm{Plus}}_E$ as $\otimes$, with $x\otimes y$ being the semi-linear set representing $\{a+b\mid a\in x, b\in y\}$

We rewrite \eqref{gfa1} to use semi-linear sets:
\begin{equation}
  \label{Eq:gfa1subst}
  n_{1,E}(\Start) = \left( \{1\}\otimes \{1\} \otimes \{1\} \otimes n_{1,E}(\Start) \right) \oplus \{0\},
\end{equation}
where $x \oplus y$ is the semi-linear set representing $\{a\mid a\in x \vee a\in y\}$. 
These operations can be performed precisely.

In this example, an \emph{exact} solution to this set of equations is the semi-linear set
$n_{1,E}(\Start)=\{0+\lambda 3\}$, which describes the set of all possible  values 
produced by any term in  grammar $G_1$ for the set of examples 
$E=\B{i_1}$. In particular, such a solution can be computed automatically~\cite{EsparzaKL10}.\iffull{\footnote{
  Some intuition can be gained by thinking of \eqref{gfa1subst} as
  being similar to a context-free grammar of the form $X:=aX\mid b$,
  which has the regular-language solution $a^*b$.
  Similarly, \eqref{gfa1subst} has the solution $\{3\}^{\ostar} \otimes \{0\}$. 
  Here $\ostar$ is the iterated addition of the (trivial) semi-linear set $\{3\}$,
  so the overall solution is $\{0+3\lambda \mid \lambda\in\nat\}$.
}\fi
This \sygus problem does \emph{not} have a solution, because none of
the values in $n_{1,E}(\Start)$ meets the specification on the given input example, i.e., the following formula is not 
satisfiable:
\begin{align}
\exists \lambda. [i_1=1\wedge o_1=0+\lambda3 \wedge \lambda\geq 0] \wedge o_1=2 i_1+2. \label{Eq:unsat1}
\end{align}

\mypar{\sygus problems in CLIA.} For grammars with a more complex background theory, such as CLIA (LIA with conditionals), it may
be more complicated to compute an overapproximation of the possible outputs of any term in the grammar. 
For example, consider the \sygus problem where once again the goal is to synthesize a term whose meaning is $e_{f}(x):=2x + 2$, but now in the more expressive CLIA grammar $G_2$:
\begin{equation}
\label{Eq:grammar2}
{\small
\begin{array}{@{\hspace{0ex}}r@{~}r@{~~}l@{\hspace{0ex}}}
  \Start	& ::=	& \textrm{IfThenElse}(BExp, Exp3, \Start) \mid Exp2 \mid Exp3\\			
  BExp		& ::=	& \textrm{LessThan}(\textrm{Var}(x), \textrm{Num}(2))  \\
                & \mid  & \textrm{LessThan}(\textrm{Num}(0), \Start) \mid \textrm{And}(BExp,BExp) \\
  Exp2		& ::=	& \textrm{Plus}(\textrm{Var}(x),\textrm{Var}(x), Exp2) \mid \textrm{Num}(0)\\			
  Exp3		& ::=	& \textrm{Plus}(\textrm{Var}(x),\textrm{Var}(x),\textrm{Var}(x), Exp3) \mid \textrm{Num}(0)
\end{array}
}
\end{equation}
Consider again the input example $i_1{=}1$ with
output $o_1{=}4$. 
The
term $\textrm{Plus}(\textrm{Var}(x),\textrm{Var}(x), \textrm{Plus}(\textrm{Var}(x),\textrm{Var}(x), \textrm{Num}(0)))$
in this grammar is correct on the input $i_1$.
A \sygus solver that enumerates all terms in the grammar will find this term, test it on the given specification, see that it is not correct on all inputs, and produce a counterexample.
In this case, suppose that the counterexample is $i_2$ where $i_2(x){=}2$ with the corresponding output $o_2{=}6$.
There is no term in $G_2$ that is consistent with both of these examples, and 
we will prove this fact like we did before, that is, by solving the following set of equations:\footnote{
  Note that the $\oplus$ symbol is overloaded. On the right-hand side of
  $n_{2,E}(\textit{BExp})$, $\oplus$ is an operation on an
  abstract Boolean value, whereas the $\oplus$ on the right-hand-side of
  the other equations is an operation on semi-linear sets. Both
  operations denote set union, and are handled in a uniform way by
  operating over a multi-sorted domain of Booleans and semi-linear sets.
}
\begin{equation}
\label{Eq:gfa2}
{\small
\begin{array}{@{\hspace{0ex}}r@{~}r@{~~}l@{\hspace{0ex}}}
  {n_{2,E}(\Start)}	& = & \semgamma{\textrm{IfThenElse}}_E(n_{2,E}(BExp), n_{2,E}(Exp3), \\
                                &   & n_{2,E}(\Start)) \oplus n_{2,E}(Exp2) \oplus n_{2,E}(Exp3) \\
  n_{2,E}(BExp)         & = & \semgamma{\textrm{LessThan}}_E(\semgamma{\textrm{Var}(x)}_E, \semgamma{\textrm{Num}(2)}_E) \\
                           & \oplus & \semgamma{\textrm{LessThan}}_E(\semgamma{\textrm{Num}(0)}_E,n_{2,E}(\Start)) \\
                           & \oplus & \semgamma{\textrm{And}}_E(n_{2,E}(BExp),n_{2,E}(BExp)) \\
  n_{2,E}(Exp2)	        & = & \semgamma{\textrm{Plus}}_E(\semgamma{\textrm{Var}(x)}_E,\semgamma{\textrm{Var}(x)}_E, n_{2,E}(Exp2)) \\
                           & \oplus & \semgamma{\textrm{Num}(0)}_E \\
  n_{2,E}(Exp3)         & = & \semgamma{\textrm{Plus}}_E(\semgamma{\textrm{Var}(x)}_E,\semgamma{\textrm{Var}(x)}_E,\semgamma{\textrm{Var}(x)}_E, \\
                                &   & n_{2,E}(Exp3)) \oplus \semgamma{\textrm{Num}(0)}_E	
\end{array}
}
\end{equation}
Because we want to track the possible values each term can have for \emph{both} examples,
we need a domain that summarizes vectors of values. 
Luckily, semi-linear sets can easily be extended to vectors---i.e.,
 each $l_i$ in a semi-linear set $sl$ is a linear set of the form $\{\vec{v}_0+\lambda_1\vec{v}_1+\cdots+\lambda_k\vec{v}_k\mid \lambda_i\in\nat\}$ (with 
   $\vec{v}_j{\in} \integ^k$). 
Second, because some nonterminals are Boolean-valued and some are integer-valued, we need
different representations of the possible outputs of each nonterminal. 
We will use semi-linear sets for 
$n_{2,E}(\Start)$, $n_{2,E}(Exp2)$ and $n_{2,E}(Exp3)$, and 
a set of Boolean vectors for $n_{2,E}(BExp)$---e.g., $n_{2,E}(BExp)$
could be a set $\{(\tr,\fa),  (\tr,\tr)\}$, which denotes that  a Boolean expression generated by $BExp$  can
be true for $i_1$ and false for $i_2$, or true for both. 
\iffull
We can now instantiate all constant terminals and variable terminals with their abstractions, and rewrite the 
equations as follows:
\begin{equation}
\label{Eq:gfa2subst}
{\small
\begin{array}{@{\hspace{0ex}}r@{~}r@{~~}l@{\hspace{0ex}}}
  {n_{2,E}(\Start)}  & = & \semgamma{\textrm{IfThenElse}}_E(n_{2,E}(BExp), n_{2,E}(Exp3), \\
			  &   & n_{2,E}(\Start)) \oplus n_{2,E}(Exp2) \oplus n_{2,E}(Exp3) \\
  n_{2,E}(BExp)      & = & \{(\tr, \fa)\} \oplus \semgamma{\textrm{LessThan}}_E(\{(0,0)\},n_{2,E}(\Start)) \\
                     & \oplus & \semgamma{\textrm{And}}_E(n_{2,E}(BExp),n_{2,E}(BExp)) \\
  n_{2,E}(Exp2)      & = & [\{(1,2)\}\otimes\{(1,2)\}\otimes n_{2,E}(Exp2)] \oplus \{(0,0)\} \\
  n_{2,E}(Exp3)      & = & [\{(1,2)\}\otimes\{(1,2)\}\otimes\{(1,2)\}\otimes n_{2,E}(Exp3)] \\
                     & \oplus & \{(0,0)\}		
\end{array}
}
\end{equation}

\else
We can now instantiate all constant terminals and variable terminals with their abstraction, e.g., 
$\semgamma{\textrm{Var(x)}}_E$ with $\{(1,2)\}$ and $\semgamma{\textrm{Num}(0)}_E	$ with $\{(0,0)\}$.
\fi
We then start solving part of our equations by observing that $Exp2$ and $Exp3$ are only recursive in themselves. Therefore, we can compute
their summaries independently,
obtaining
$n_{2,E}(Exp2) = \{(0,0) + \lambda (2,4)\}, n_{2,E}(Exp3) =  \{(0,0)+\lambda (3,6)\}$.
We can now replace all instances of $n_{2,E}(Exp2)$ and $n_{2,E}(Exp3)$, and obtain the following set of equations:
\begin{equation}
\label{Eq:mutualrecursion}
{\small
\begin{array}{@{\hspace{0ex}}r@{~}r@{~~}l@{\hspace{0ex}}}
  {n_{2,E}(\Start)}    & = & \semgamma{\textrm{IfThenElse}}_E(n_{2,E}(BExp), \{(0,0)+\lambda (3,6)\}, \\
                            &   & n_{2,E}(\Start)) \oplus  \{(0,0)+\lambda (2,4)\} \\
                       & \oplus & \{(0,0)+\lambda (3,6)\} \\
  n_{2,E}(BExp)        & = & \{(\tr, \fa)\} \oplus \semgamma{\textrm{LessThan}}_E(\{(0,0)\},n_{2,E}(\Start)) \\
                       & \oplus & \semgamma{\textrm{And}}_E(n_{2,E}(BExp),n_{2,E}(BExp))
\end{array}
}
\end{equation}
We now have to face the problem of solving equations over  $n_{2,E}(BExp)$ and 
$n_{2,E}(\Start)$, which  represent different types of values and are mutually recursive.
Because  the domain of $n_{2,E}(BExp)$ is finite (it has at most $2^{|E|}$ elements),
we can solve the equations iteratively until we reach a fixed point for both variables.
In particular, we initialize all variables to the empty set and evaluate right-hand sides, so $n_{2,E}^{\texttt{0}}(BExp)=\{(\tr, \fa)\}$ (the superscript  denotes the iteration the algorithm is in).
We can replace $n_{2,E}(BExp)$ with the value of
$n_{2,E}^{\texttt{0}}(BExp)$ in the equation for  $n_{2,E}^{\texttt{1}}(\Start)$ as follows:
\begin{equation}
\label{Eq:iteonlyexpression}
{\small
\begin{array}{@{\hspace{0ex}}r@{~}r@{~~}l@{\hspace{0ex}}}
  {n_{2,E}^{\texttt{1}}(\Start)} & = & \semgamma{\textrm{IfThenElse}}_E(\{(\tr, \fa)\}, \{(0,0)+\lambda (3,6)\}, \\
                              &   & n_{2,E}^{\texttt{1}}(\Start))  \oplus  \{(0,0)+\lambda (2,4)\} \\
                         & \oplus & \{(0,0)+\lambda (3,6)\}	
\end{array}
}
\end{equation}
At this point, we face a new problem: we need to express the abstract semantics of IfThenElse using
the semi-linear set operators $\oplus$ and $\otimes$. In particular, 
we would like to produce a semi-linear set 
in which, for each vector,
some components come from the semi-linear set for the then-branch (i.e., values corresponding to inputs for which
the IfThenElse guard was true),
and some components come from the semi-linear set for the else-branch (i.e., values corresponding to inputs for which
the IfThenElse guard was false).
We overcome this problem by rewriting the above equations 
as follows:
\begin{equation}
\label{Eq:itesplitting}
{\small
\begin{array}{@{\hspace{0ex}}r@{~}r@{~~}l@{\hspace{0ex}}}
  {n_{2,E}^{\texttt{1}}(\Start^{(\tr,\tr)})}      & = & \{(0,0)+\lambda (3,0)\} \otimes n_{2,E}^{\texttt{1}}(\Start^{(\fa,\tr)}) \\
                                              & \oplus & \{(0,0)+\lambda (2,4)\} \oplus \{(0,0)+\lambda (3,6)\}\\			
  {n_{2,E}^{\texttt{1}}(\Start^{(\fa,\tr)})}      & = & \{(0,0)+\lambda (0,0)\} \otimes n_{2,E}^{\texttt{1}}(\Start^{(\fa,\tr)}) \\
                                              & \oplus & \{(0,0)+\lambda (0,4)\} \oplus \{(0,0)+\lambda (0,6)\}
\end{array}
}
\end{equation}
Intuitively, ${n_{2,E}^{\texttt{1}}(\Start^{(\fa,\tr)})}$  is the abstraction obtained by only executing the expressions generated by $\Start$ on the
second example and leaving the output of the first example as 0 to represent the fact that only 
the example $i_2$ followed the else branch of the IfThenElse statement.
Similarly, the semi-linear set $\{(0,0)+\lambda (3,0)\}$ zeroes out the second component of the semi-linear set appearing
in the then branch.
The value of $n_{2,E}^{\texttt{1}}(\Start^{(\tr,\tr)})$ (which is also the value of $n_{2,E}^{\texttt{1}}(\Start)$), is then computed by summing ($\otimes$) together the then and else values.
This set of equations is now in the form that we can solve automatically---i.e., it only involves the
 operations $\oplus$ and $\otimes$  over semi-linear sets---and thus we can compute  the value of  $n_{2,E}^{\texttt{1}}(\Start)$.
We  now plug that value into the equation for $BExp$  and compute the value of $n_{2,E}^{\texttt{1}}(BExp)$,
\begin{equation}
\label{Eq:bexpronly}
{\small
\begin{array}{@{\hspace{0ex}}r@{~}r@{~~}l@{\hspace{0ex}}}
  n_{2,E}^{\texttt{1}}(BExp)     & = & \{(\tr, \fa)\} \oplus \semgamma{\textrm{LessThan}}_E(\{(0,0)\},n_{2,E}^{\texttt{1}}(\Start)) \\
                         & \oplus & \semgamma{\textrm{And}}_E(n_{2,E}^{\texttt{1}}(BExp),n_{2,E}^{\texttt{1}}(BExp))		
\end{array}
}
\end{equation}
Because $n_{2,E}^{\texttt{1}}(BExp)$  has a finite domain, equations over such a domain can be solved iteratively, 
in this case yielding the fixed-point value 
$n_{2,E}^{\texttt{1}}(BExp)=\{(\tr, \fa),(\tr,\tr),(\fa,\fa)\}$.
We  now plug this solution into the equation for $\Start$ and compute 
the value of $n_{2,E}^{\texttt{2}}(\Start)$
similarly to how we computed that of $n_{2,E}^{\texttt{1}}(\Start)$.
We  then use $n_{2,E}^{\texttt{2}}(\Start)$ to compute $n_{2,E}^{\texttt{2}}(BExp)$
and discover that $n_{2,E}^{\texttt{2}}(BExp)=n_{2,E}^{\texttt{2}}(BExp)$.
	Because we have reached a fixed point, we have found the set of possible values the grammar can output on our set of examples, i.e., the abstraction $n_{2,E}^{\texttt{1}}(Start)$ captures all possible values the grammar $G_2$ can output on $E$.
	By plugging such values in the original formula similarly to what
	we did in \eqref{unsat1}  we get that no output set satisfies the formula on the given input examples,
	and therefore this \sygus problem is unrealizable.

%%% Local Variables: 
%%% mode: latex
%%% TeX-master: "main.tex"
%%% End: 

% -*- TeX-master: t; TeX-PDF-mode: t -*-

\section{Background}
\label{Se:background}
In this section, we recall the definition of syntax-guided synthesis
over a finite set of examples.

\subsection{Trees and Tree Grammars.}
A \emph{ranked alphabet} is a tuple $(\Sigma,rk_\Sigma)$ where
$\Sigma$ is a finite set of symbols and $rk_\Sigma:\Sigma\to\nat$
associates a rank to each symbol.
For every $m\ge 0$, the set of all symbols in $\Sigma$ with rank $m$
is denoted by $\Sigma^{(m)}$.
In our examples, a ranked alphabet is specified by showing the set
$\Sigma$ and attaching the respective rank to every symbol as a
superscript---e.g., $\Sigma=\{Plus^{(2)},Var(x)^{(0)}\}$.
(For brevity, the superscript is sometimes omitted.)
We use $T_\Sigma$ to denote the set of all (ranked) trees over
$\Sigma$---i.e., $T_\Sigma$ is the smallest set such that
\rone $\Sigma^{(0)} \subseteq T_\Sigma$,
\rtwo if $\sigma^{(k)} \in \Sigma^{(k)}$ and $t_1,\ldots,t_k\in T_\Sigma$, then
$\sigma^{(k)}(t_1,\cdots,t_k)\in T_\Sigma$.
In what follows, we assume a fixed ranked alphabet $(\Sigma,rk_\Sigma)$.

\begin{definition}[Regular-Tree Grammar]
A {\emph{regular tree grammar}} (RTG) is a tuple $G=(N,\Sigma,S,\delta)$,
where $N$ is a finite set of nonterminal symbols of arity 0;
$\Sigma$ is a ranked alphabet;
$S\in N$ is an initial nonterminal;
and $\delta$ is a finite set of productions of the form $A_0 \to \sigma^{(i)}(A_1, \ldots, A_i)$,
where for $1 \leq j \leq i$, each $A_j \in N$ is a nonterminal.
\end{definition}

Given a tree $t\in T_{\Sigma\cup N}$,
applying a production $r = A\to\beta$ to $t$ produces the tree $t'$ resulting from replacing
the left-most occurrence of $A$ in $t$ with the right-hand side $\beta$.
A tree $t\in T_\Sigma$ is generated by the grammar $G$---denoted by $t\in
L(G)$---iff it can be obtained by applying a sequence of
productions $r_1\cdots r_n$ to the tree whose root is the initial
nonterminal $S$. $\delta_A\subseteq\delta$ denotes the set of productions associated with nonterminal $A$, and $\Sigma_A:=\{\sigma^{(i)}\mid A\to\sigma^{(i)}(A_1,...,A_i)\in\delta_A\}$.

\subsection{Syntax-Guided Synthesis.}
A \sygus problem is specified with respect to a background theory $T$---e.g., linear arithmetic---and
the goal  is to synthesize a function $f$ that satisfies two constraints provided by the user. 
The first constraint, $\psi(f(\bar{x}),\bar{x})$, describes a \emph{semantic property} that $f$ should satisfy. 
The second constraint limits the \emph{search space} $S$ of $f$, and is
given as a set of terms specified by an RTG $G$
that defines a subset of all terms in $T$. 

\begin{definition}[\sygus]
A \sygus problem over a background theory $T$ is a pair 
$\sy =(\psi(f,\bar{x}), G)$, where
$G$ is a regular tree grammar that only contains terms in $T$---i.e., $L(G)\subseteq T$---and
$\psi(f,\bar{x})$ is a Boolean formula constraining the semantic behavior of the synthesized program $f$.\footnote{In this paper, we focus on \textit{single-invocation} \sygus problems for which the formula $\psi$
only contains instances of the function $f$ that are called on the input  $\bar{x}$.
We write $\psi(f,\bar{x})$ instead of $\psi(f(\bar{x}),\bar{x})$ for brevity.}

A \sygus problem is \textbf{\emph{realizable}} if there exists an expression $e\in L(G)$ such that
$\forall \bar{x}. \psi(\sem{e},\bar{x})$ is true. Otherwise we say that the problem is \textbf{\emph{unrealizable}}.
\end{definition}

\begin{theorem}[Undecidability~\cite{CaulfieldRST15}]
Given a \sygus problem $\sy$, it is undecidable to check whether $\sy$ is realizable.
\end{theorem}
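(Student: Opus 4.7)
The plan is to prove undecidability by reduction from a known undecidable problem, most naturally the Post Correspondence Problem (PCP), which admits a very clean encoding into a \sygus instance.

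First I would fix a background theory $T$ rich enough to talk about equality of strings (or equivalently of tuples of integers encoding strings); for concreteness one can take the combined theory of strings and natural numbers that is already part of the \sygus standard. Given a PCP instance with pairs $(u_1, v_1), \ldots, (u_k, v_k)$ over some finite alphabet, I would construct a \sygus problem $\sy = (\psi, G)$ as follows. The grammar $G$ would contain a single recursive nonterminal $S$ with, for each index $i \in \{1,\ldots,k\}$, a production that attaches $u_i$ on the left and $v_i$ on the right of a recursive call to $S$, plus a base production representing the empty sequence. Each finite derivation thus corresponds to a PCP index sequence $i_1, \ldots, i_n$, and its semantics $\sem{t}$ is the pair $(u_{i_1}\cdots u_{i_n},\, v_{i_1}\cdots v_{i_n})$. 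The specification $\psi(f,\bar{x})$ would simply assert that the two components of $f$ are equal and non-empty.

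By construction, $\sy$ is realizable if and only if the PCP instance admits a solution; since PCP is undecidable, so is \sygus realizability. Note also that realizability is trivially recursively enumerable (enumerate all terms in $L(G)$ and check the specification), so the reduction actually pins down the problem as $\Sigma^0_1$-complete, which is strictly stronger than mere undecidability.

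The main obstacle I anticipate is the choice of background theory. The reduction above needs a theory that can both build string-like objects inside the grammar and express equality in $\psi$. If one wants undecidability to hold already over a weaker theory such as pure LIA, then strings must be G\"{o}del-coded as integers, and concatenation and equality must be reexpressed arithmetically; this complicates the encoding considerably (and is the core technical content of the cited work), but it does not alter the overall structure of the reduction. A secondary obstacle is ensuring that the grammar $G$ is a legitimate regular-tree grammar as defined in \sectref{background}, which is immediate for the sketch above but would need a little more care in the arithmetic-only encoding.
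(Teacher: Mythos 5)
The paper does not actually prove this theorem: it imports the result by citation from Caulfield et al., so there is no in-paper argument to compare against, and your proposal should be judged on its own merits. Your PCP reduction is a correct and standard route to the result for a background theory that can express string concatenation and equality, and it does establish the theorem as stated (some \sygus realizability questions are undecidable). The one step you wave off as ``immediate'' is the place where a naive encoding actually breaks: in a regular tree grammar as defined in \sectref{background}, a production whose right-hand side contains two occurrences of the nonterminal $S$ (say one feeding the first component of a pair and one feeding the second) lets the two occurrences derive \emph{independently}, so the two index sequences decouple and the problem degenerates to emptiness of the intersection of two regular languages, which is decidable. To keep the synchronization that PCP needs, $S$ must carry a pair sort and each tile must be packaged as a single unary symbol $g_i$ with $\sem{g_i}(a,b)=(u_i\cdot a,\ v_i\cdot b)$, so that every production $S \to g_i(S)$ has exactly one occurrence of $S$; such symbols are expressible as defined functions in the \sygus format, and with them the bijection between derivations and index sequences, and hence ``realizable iff the PCP instance is solvable,'' goes through. (A minor slip: prepend $v_i$ rather than append it, or the second component comes out with the tiles in reversed order.)

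Your surrounding observations are sound: the empty derivation is excluded by the non-emptiness conjunct of $\psi$; realizability is recursively enumerable whenever the background theory is decidable, so your reduction pins realizability at $\Sigma^0_1$-complete; and the real technical content of the cited work is pushing undecidability down to weaker theories such as arithmetic, where the string machinery must be arithmetized. That last point also explains why the theorem is not in tension with this paper's own decidability results, which apply only to LIA/CLIA problems restricted to a finite set of examples rather than to the universally quantified specification.
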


Many \sygus solvers do not solve the problem of finding a term
that satisfies the specification on \emph{all} inputs.
Instead, they look for an expression that
satisfies the specification on a \emph{finite} example set $\cex$.
If such a term is found, it is then checked if it can be generalized to all inputs.
We take a similar approach to show unrealizability. 

\begin{definition}
Given a \sygus problem $\sy=(\psi(f,\bar{x}),G)$ and a finite set of inputs $\cex=\B{i_1,\ldots,i_n}$, let $\sy^\cex:=(\psi^\cex(f), G)$ denote the problem of finding a term $e\in L(G)$ such that $\sem{e}$ is only required to be correct on the examples in $\cex$.
Let $\sem{e}_\cex$ denote the vector of outputs $\B{\sem{e}(i_1),\ldots,\sem{e}(i_n)}$
($= \B{o_1,\ldots, o_n}$) produced by $e$ on $E$. A $\sy^\cex$ problem is \textbf{\emph{realizable}} if $\psi^\cex(\sem{e}_E) \eqdef \bigwedge_{i_j\in E} \psi(\sem{e}(i_j), i_j)$ holds, and \textbf{\emph{unrealizable}} otherwise.
\end{definition}

\begin{lemma}[\cite{cav19}]
\label{lem:soundness-cegis}
	If $\sy^\cex$ is unrealizable then $\sy$ is unrealizable.
\end{lemma}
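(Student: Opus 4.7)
The plan is to prove the contrapositive: namely, if $\sy$ is realizable then $\sy^\cex$ is realizable. This is the natural direction because a witness for $\sy$ should immediately serve as a witness for any weakening of $\sy$, and $\sy^\cex$ is precisely such a weakening (it only asks for correctness on the finitely many inputs in $\cex$, rather than on all $\bar{x}$).

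Concretely, I would proceed in three short steps. First, unfold the definition of realizability for $\sy$: assume there exists $e \in L(G)$ such that $\forall \bar{x}.\, \psi(\sem{e}, \bar{x})$ holds. Second, instantiate this universally quantified statement at each input $i_j \in \cex = \B{i_1,\ldots,i_n}$, obtaining $\psi(\sem{e}(i_j), i_j)$ for every $j$. Third, take the conjunction over $j=1,\ldots,n$ to get $\bigwedge_{i_j \in \cex} \psi(\sem{e}(i_j), i_j)$, which is exactly $\psi^\cex(\sem{e}_\cex)$ by definition. Since the same grammar $G$ governs both problems, the term $e$ lies in $L(G)$ and witnesses realizability of $\sy^\cex$. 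Taking the contrapositive yields the desired implication.

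The main ``obstacle'' is really just bookkeeping: one has to check that the witness for $\sy$ is a legal candidate for $\sy^\cex$. This is immediate because $\sy$ and $\sy^\cex$ share the grammar $G$ and differ only in the specification, with $\psi^\cex$ being a logical consequence of $\forall \bar{x}.\,\psi(\sem{e},\bar{x})$ once $\bar{x}$ is restricted to range over $\cex$. No induction, no reasoning about the structure of $G$, and no appeal to the theory $T$ is required; the lemma is a purely definitional consequence of the fact that example-based specifications are weaker than universal specifications.
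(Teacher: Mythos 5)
Your proof is correct and is exactly the standard argument: the paper itself does not reproduce a proof (it cites \cite{cav19}), but the intended justification is precisely your contrapositive — a witness $e \in L(G)$ satisfying $\forall \bar{x}.\,\psi(\sem{e},\bar{x})$ instantiates at each $i_j \in \cex$ to give $\bigwedge_{i_j\in\cex}\psi(\sem{e}(i_j),i_j)$, i.e., $\psi^\cex(\sem{e}_\cex)$, so $\sy^\cex$ is realizable. No gaps; the bookkeeping about the shared grammar $G$ is handled correctly.
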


\begin{example}
\label{Exa:LIA}
	The regular tree grammar of all \textit{linear integer arithmetic} (LIA) terms is
	\[
	T_{\LIA} ::= \textrm{Plus}(T_{\LIA},T_{\LIA}) \mid \textrm{Minus}(T_{\LIA},T_{\LIA}) \mid \textrm{Num}(c) \mid \textrm{Var}(x)
	\]
	where $c\in \integ$, and $x\in \var$ is an input variable to the function being synthesized. 	
	The semantics of these productions is as expected, and is extended to terms in the usual way.
	
	In the case of a $\sy^\cex$ instance, we consider the restricted semantics  of LIA with respect to a set of examples $E = \B{i_1, \ldots, i_n}$, given by a function
	$\sem{\cdot}_E \colon T_{LIA} \to \mathbb{Z}^n$.
        $\sem{\cdot}_E$ maps an LIA term to the corresponding output vector produced by evaluating the term with respect to
	all of the examples in $E$.
        Let $\mu_E \colon \var \to \mathbb{Z}^{n}$ be the function that projects the inputs onto the $x$ coordinate---i.e., $\mu_E(x) = \B{i_1(x), \ldots, i_n(x)}$.
	The semantics of the LIA operators with respect to an example set $E$ is then defined as follows:
	\[
	\begin{array}{r@{\hspace{1.0ex}}c@{\hspace{1.0ex}}l@{\hspace{5.0ex}}r@{\hspace{1.0ex}}c@{\hspace{1.0ex}}l}
	\sem{\textrm{Plus}}_E(\vec{v}_1,\vec{v}_2)  & := & \vec{v}_1 + \vec{v}_2   &   \sem{\textrm{Num}(c)}_E & := & \B{c,...,c} \\
	\sem{\textrm{Minus}}_E(\vec{v}_1,\vec{v}_2) & := & \vec{v}_1 - \vec{v}_2   &   \sem{\textrm{Var}(x)}_E & := & \mu_E(x)
	\end{array}
	\]
	where $+$ (resp.\ $-$) denotes the component-wise addition (resp.\ subtraction) of two vectors.
	$\sem{\cdot}_E \colon T_{\LIA} \to \mathbb{Z}^n$ is extended to terms in the usual way.			
For brevity, we overload the term ``$\LIA$''
 to refer both to the \emph{logic} $\LIA$ and to \LIA \emph{grammars}---i.e.,
 grammars over the alphabet $\{ \textrm{Plus}, \textrm{Minus}, \textrm{Num}(c), \textrm{Var}(x) \}$.
\end{example}

In \sectref{algorithm}, we present an algorithm based on Counterexample-Guided Inductive Synthesis (CEGIS) 
to show unrealizability of a \sygus problem, $\sy$, by showing unrealizability of a $\sy^\cex$ problem. 
The idea is to check unrealizability of $\sy^\cex$ for some set $\cex$. 
If $\sy^\cex$ is unrealizable, the algorithm reports unrealizable,
otherwise it generates a new example, $i_{n+1}$, adds it to $\cex'=\cex\cup \{i_{n+1}\}$, and tries
to prove unrealizability of $\sy^{\cex'}$, and so on.
In \sectref{ProvingUnrealizabilityOfSyGuSProblemsInLIA}, we show that the unrealizability problem for a $\sy^\cex$ instance is  decidable for LIA grammars.
 However, we note that there are 
\sygus problems for which 
\cegis-style algorithms \emph{cannot} prove unrealizability~\cite{cav19}.
%---i.e.,
%will it always terminate if $sy$ is unrealizable?
%The following lemma shows that \cegis algorithm is not powerful enough.
\iffull
\begin{lemma}[Incompleteness]
\label{lem:incompleteness-cegis}
	There exists an unrealizable \sygus problem $\sy$ such that for every finite set of 
	examples $\cex$ the problem $\sy^\cex$ is realizable.
\end{lemma}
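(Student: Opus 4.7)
The plan is to exhibit a single concrete witness: a \sygus problem whose specification is a universal statement over an unbounded integer domain, paired with a grammar whose only terms are constants. Unrealizability will then follow because no constant can strictly dominate all of $\integ$, while realizability on any finite example set will follow by simply picking a constant larger than every value appearing in the example set.

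Concretely, I would fix the background theory to be \LIA{} and take the grammar
\[ G : \Start ::= \textrm{Num}(c), \quad c \in \integ, \]
so that $L(G)$ is exactly the collection of constant terms $\lambda x.\,c$ for $c \in \integ$. I would pair it with the specification
\[ \psi(f,x) \;:=\; f(x) > x. \]
The first step is to check that $\sy=(\psi,G)$ is unrealizable. Any $e \in L(G)$ has $\sem{e}(x) = c$ for some fixed $c \in \integ$; instantiating $\psi$ at $x = c$ yields $c > c$, which is false. Hence no term in $L(G)$ satisfies $\forall x.\,\psi(\sem{e},x)$.

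The second step is to show that for every finite example set $\cex = \B{i_1,\ldots,i_n}$ the restricted problem $\sy^\cex$ is realizable. Let $M = \max_{1 \le j \le n} i_j(x)$, which is well defined because $\cex$ is finite, and take the candidate $e_\cex := \textrm{Num}(M+1) \in L(G)$. Then $\sem{e_\cex}(i_j) = M+1 > i_j(x)$ for every $j$, so $\psi^\cex(\sem{e_\cex}_\cex) = \bigwedge_{j} (M+1 > i_j(x))$ holds, witnessing realizability of $\sy^\cex$.

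The main ``obstacle'' here is really one of witness selection: once the right pair $(G,\psi)$ is in hand, each verification is a one-line argument. The point the lemma is meant to capture is conceptual — unrealizability can hinge on an infinite-domain phenomenon (here, the unboundedness of $\integ$) that is invisible to any procedure which only inspects finitely many inputs, so no purely example-driven \cegis-style scheme can ever be complete for proving unrealizability, motivating the combination with the grammar-flow analysis developed in the rest of the paper.
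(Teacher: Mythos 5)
Your proposal is correct and matches the paper's own proof almost exactly: the paper uses the same specification $\psi(f,x):=f(x)>x$ together with the grammar $\Start ::= \textrm{Plus}(\Start,\Start) \mid \textrm{Num}(1)$, which generates every positive constant, and the same ``max of the examples plus one'' witness for realizability of $\sy^\cex$. The only caveat is that your grammar $\Start ::= \textrm{Num}(c)$ with $c$ ranging over all of $\integ$ has infinitely many productions and so is not strictly a regular tree grammar under the paper's definition (finite alphabet, finite $\delta$); the paper sidesteps this by building the constants from $\textrm{Num}(1)$ and $\textrm{Plus}$, which is a trivial repair of your witness.
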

The following example shows that CEGIS is incomplete for  \sygus problems over LIA grammars.
\begin{example}
Consider the \sygus problem $(\psi(f,x):=f(x)>x,G_{const})$ 
where $G_{const}$ is the following grammar that can produce for any constant value $c>0$ 
a term $e_c$ such that $\sem{e_c}=c$:
\[
\begin{array}{rcl}
		\Start &::=& \textrm{Plus}(\Start,\Start) \mid \textrm{Num}(1)
\end{array}
\]
For any finite set of examples $E$, we can find a constant term $e_{c'}$ in $L(G)$ whose semantics is $c'\geq \max(E)+1$, 
and therefore is a solution to $\sy^E$. Hence, a CEGIS algorithm cannot prove unrealizability for
this \sygus problem.\qed
\end{example}
\fi
Despite this negative result, we will show that a \cegis algorithm can prove unrealizability for many \sygus instances (\sectref{evaluation}).

%%% Local Variables: 
%%% mode: latex
%%% TeX-master: "main.tex"
%%% End: 

% -*- TeX-master: t; TeX-PDF-mode: t -*-

\section{Proving Unrealizability using Grammar Flow Analysis}
\label{Se:gfa}
In this section, we present a formalism called \emph{grammar flow analysis} (GFA) \cite{SAGA:MW91},
which connects regular tree grammars to equation systems, and show how to use GFA to prove
unrealizability of \sygus problems for finitely many examples.

\subsection{Grammar Flow Analysis}

GFA is a formalism used for equipping the language of a grammar with a
semantics \begin{changebar}
in which
%---actually \emph{two} semantics, which are defined in different ways, but are
%equivalent under conditions that often hold.
\end{changebar}
 the meaning of a tree is a value from a (complete) \emph{combine semilattice}.

\begin{definition}[Combine Semilattice]
  \label{def:semilattice}
  A \emph{combine semilattice} is an algebraic structure $\calD = (D,\oplus)$,
  where $\oplus: D \times D \rightarrow D$ is a binary operation on $D$ (called ``\emph{combine}'')
  that is commutative, associative, and idempotent.\iffull\footnote{
    We have chosen to use the neutral term ``combine,'' rather than
    meet or join, due to varying nomenclature in the literature.
    In our applications, if the semilattice is oriented according to the
    conventions of the abstract-interpretation literature, a combine-semilattice
    is a join-semilattice;
    if it is oriented according to the conventions of the dataflow-analysis
    literature, it is a meet-semilattice.
  }
  \begin{description}
    \item [Commutativity:]
      For all $d_1, d_2 \in D, d_1 \oplus d_2 = d_2 \oplus d_1$.
    \item [Associativity:]
      For all $d_1, d_2, d_3 \in D, d_1 \oplus (d_2 \oplus d_3) = (d_1 \oplus d_2) \oplus d_1$.
    \item [Idempotence:]
      For all $d \in D, d \oplus d = d$.
  \end{description}
  \fi
   \ A partial order, denoted by $\sqsubseteq$, is induced on the elements of $\calD$ as follows:
  for all $d_1, d_2 \in D, d_1 \sqsubseteq d_2$ iff $d_1 \oplus d_2 = d_2$.
  A combine semilattice is \emph{complete} if it is closed under infinite combines.
\end{definition}

\begin{definition}[GFA\cite{SAGA:MW91,LNCS:Ramalingam96}]
	\label{De:GFA}
Let $\calD = (D, \combine)$ be a complete combine semilattice.
Recall that in a regular-tree grammar $G=(N,\Sigma,S,\delta)$,
$\delta$ is a  set of productions of the form
\[
  X_0 \to g(X_1, \ldots, X_k),\text{~~~}\text{~~~~~~~ with }g \in \Sigma.
\]

In a GFA problem $\calG = (G, \calD)$,
each production is associated with a \emph{production function}
$\semgamma\cdot$
that provides an interpretation of $g$---i.e.,
$\semgamma{g} \colon D^k\to D$. \iffull\footnote{
  The definition above is a simplified version of GFA.
  In the usual definition, interpretations are given via productions
  rather than alphabet symbols.
  That approach is somewhat more expressive because
  if  two productions use the same
  symbol, e.g., $X_0 \to g(X_1, X_2)$ and $X_3 \to g(X_4, X_5)$,
  the production functions for the two productions are allowed to
  be different.
  We use the simplified definition because we do not need
  this ability.
}\fi
$\semgamma{\cdot}$ is extended to trees in $L(G)$ in the usual way,
by thinking of each tree $e \in L(G)$ as a term over the
operations $\semgamma{g}$.
Term $e$ denotes a composition of functions, and
corresponds to a unique value in $D$, which we call $\semgamma{e}_\calG$
(or simply $\semgamma{e}$ when $\calG$ is understood).

Let $L_G(X)$ denote the trees derivable from a nonterminal $X$.
The \emph{grammar-flow-analysis problem} is to overapproximate, for each
nonterminal $X$, the \emph{combine-over-all-derivations}
value $m_\calG(X)$ defined as follows:
\[
  m_\calG(X) = \Combine_{e \in L_G(X)} \semgamma{e}_\calG.
\]

We can also associate $G$ with a system of mutually recursive equations,
where each equation has the form
\begin{eqnarray}
	\label{Eq:GFA}
	n_\calG(X_0) = \Combine_{X_0 \to g(X_1, \ldots, X_k) \in \delta} \semgamma{g}(n_\calG(X_1), \ldots, n_\calG(X_k)).
\end{eqnarray}
We use $n_\calG(X)$ to denote the value of nonterminal $X$ in the
\emph{least fixed-point solution} of $G$'s equations.
\end{definition}

In essence, GFA is about two ways of folding the semantics of terms
onto nonterminals:
\begin{description}
  \item [{\it Derivation-tree based:}]
    $m_\calG(X)$ defines the semantics of a term in a compositional fashion, and folds all terms in
    $L_G(X)$ onto nonterminal $X$ by combining ($\oplus$) their values.
  \item [{\it Equational:}]
    $n_\calG(X)$ obtains a value for $X$ by using the values of ``neighboring''
    nonterminals---i.e.,  nonterminals that appear on the right-hand side of productions
    of $X$.
\end{description}
Furthermore, GFA ensures that for all $X$, $m_\calG(X)\sqsubseteq n_\calG(X)$. 

The relevance of GFA for showing unrealizability is that whenever an RTG $G$ is recursive, $L(G)$ is an infinite set of trees. Thus, in general, there is not a clear method to compute the combine-over-all-derivations value $m_\calG(X)=\bigoplus_{e\in L(G)} \semgamma{e}_\calG$. However, we can employ fixed-point finding procedures to compute $n_\calG(X)$. 
Because $m_\calG(X)\sqsubseteq n_\calG(X)$, our computed value will be a safe overapproximation.

However, in some cases we have a stronger relationship between $m_\calG(X)$ and $n_\calG(X)$.
A production function $\semgamma{g}$ is \emph{infinitely distributive} in
a given argument position if
\[
 \semgamma{g}(\ldots, \Combine\limits_{j \in J} x_j, \ldots ) = \Combine\limits_{j \in J} \semgamma{g}(\ldots, x_j, \ldots )
\]
where $J$ is a finite or infinite index set.

\begin{theorem}\cite{SAGA:MW91,LNCS:Ramalingam96}\label{The:Coincidence}
If every production function $\semgamma{g}$, $g\in \Sigma$, is infinitely
distributive in each argument position, then for all nonterminals $X$,
$m_\calG(X) = n_\calG(X)$.\iffull\footnote{\theoref{Coincidence} generalizes other similar theorems
	\cite{ActaInf:KU77,PFA:SP81} about the coincidence of the
	valuations obtained from a path-based semantics (generalized
	in GFA to the derivation-tree-based semantics $\{ m_\calG(X) \mid X \in N \}$)
	and an equational semantics ($\{ n_\calG(X) \mid X \in N \}$) when
	dataflow functions distribute over the combine operator.}\fi
\end{theorem}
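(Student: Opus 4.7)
}
The excerpt already notes that $m_\calG(X)\sqsubseteq n_\calG(X)$ holds in general, so the plan focuses on the reverse inclusion. My strategy is the standard two-inclusion argument for coincidence theorems: show that the family $\{m_\calG(X)\}_{X\in N}$ is itself a solution of the equation system \eqref{GFA}, and then invoke the least-fixed-point characterization of $\{n_\calG(X)\}_{X\in N}$ to conclude $n_\calG(X)\sqsubseteq m_\calG(X)$.

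First I would observe that infinite distributivity implies monotonicity of each $\semgamma{g}$ in each argument: if $a\sqsubseteq b$ then $a\oplus b=b$, hence $\semgamma{g}(\ldots,b,\ldots)=\semgamma{g}(\ldots,a\oplus b,\ldots)=\semgamma{g}(\ldots,a,\ldots)\oplus\semgamma{g}(\ldots,b,\ldots)$, which gives $\semgamma{g}(\ldots,a,\ldots)\sqsubseteq\semgamma{g}(\ldots,b,\ldots)$. Monotonicity is what makes the direction $m_\calG(X)\sqsubseteq n_\calG(X)$ provable by a routine structural induction on derivation trees $e\in L_G(X)$, which I would include for completeness but not belabor.

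For the nontrivial direction, I would partition the derivations of $X_0$ by the first production applied. Every $e\in L_G(X_0)$ is uniquely of the form $g(e_1,\ldots,e_k)$ for some production $X_0\to g(X_1,\ldots,X_k)\in\delta$ with $e_i\in L_G(X_i)$, so
\[
m_\calG(X_0)=\Combine_{X_0\to g(X_1,\ldots,X_k)\in\delta}\;\;\Combine_{e_1\in L_G(X_1)}\cdots\Combine_{e_k\in L_G(X_k)}\semgamma{g}(\semgamma{e_1},\ldots,\semgamma{e_k}).
\]
Applying infinite distributivity of $\semgamma{g}$ in each argument in turn, the inner $k$ combines commute past $\semgamma{g}$, yielding
\[
m_\calG(X_0)=\Combine_{X_0\to g(X_1,\ldots,X_k)\in\delta}\semgamma{g}\!\left(\Combine_{e_1\in L_G(X_1)}\semgamma{e_1},\ldots,\Combine_{e_k\in L_G(X_k)}\semgamma{e_k}\right)=\Combine_{X_0\to g(X_1,\ldots,X_k)\in\delta}\semgamma{g}(m_\calG(X_1),\ldots,m_\calG(X_k)).
\]
Thus $\{m_\calG(X)\}_{X\in N}$ satisfies the equations \eqref{GFA}, so it is a fixed point of the associated operator on $D^{|N|}$, which is monotone by the monotonicity of each $\semgamma{g}$. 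Since $\{n_\calG(X)\}$ is the \emph{least} such fixed point, $n_\calG(X)\sqsubseteq m_\calG(X)$ for every $X$. Combined with the reverse inequality we obtain $m_\calG(X)=n_\calG(X)$.

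\paragraph{Anticipated obstacle.} The only delicate step is justifying the iterated use of infinite distributivity to pull out $k$ nested (possibly infinite) combines through $\semgamma{g}$. I would discharge this by a short lemma that distributivity in a single argument, applied $k$ times, yields the needed $k$-fold form, using the associativity/commutativity of $\oplus$ to reorder the combines; completeness of the semilattice guarantees all the intermediate and outer combines exist.
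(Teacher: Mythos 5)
The paper gives no proof of \theoref{Coincidence} at all---it is imported from the GFA literature \cite{SAGA:MW91,LNCS:Ramalingam96}---so there is no in-paper argument to compare against; your proof is the standard one underlying those sources, and it is correct: distributivity gives monotonicity and hence $m_\calG(X)\sqsubseteq n_\calG(X)$, while the first-production decomposition of $L_G(X_0)$ together with $k$-fold infinite distributivity shows that $\{m_\calG(X)\}_{X\in N}$ satisfies \eqref{GFA}, so leastness of $n_\calG$ yields the converse inclusion. Two small points are worth tightening. First, ``uniquely of the form $g(e_1,\ldots,e_k)$'' is neither needed nor always true (distinct productions can generate the same tree), but idempotence and commutativity of $\oplus$ make the regrouping of $\Combine_{e\in L_G(X_0)}$ by production valid regardless. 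Second, if some $L_G(X_i)=\emptyset$ then $m_\calG(X_i)$ is the empty combine (the least element), and your fixed-point equation then requires $\semgamma{g}$ to be strict in that argument; this follows from infinite distributivity only if the index set $J$ is allowed to be empty, so you should either state that convention explicitly or restrict attention to grammars whose nonterminals are all productive.
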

This theorem is key to our decision procedures for LIA and CLIA grammars, because the domain of semi-linear sets has this property (\sectref{SemiLinearSets}).

\subsection{Connecting GFA to Unrealizability}
\label{Se:gfaandunrealizability}

In this section, we show how GFA can be used to 
check whether a \sygus problem with finitely many examples $E$ is unrealizable.
Intuitively, we use GFA to overapproximate the set of  values the expressions generated by the grammar
can yield  when evaluated on a certain set of input examples $E$.

\begin{definition}
\label{def:gfa-from-examples}
Let $\textit{sy}^E =(\psi^E, G)$ be a \sygus problem with example set $E$,
regular-tree grammar $G=(N,\Sigma,S,\delta)$, and background theory $T$.
Let $\sem{\cdot}_E$ be the semantics of trees in $L_G(X)$ obtained
via $T$, when $\mu_E(\cdot)$ is used to interpret occurrences of terminals
of $G$ that represent arguments to the function to be synthesized in
the \sygus problem.

Let $\calD = (D,\oplus)$ be a complete combine semilattice for which
there is a concretization function $\gamma \colon D \to
\textit{Val}^{|E|}$, where $\textit{Val}$ is the type of the output
values produced by the function to be synthesized in the \sygus problem.
Let $\calG_E = (G, \calD)$ be a GFA problem that uses
$\mu_E(\cdot)$ to interpret occurrences of terminals of $G$ that represent
arguments to the function to be synthesized.
Then
\begin{enumerate}
  \item
    $\calG_E$ is a \emph{sound abstraction} of the semantics of $L_G(X)$ if
    \[
      \gamma(m_{\calG_E}(X)) \supseteq \{ \sem{e}_E \mid e \in L_G(X) \}.
    \]
  \item
    $\calG_E$ is an \emph{exact abstraction} of the semantics of $L_G(X)$ if
    \[
      \gamma(m_{\calG_E}(X)) = \{ \sem{e}_E \mid e \in L_G(X) \}.
    \]
\end{enumerate}
\end{definition}

By using such abstractions, including the one
described in \sectref{overview} based on semi-linear sets (see
\sectrefs{ProvingUnrealizabilityOfSyGuSProblemsInLIA}{SolvingSyGuSProblemsInCLIA}),
the results obtained by solving a GFA problem can imply that a \sygus
problem with finitely many examples $E$ is unrealizable.

The idea is that, given a \sygus problem $\textit{sy}^E =(\psi^E, G)$ with example set $E$,
regular-tree grammar $G=(N,\Sigma,S,\delta)$, and background theory $T$, we can
(i) solve the GFA problem $\calG_E = (G, \calD)$ with some complete
domain semilattice $\calD = (D,\oplus)$ to obtain an overapproximation of
$\gamma(m_{\calG_E}(S))$, and then
(ii) check if the approximation is disjoint from the specification,
	i.e., the predicate $\vec{o} \in \gamma(m_{\calG_E}(S))\wedge \bigwedge_{i_j \in E} \psi(\vec{o}_j, i_j)$ is unsatisfiable.
	
Checking that the previous predicate holds can be operationalized with
the use of \emph{symbolic concretization} \cite{VMCAI:RSY04} and an
SMT solver. We view an abstract domain $\calD$ as (implicitly) a logic
fragment $\LogicLang_\calD$ of some general-purpose logic
$\LogicLang$,
and each abstract value as (implicitly) representing a formula in $\LogicLang_\calD$.
The connection between $\calD$ and $\LogicLang_\calD$ can be made explicit: we say that $\gammaHat$ is a \emph{symbolic-concretization operation}
for $\calD$ if $\gammaHat(\cdot,\vec{o}) : \calD\rightarrow \LogicLang_\calD$ maps each $a \in \calD$ to a formula with free variables $\vec{o}$, such that $\eval{\gammaHat(a,\vec{o})}_\LogicLang = \gamma(a)$. If $\gammaHat$ exists, we say that \emph{$\LogicLang$ supports
symbolic concretization for $\calD$}.
	\begin{theorem}
		\label{The:GFAisSoundAndComplteForUnrealizable}
		
		Let $\textit{sy}^E =(\psi^E, G)$ be a \sygus problem with example set $E$,
		regular-tree grammar $G=(N,\Sigma,S,\delta)$, and 
		background theory $T$.
		Let $\calD = (D,\oplus)$ be a complete combine semilattice, and 
		$\calG_E = (G, \calD)$ be a grammar-flow-analysis problem over regular-tree grammar $G$.
		Assume the theory $T$ supports symbolic concretization of $\calD$.
		Let $\calP$ be the property
		\[
		\calP \eqdef  \gammaHat(n_{\calG_E}(S), \vec{o}) \wedge \bigwedge_{i_j \in E} \psi(\vec{o}_j, i_j).
		\]
		\begin{enumerate}
			\item\begin{changebar}
				Suppose that $\calG_E$ is a \textbf{sound} abstraction of the semantics of $L(G)$
			\end{changebar}
			with respect to background theory $T$.
			Then $\textit{sy}^E$ is unrealizable \textbf{if}\ \ $\calP$ is unsatisfiable.
			\item
				\begin{changebar}
					Suppose that $\calG_E$ is an \textbf{exact} abstraction of the semantics of $L(G)$
				\end{changebar}
			with respect to background theory $T$.
			Then $\textit{sy}^E$ is unrealizable \textbf{if and only if}\ \ $\calP$ is unsatisfiable.
		\end{enumerate}
%		Furthermore, if\hspace{0.5mm} $T$ supports symbolic concretization of $\calD$, then $sy^E$ is unrealizable if\ \ $\gammaHat(n_{\calG_E}(S), \vec{o}) \land \bigwedge_{i_j\in E} \psi(o_j, i_j)$ is unsatisfiable. Moreover, if $\calD$ is an exact abstraction for $sy^E$ and infinitely distributive, then $sy^E$ is unrealizable iff $\gammaHat(n_{\calG_E}(S), \vec{o}) \land \bigwedge_{i_j\in E} \psi(o_j, i_j)$ is unsatisfiable.
		%GFA is \textbf{sound} for proving the unrealizability of  $sy^E$.
		%Moreover, if $D$ is an exact abstraction domain for $sy^E$, GFA is also \textbf{complete}.
	\end{theorem}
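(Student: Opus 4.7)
The plan is to reduce both statements to a simple set-theoretic disjointness condition and then track how soundness or exactness of the GFA abstraction transfers to the fixed-point value $n_{\calG_E}(S)$ that actually appears in $\calP$. Let $O_E \eqdef \{\sem{e}_E \mid e \in L(G)\}$ be the set of realizable output vectors on $E$, and let $S_\psi \eqdef \{\vec{o} \mid \bigwedge_{i_j \in E} \psi(\vec{o}_j, i_j)\}$ be the set of output vectors satisfying the specification on $E$. Then $\textit{sy}^E$ is unrealizable exactly when $O_E \cap S_\psi = \emptyset$. Using the defining property of symbolic concretization, $\eval{\gammaHat(n_{\calG_E}(S), \vec{o})}_\LogicLang = \gamma(n_{\calG_E}(S))$, the satisfiability of $\calP$ is equivalent to $\gamma(n_{\calG_E}(S)) \cap S_\psi \neq \emptyset$. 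Hence both parts reduce to comparing $O_E$ with $\gamma(n_{\calG_E}(S))$.

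For part (1), I would chain two monotone inclusions. First, because $n_{\calG_E}(S)$ is by definition the least-fixed-point solution while $m_{\calG_E}(S)$ is the combine-over-all-derivations value, GFA guarantees $m_{\calG_E}(S) \sqsubseteq n_{\calG_E}(S)$; since $\gamma$ is monotone with respect to $\sqsubseteq$, this gives $\gamma(m_{\calG_E}(S)) \subseteq \gamma(n_{\calG_E}(S))$. Second, soundness of the abstraction gives $O_E \subseteq \gamma(m_{\calG_E}(S))$. Composing these yields $O_E \subseteq \gamma(n_{\calG_E}(S))$, so if $\calP$ is unsatisfiable then $\gamma(n_{\calG_E}(S)) \cap S_\psi = \emptyset$, which forces $O_E \cap S_\psi = \emptyset$, i.e.\ $\textit{sy}^E$ is unrealizable.

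For part (2), the ``if'' direction is immediate from part (1), since every exact abstraction is in particular sound. For ``only if'', assume $\textit{sy}^E$ is unrealizable, so $O_E \cap S_\psi = \emptyset$. Exactness gives $\gamma(m_{\calG_E}(S)) = O_E$, and the goal is to conclude $\gamma(n_{\calG_E}(S)) \cap S_\psi = \emptyset$ so that $\calP$ is unsatisfiable. The hard part here is the reverse inclusion $\gamma(n_{\calG_E}(S)) \subseteq \gamma(m_{\calG_E}(S))$, which does \emph{not} follow from the definitions alone, since in general one only has $m_{\calG_E}(S) \sqsubseteq n_{\calG_E}(S)$. I would close this gap by appealing to \theoref{Coincidence}: when every production function of $\calG_E$ is infinitely distributive in each argument position, $m_{\calG_E}(S) = n_{\calG_E}(S)$, and the two abstract values concretize to the same set, giving the biconditional. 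This is exactly the regime in which the later LIA and CLIA instantiations via semi-linear sets operate, so the hypothesis is available where it is needed; absent distributivity, the natural alternative is to read ``exact abstraction'' as already phrased in terms of $n_{\calG_E}(S)$, in which case the argument becomes immediate.
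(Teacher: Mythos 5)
Your proof is correct and follows essentially the same route as the paper's: both reduce $\calP$'s unsatisfiability to disjointness of $\gamma(n_{\calG_E}(S))$ from the specification's satisfying set via symbolic concretization, chain $O_E \subseteq \gamma(m_{\calG_E}(S)) \subseteq \gamma(n_{\calG_E}(S))$ for part (1), and invoke the coincidence $m_{\calG_E}(S) = n_{\calG_E}(S)$ under infinite distributivity (\theoref{Coincidence}) to reverse the reasoning for part (2). You correctly flagged that this distributivity assumption is not in the theorem's hypotheses; the paper's own proof silently adds the same hypothesis (``if $\calG_E$ is an exact abstraction, \emph{and infinitely distributive}, the above properties are all equivalent''), so your treatment is, if anything, more explicit about where the statement needs shoring up.
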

	\iffull
	\begin{proof}
		Suppose 	$\gammaHat(n_{\calG_E}(S), \vec{o}) \land \bigwedge_{i_j\in E} \psi(o_j, i_j)$ is unsatisfiable. By definition of symbolic concretization this means $\not\exists \vec{o} \in \eval{\gammaHat(n_{\calG_E}(S),\vec{o})}_\LogicLang$ such that $\bigwedge_{i_j\in E} \psi(o_j, i_j)$. Equivalently 
		\[
		\Forall{\vec{o} \in \gamma(n_{\calG_E}(S))} \bigvee_{i_j \in E} \neg \psi(o_j, i_j).
		\]
		Since $m_{\calG_E}(X)\sqsubseteq n_{\calG_E}(X)$, the above implies
		\[
		\Forall{\vec{o} \in \gamma(m_{\calG_E}(S))} \bigvee_{i_j \in E} \neg \psi(o_j, i_j).
		\]
		Since $\calG_E$ is a sound abstraction we have 
		\[
		\Forall{\vec{o} \in \{ \sem{e}_E \mid e \in L_G(S) \}} \bigvee_{i_j \in E} \neg \psi(o_j, i_j).
		\]
		This means that for every possible output vector of start symbol $S$ there is one coordinate that violates the specification. Thus, the problem is unrealizable.
		
		Furthermore, if $\calG_E$ is an exact abstraction, and infinitely distributive, the above properties are all equivalent. Thus, the above chain of reasoning also goes in the reverse direction.
	\end{proof}
	\fi

\subsection{Algorithm for Showing Unrealizability}
\label{Se:algorithm}
\algref{unrealizability} summarizes our strategy for showing unrealizability.

\begin{changebar}
\begin{algorithm}[tb]
{\small
  \SetKwInOut{Input}{Input}
  \SetKwInOut{Output}{Output}
  \SetKwInOut{Function}{Function}
  \Function{\textsc{CheckUnrealizable}$(G,\psi,E)$}
  \Input{Grammar $G$, specification $\psi$, set of examples $E$}
  %\Output{Whether problem $\sy^E$  is unrealizable where $\sy= (\psi(f(\vec{x}),\vec{x}), G)$}
  $\calG_E\gets(G, \calD) $ \hspace*{\fill} // GFA problem from $G$ and $E$ (Def.~\ref{def:gfa-from-examples}) \label{Li:FirstLine}\;
  $s \gets n_{\calG_E}(\Start)$ \hspace*{\fill}// Compute solution to the GFA problem\label{Li:solveGFA}\;
  \If{$\gammaHat(s, \vec{o}) \land \bigwedge_{i_j\in E} \psi(o_j, i_j)$ is unsatisfiable\label{Li:smtCheck}}{
  	\Return Unrealizable
  }
  \Return  $\begin{cases}
  	\text{Realizable},& \calG_E \text{ is an exact abstraction} \\
  	\text{Unknown},& \text{otherwise}
  \end{cases}$
 % Generate new example input $i$, add it to $E$, and go to \lineref{FirstLine}\;
}
\caption{Checking whether $\sy^E$ is unrealizable}
\label{Alg:unrealizability}
\end{algorithm}
\end{changebar}

\begin{example}
Recall the \sygus problem, from \sectref{overview}, of synthesizing a function $e_f(x)= 2x + 2$ using the grammar from \eqref{grammar1}. Suppose that we call \algref{unrealizability} with the example set $E=\{1\}$, and use the abstract domain of semi-linear sets. \algref{unrealizability} first creates a GFA problem $\calG_E$, which is shown as the recursive equation system given as \eqref{gfa1subst}. The solution of the GFA problem then gets assigned to $s$ at \lineref{solveGFA}. In this example, $s$ is the semi-linear set $\{0+\lambda3\}$. This set can be 
symbolically concretized as the set of models of $\exists \lambda \ge 0. o_1 = 0 + \lambda 3$. Then, on \lineref{smtCheck} the LIA formula $\exists \lambda \ge 0. o_1 = 0 + \lambda3\land o_1 = 2i_1 + 2 \land i_1 = 1$ is passed to an SMT solver, which will return unsat.
\end{example}

\paragraph{GFA in Practice.}
So far we have been vague about how GFA problems are computationally solved. In general, there is no universal method. 
The performance and precision of a method depends on the choice of abstract domain $\calD$. 

\mypar{Kleene iteration.} Traditionally one would employ Kleene iteration to find a least fixed-point, $n_{\calG_E}(X)$. 
However, Kleene iteration is only guaranteed to converge to a least fixed-point if the domain 
$\calD$ satisfies the finite-ascending-chain condition. 
For example, the domain of predicate abstraction has this property, and therefore \algref{unrealizability} could be instantiated with Kleene iteration and predicate abstraction to attempt to show unrealizabilty, for arbitrary \sygus problems. 
However, in this paper we are focused on \sygus problems using integer arithmetic, which does have infinite ascending chains. 
Thus, while predicate abstraction, and other domains with finite height, can provide a \textbf{sound} abstraction of LIA problems, they can never provide an \textbf{exact} abstraction. 
Alternatively, we could still use Kleene iteration on a domain with infinite ascending chains if we provide a \emph{widening} operator, to ensure convergence \cite{POPL:CH78}. 
The issue with this strategy is that we are not guaranteed to achieve a \emph{least} fixed-point. 
Such a method would still be sound, but necessarily incomplete.

\mypar{Constrained Horn clauses.} Another incomplete, but general, method would employ the use of the domain of constrained Horn clauses, $(\Phi, \lor)$. The set $\Phi$ contains all first-order predicates over some theory. The order of predicates is given by $P_1(\vec{v})\le P_2(\vec{v})$ iff $P_1(\vec{v})\rightarrow P_2(\vec{v})$, for all models $\vec{v}$.
The production functions $\semgamma{\cdot}$ of this GFA problem get translated to constraints on the predicates. 
The advantage of using $(\Phi, \lor)$ is that the resulting GFA problem is a Horn-clause program, which we can then pass to an
off-the-shelf, incomplete Horn-clause solver, such as the one implemented in Z3 \cite{z3}.
In this case, \algref{unrealizability} would be slightly modified. 
Horn-clause solvers do not provide an abstract description of the nonterminals.
Instead they determine satisfiabilty of a set of Horn clauses with respect to a particular query. Therefore, in this case \algref{unrealizability} would use the formula in \lineref{smtCheck} as the Horn-clause query, instead of having a separate SMT check.
\begin{example}
\label{ex:horn}
The GFA problem in \eqref{gfa1} can be encoded using the following constrained Horn clause:
\begin{equation}
  \label{Eq:SimplestProblemAsHorn}
  \forall v,v'.\ \Start(v) \leftarrow (v = 1+1+1+v' \land \Start(v')) \vee v=0
\end{equation}
A Horn-clause solver can prove that the LIA \sygus problem from \sectref{overview}
is unrealizable by showing that the following formula is unsatisfiable:
$\mbox{\eqref{SimplestProblemAsHorn}} \wedge \Start(o_1) \wedge o_1=2i_1+2$.
\end{example}

\mypar{Newton's Method.} In the next two sections, we provide specialized \emph{complete} methods to solve GFA problems over LIA and CLIA grammars using Newton's method~\cite{EsparzaKL10}. Our custom methods are limited to the case of LIA and CLIA grammars, but we show that the resulting solution is exact. No prior method has this property for LIA and CLIA grammars. Consequently, our methods guarantee that not only does the check on \lineref{smtCheck} imply unrealizability on a set of examples if the solver returns unsat, but also realizability if the solver returns sat. The latter property is important because it ensures that the current set of examples is insufficient to prove unrealizability, and we must generate more.

%%% Local Variables:
%%% mode: latex
%%% TeX-master: "main.tex"
%%% End:

% -*- TeX-master: t; TeX-PDF-mode: t -*-

\section{Proving Unrealizability of LIA \sygus Problems with Examples}
\label{Se:ProvingUnrealizabilityOfSyGuSProblemsInLIA}

In this section, we instantiate the framework underlying \algref{unrealizability} to obtain a 
\emph{decision procedure} for (un)realizability of \sygus problems in
\emph{linear integer arithmetic} (LIA),
where the specification is given by examples (as defined in~\exref{LIA}).
First, we review the conditions for applying Newton's method for finding
the least fixed-point of a GFA problem over a commutative, idempotent, $\omega$-continuous semiring
(\sectref{NewtonMethod}). 
We then show that the domain of semi-linear sets can be formulated as such a problem. 
This approach provides a method to compute $n_{\calG_E}(\Start)$ for LIA \sygus problems. 
We then show that the domain of semi-linear sets is \emph{exact} and
\emph{infinitely distributive} (\sectref{SemiLinearSets}).
Finally, we show that semi-linear sets admit symbolic concretization (\sectref{LIAchecksat}). 
Thus, by \theoref{GFAisSoundAndComplteForUnrealizable}, we obtain a
decision procedure for checking (un)realizability.

\subsection{Solving Equations using Newton's Method}
\label{Se:NewtonMethod}

We provide background definitions on semirings
and Newton's method for solving equations over 
certain semirings.

\begin{definition}
\label{De:SemiringSimple}
A \emph{semiring} $\Se = (D, \combine, \extend, \srZero, \srOne)$
consists of a set of \emph{elements} $D$ equipped with two binary operations:
\emph{combine} ($\combine$) and \emph{extend} ($\extend$).
$\combine$ and $\extend$ are associative, and have identity elements
$\srZero$ and $\srOne$, respectively.
$\combine$ is commutative, and $\extend$ distributes over $\combine$.
For every $x \in D$, $x \otimes \srZero = \srZero = \srZero \otimes x$.

A semiring is \emph{commutative} if
for all $a,b \in D$, $a \extend b = b \extend a$.

\iffull
An \emph{$\omega$-continuous semiring} is a semiring with the following additional properties:
\begin{enumerate}
  \item 
    The relation $\sqsubseteq \,\,\, \eqdef \{(a,b) \in D \times D \mid \Exists{d} a \combine d = b \}$
    is a partial order.
  \item\label{def:wcont-supremum}
    Every $\omega$-chain $(a_i)_{i\in \nat}$
    (i.e., for all $i \in \nat~~a_i \sqsubseteq a_{i+1}$)
    has a supremum $\sup_{i\in \nat} a_i$ with respect to $\sqsubseteq$.
  \item\label{def:wcont-inf-distributive}
    Given an arbitrary sequence $(c_i)_{i \in \nat}$, define
    \[
      \wsum c_i \eqdef \sup \{c_0 \combine c_1 \combine \ldots \combine c_i \mid i \in \nat\}.
    \]
    The supremum exists by (\ref{def:wcont-supremum}) above.
    Then, for every sequence $(a_i)_{i \in \nat}$,
    for every $b \in \Se$, and every partition $(I_j)_{j \in J}$ of $\nat$,
    the following properties all hold:
    \[
    \begin{array}{ccccc}
      b \extend \left(\wsum a_i \right) =\wsum \left(b \extend a_i \right) &~\quad~&
      \left(\wsum a_i \right) \extend b = \wsum \left(a_i \extend b\right) &~\quad~&\\
      {\displaystyle \Combine\limits_{j \in J} \left( {\displaystyle \Combine\limits_{i \in I_j} a_i} \right)} = \wsum a_i
    \end{array}
    \]
\end{enumerate}
The notation $a^i$ denotes the $i^{\textit{th}}$ term in the sequence in which
$a^0 = \srOne$ and $a^{i+1} = a^{i} \extend a$.
\fi
An $\omega$-continuous semiring has a \emph{Kleene-star operator}
$\vphantom{a}^{\ostar} \colon D \rightarrow D$ defined as follows:
$a^{\ostar} = \wsum a^i$.

A semiring is \emph{idempotent} if for all $a \in D$, $a \combine a = a$.
\iffull
In an idempotent semiring, the order on elements is defined by
$a \sqsubseteq b$ iff $a \combine b = b$.
\fi
\end{definition}

%\else
%The definition of a semiring can be found in the long version of the paper
%submitted as supplementary material.
%In this paper, the focus is on semirings in which $\combine$ is
%\emph{idempotent} (i.e., for all $a \in D$, $a \combine a = a$).
%In an idempotent semiring, the order on elements is defined by
%$a \sqsubseteq b$ iff $a \combine b = b$.
%Moreover, we are interested in  $\omega$-continuous semirings,
%which are equipped with a \emph{Kleene-star operator}
%$\vphantom{a}^{\ostar} \colon D \rightarrow D$ defined as follows:
%$a^{\ostar} = \wsum a^i$.
%\fi

\iffull
Recently, there has been renewed interest in solving equations over
semirings, with applications to static program analysis.
\emph{Kleene iteration}---the standard iterative approach to solving equations
in program analysis---can be used, but converges to the least fixpoint
only when the semiring has no infinite ascending chains.
\else
Recently,
\fi
\citet{EsparzaKL10} developed an iterative method, called
\emph{Newtonian Program Analysis} (NPA), which solves a set of
semiring equations by an iterative computation.
\iffull
The technique does not  operate on the equations
themselves, but on an augmented set of expressions
created using a notion of a formal derivative of the expressions
on the equation system's right-hand sides.
\fi

\begin{lemma}\label{Lem:NPAForCISemirings}[Newton's Method \cite[Theorem 7.7]{EsparzaKL10}]
  For a system of equations in $N$ variables over a commutative,
  idempotent, $\omega$-continuous semiring,
  NPA reaches the least fixed point after at most $|N|$ iterations.
\end{lemma}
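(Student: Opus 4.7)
The plan is to adapt the proof strategy developed by Esparza, Kiefer, and Luttenberger for Newton's method on $\omega$-continuous semirings, specializing the argument to the commutative idempotent case. First, I would make the Newton iteration precise. Given the polynomial system $\vec{X} = F(\vec{X})$ in $|N|$ variables, I would define the formal differential $DF(\vec{X})[\vec{Y}]$ as the linearization that, for each monomial of $F$, collects all ways of replacing exactly one variable occurrence by the corresponding variable from $\vec{Y}$. The Newton operator is then $\mathcal{N}(\vec{v}) \eqdef \vec{v} \oplus (DF(\vec{v}))^{\ostar}\bigl(F(\vec{v})\bigr)$, with Newton sequence $\nu^{(0)} = \vec{0}$ and $\nu^{(k+1)} = \mathcal{N}(\nu^{(k)})$.

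Second, I would establish the basic soundness/monotonicity properties: $\nu^{(k)} \sqsubseteq \nu^{(k+1)} \sqsubseteq \mu F$, where $\mu F$ is the least fixed point. This follows from $\omega$-continuity and the fact that each Newton step is the exact least solution of the linearized system $\vec{Y} = F(\nu^{(k)}) \oplus DF(\nu^{(k)})[\vec{Y}]$, which lies under $\mu F$ because $F$ dominates its linearization around any point below the fixed point. The Kleene-star in the definition of $\mathcal{N}$ is well defined precisely because the semiring is $\omega$-continuous.

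Third, the main step is to show that $\nu^{(|N|)} = \mu F$. I would approach this by giving a syntactic decomposition of $\mu F$: every element contributing to $\mu F$ arises from a finite derivation tree over the productions of the equation system, and by $\omega$-continuity $\mu F$ equals the combine of the values of all such trees. I would then introduce a \emph{dimension} of a derivation tree (a Horton–Strahler-style index) and prove, by induction on $k$, that $\nu^{(k)}$ already equals the combine of the values of all derivation trees of dimension at most $k$. Commutativity is used to reorder sibling sub-derivations, and idempotence is used to absorb duplicates within a Kleene-star, which together let one encode every tree of dimension $k{+}1$ using one more application of $(DF)^{\ostar}$ on top of the dimension-$k$ solution.

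The main obstacle is the dimension bound: showing that every derivation tree relevant to $\mu F$ can be rewritten into an equivalent tree of dimension at most $|N|$. The key combinatorial observation is that, in a commutative idempotent semiring, any sub-derivation that reintroduces a variable already processed higher in the tree can be absorbed into a Kleene-star at the earlier occurrence without changing the value; hence the number of distinct ``fresh'' variables along any root-to-leaf spine bounds the dimension. Since there are only $|N|$ variables, the dimension is at most $|N|$, and combined with the inductive characterization of $\nu^{(k)}$ this yields $\nu^{(|N|)} = \mu F$, which is the claim of \lemref{NPAForCISemirings}.
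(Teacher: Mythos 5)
The paper offers no proof of this lemma: it is imported verbatim as Theorem~7.7 of \citet{EsparzaKL10}, so there is no in-paper argument to compare against, and the only fair benchmark is the cited proof itself. Measured against that, your sketch follows essentially the same route: define the Newton operator via the formal differential and a Kleene star of the linearized system, establish $\nu^{(k)} \sqsubseteq \nu^{(k+1)} \sqsubseteq \mu F$ from $\omega$-continuity, characterize $\nu^{(k)}$ as the combine of the values of all derivation trees of Horton--Strahler dimension at most $k$, and then argue that dimension $|N|$ suffices in the commutative idempotent case. Two places deserve more care than your sketch gives them. First, the tree-dimension characterization of the Newton iterates is itself a substantial theorem (it holds for idempotent $\omega$-continuous semirings without commutativity) and carries most of the weight; your induction asserts that one more application of $(DF)^{\ostar}$ captures exactly the dimension-$(k{+}1)$ trees but does not indicate why. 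Second, the final combinatorial lemma is cleanest when stated as: every derivation tree is \emph{Parikh-equivalent} (same multiset of productions) to one of dimension at most $|N|$; commutativity of $\otimes$ then gives equality of values for Parikh-equivalent trees, and idempotence of $\oplus$ makes restriction to bounded-dimension trees lossless. Your phrasing about absorbing a repeated variable ``into a Kleene-star at the earlier occurrence'' conflates the semantic star in the Newton operator with what is really a syntactic tree-rebalancing, and ``number of fresh variables along a spine bounds the dimension'' is not literally the invariant used---the actual bound comes from a careful surgery induction---but the driving idea (a variable repeating along a path among only $|N|$ variables enables restructuring) is the correct one. In short: this is an accurate reconstruction of the known proof rather than a new argument, with the identified steps being exactly the ones the cited reference fills in.
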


\lemref{NPAForCISemirings} is a powerful result because it applies
even in cases when the semiring has infinite ascending chains.

\subsection{Removing Non-Commutative Operators}
\label{Se:commutativeGrammars}

Our first step towards using GFA to generate equations that can be solved using Newton's method removes
non-commutative operators from the grammar.

We define the language $\LIAPLUS$,
\[
  T_{\LIAPLUS} ::= \textrm{Plus}(T_{\LIAPLUS},T_{\LIAPLUS}) \mid \textrm{Num}(c) \mid \textrm{Var}(x) \mid \textrm{NegVar}(x)
\]
\iffull
with the following semantics with respect to examples $E$:	
\noindent
\begin{tabular}{l@{\hspace{3.0ex}}l}
	\begin{minipage}{2.5in}
		\begin{eqnarray}
		\label{Eq:LIAPLUSPlus}
		\sem{\textrm{Plus}}_E(v_1,v_2)  & := & v_1 + v_2  \\
		\label{Eq:LIAPLUSNum}
		\sem{\textrm{Num}(c)}_E        & := & \B{c,...,c}
		\end{eqnarray}
	\end{minipage}
	\\
	\begin{minipage}{2.5in}
		\begin{eqnarray}
		\label{Eq:LIAPLUSVar}
		\sem{\textrm{Var}(x)}_E    & := & \mu_E(x) \\
		\label{Eq:LIAPLUSNegVar}
		\sem{\textrm{NegVar}(x)}_E & := & -\mu_E(x)
		\end{eqnarray}
	\end{minipage}
\end{tabular}	

\else
where the semantics of the \textrm{Plus}, \textrm{Num}, and \textrm{Var} operators are the same as for
\LIA, and $\sem{\textrm{NegVar}(x)}_E :=  -\mu_E(x)$.
\fi	
We say a regular-tree grammar is an $\LIAPLUS$ grammar if its alphabet is $\{ \textrm{Plus},
\textrm{Num}(c), \textrm{Var}(x), \textrm{NegVar}(x) \}$.

\iffull
We next show how any LIA grammar can be rewritten into
an LIA$^+$  grammar that accepts terms that are semantically equivalent to those in the original grammar.
We introduce a grammar-rewriting function $\mkadditive$ that
recursively pushes negations to the leaves of the terms in an LIA
grammar $G$, to produce an LIA$^+$ grammar $\mkadditive(G)$ that does
not contain the Minus symbol.
Given an LIA grammar $G=(N,\Sigma,S^\LIA,\delta)$, 
we define the rewritten grammar $\mkadditive(G)$ as the tuple $(N\cup N^-,\Sigma^\LIAPLUS,S,\delta^-)$ where 
$\delta^-$ is defined as follows.
For every production $X\to \alpha\in\delta$:
\begin{itemize}
	\item If $\alpha=\textrm{Plus}(X_1,X_2)$, then
	 $\delta^-$ contains the productions $X^-\to\textrm{Plus}(X_1^-,X_2^-)$ and $X\to\textrm{Plus}(X_1,X_2)$;
	\item If $\alpha=\textrm{Minus}(X_1,X_2)$, then
	 $\delta^-$ contains the productions $X^-\to\textrm{Plus}(X_1^-,X_2)$ and $X\to\textrm{Plus}(X_1,X_2^-)$;
	\item If $\alpha=\textrm{Num}(c)$, then
	 $\delta^-$ contains the productions $X\to \textrm{Num}(c)$ and $X^-\to \textrm{Num}(-c)$.
	\item If $\alpha=\textrm{Var}(x)$, then
	 $\delta^-$ contains the productions $X\to \textrm{Var}(x)$ and $X^-\to \textrm{NegVar}(x)$.	
\end{itemize} 
\else
	The next example shows how
	our algorithm uses a function $\mkadditive$ to push negations to the leaves of LIA terms to yield an LIA$^+$ grammar.
\fi
\iffull
It is trivial to see that the grammar $\mkadditive(G)$ only produces terms in LIA$^+$.
\fi
\begin{example}
Consider the LIA grammar $G$:
\[
\begin{array}{rcl}
\Start	&::= &\textrm{Minus}(\Start,\Start)\mid 1 \mid x\\
\end{array}
\]
The following LIA$^+$ grammar $h(G)$ is equivalent to $G$:
\[
\begin{array}{rcl}
\Start	&::= &\textrm{Plus}(\Start,\Start^-)\mid \textrm{Num}(1) \mid \textrm{Var}(x)\\
\Start^-	&::= &\textrm{Plus}(\Start^-,\Start)\mid\textrm{Num}(-1) \mid \textrm{NegVar}(x).
\end{array}
\]
\end{example}

\iffull
The following lemma shows that the original and the rewritten grammars produce
semantically equivalent terms.
\begin{lemma}
\label{Lem:rewritesound}
	An LIA grammar $G$ is semantically equivalent to the LIA$^+$ grammar $\mkadditive(G)$, i.e.,
	\begin{eqnarray}
		&\left(\forall e\in L(G)\exists e'\in L(\mkadditive(G)).\sem{e}=\sem{e'}\right)\\
		&\wedge\left(\forall e'\in L(\mkadditive(G))\exists e\in L(G).\sem{e}=\sem{e'}\right).
	\end{eqnarray}
%$$\forall e.e\in L(G)\leftrightarrow \exists e'\in L(h(G)).\sem{e}=\sem{e'}$$
\end{lemma}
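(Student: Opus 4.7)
The plan is to prove the two containment directions via a simultaneous structural induction that strengthens the statement from start-nonterminal equivalence to an invariant on every pair $(X, X^-)$. Specifically, I will prove the following strengthened claim for every nonterminal $X$ of $G$:
\begin{enumerate}
\item for every $e \in L_G(X)$ there exists $e^+ \in L_{\mkadditive(G)}(X)$ with $\sem{e^+} = \sem{e}$, and there exists $e^- \in L_{\mkadditive(G)}(X^-)$ with $\sem{e^-} = -\sem{e}$;
\item for every $e^+ \in L_{\mkadditive(G)}(X)$ there exists $e \in L_G(X)$ with $\sem{e} = \sem{e^+}$, and for every $e^- \in L_{\mkadditive(G)}(X^-)$ there exists $e \in L_G(X)$ with $\sem{e} = -\sem{e^-}$.
\end{enumerate}
The desired lemma then follows by instantiating $X = S$ and reading off the ``positive'' halves of both statements.

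For the forward direction (1), I would induct on the structure of the derivation of $e$ in $G$. For each of the four production shapes ($\textrm{Plus}$, $\textrm{Minus}$, $\textrm{Num}(c)$, $\textrm{Var}(x)$) I would construct the witnesses $e^+$ and $e^-$ using the corresponding productions added to $\delta^-$ and check the semantic equality using the arithmetic identities $-(a+b)=(-a)+(-b)$ and $-(a-b)=(-a)+b = a+(-b)$. For example, if $e = \textrm{Minus}(e_1,e_2)$ with $e_1 \in L_G(X_1)$ and $e_2 \in L_G(X_2)$, then by the induction hypothesis there exist $e_1^+ \in L_{\mkadditive(G)}(X_1)$ with $\sem{e_1^+}=\sem{e_1}$ and $e_2^- \in L_{\mkadditive(G)}(X_2^-)$ with $\sem{e_2^-}=-\sem{e_2}$; setting $e^+ \eqdef \textrm{Plus}(e_1^+, e_2^-)$ gives a tree derivable via the production $X\to\textrm{Plus}(X_1,X_2^-)$, with $\sem{e^+}=\sem{e_1}-\sem{e_2}=\sem{e}$. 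The construction for $e^-$ uses the dual production $X^-\to\textrm{Plus}(X_1^-,X_2)$. The leaf cases for $\textrm{Num}(c)$ and $\textrm{Var}(x)$ are immediate from the paired productions in $\delta^-$.

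For the reverse direction (2), I would induct on the structure of a derivation in $\mkadditive(G)$. The subtlety is that derivations of $X$ and $X^-$ are entangled: an $X^-$-derivation may invoke $X_1$-derivations (from the rule generated by a Minus production) and vice versa, so both halves of (2) must be established simultaneously by a single induction over $\mkadditive(G)$-derivations indexed by whether the root nonterminal is ``positive'' or ``negated.'' Each case matches one of the rules introduced by the construction of $\delta^-$, and each time I would map the derivation back to a tree in $L(G)$ built using either the original $\textrm{Plus}$ production or the original $\textrm{Minus}$ production of $G$, as appropriate, and verify the semantic equality by the same arithmetic identities.

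The main obstacle is purely bookkeeping: keeping the positive/negated sides of the induction straight and correctly identifying, for each production of $\mkadditive(G)$, which original production of $G$ it descended from (in particular, noting that each $\textrm{Minus}$ production of $G$ spawns \emph{two} productions in $\delta^-$, one under $X$ and one under $X^-$, and both must be inverted correctly). Once the simultaneous invariant is set up, every case is a one-line arithmetic identity combined with an appeal to the induction hypothesis, so there is no deep mathematical content beyond the sign-pushing identity.
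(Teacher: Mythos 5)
Your proposal is correct and follows essentially the same route as the paper: both strengthen the statement to a simultaneous invariant over every pair $(X, X^-)$, proceed by structural induction on derivations with a case per production shape, and discharge the $\textrm{Minus}$ case by pairing the positive subderivation of $X_1$ with the negated subderivation of $X_2^-$ via the sign-pushing identities. If anything, you are slightly more careful than the paper in spelling out the reverse direction as its own induction over $\mkadditive(G)$-derivations (the paper dismisses the symmetric cases as ``similar''), but the underlying argument is identical.
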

\begin{proof}
	
	We start by proving the following result, which states that the terms produced by some nonterminal $X$ in $G$ are equivalent to terms produced by the corresponding nonterminal $X$ in $\mkadditive(G)$, and 
	to the negation of terms produced by the corresponding negative nonterminal $X^-$ in $\mkadditive(G)$:
	\rone $\left(\forall e\in L_G(X)\exists e'\in L_{\mkadditive(G)}(X).\sem{e}=\sem{e'}\right)\wedge\left(\forall e'\in L_{\mkadditive(G)}(X)\exists e\in L_G(X).\sem{e}=\sem{e'}\right),$ and \rtwo $\left(\forall e\in L_G(X)\exists e'\in L_{\mkadditive(G)}(X^-).\sem{e}=\sem{e'}\right)\wedge\left(\forall e'\in L_{\mkadditive(G)}(X^-)\exists e\in L_G(X).\sem{e}=-\sem{e'}\right).$
	
	We proceed by induction on $e$.
	The base case are $e=\textrm{Num}(c)$ and $e=\textrm{Var}(x)$. According to the definition of $h$, there exists productions $X\to\textrm{Num}(c)$ (resp., $X\to\textrm{Var}(x)$) and $X\to\textrm{Num}(-c)$ (resp., $X\to\textrm{NegVar}(x)$) in $h(G)$. Note that $\sem{\textrm{Num}(c)}=-\sem{\textrm{Num}(-c)}$ and $\sem{\textrm{Var}(x)}=-\sem{\textrm{NegVar}(x)}$.
	Hence, the base case holds.
	
	Now the induction step is
	\begin{itemize}
		\item Assume $e=\textrm{Plus}(e_1,e_2)$ where $e_1$ and $e_2$ are terms produced by nonterminals $X_1$ and $X_2$, respectively. 
		According to the induction hypothesis, $X_1$ in $h(G)$ can produce a term $e_1'$ equivalent to $e_1$ and $X_2$ in $h(G)$ can produce a term $e_2'$ equivalent $e_2$. Therefore the nonterminal $X$ in $h(G)$ can produce $\textrm{Plus}(e_1',e_2')$ whose semantic is equivalent to $e$. The analysis for $X^-$ in $h(G)$ is similar.
		\item Assume $e=\textrm{Minus}(e_1,e_2)$ where $e_1$ and $e_2$ are terms produced by nonterminals $X_1$ and $X_2$, respectively. 
		According to the induction hypothesis, $X_1$ in $h(G)$ can produce a term $e_1'$ equivalent to $e_1$ and $X^-_2$ in $h(G)$ can produce a term $e_2'$ such that $\sem{e_2'}=-\sem{e_2}$ Therefore the nonterminal $X$ in $h(G)$ can produce $\textrm{Plus}(e_1',e_2')$ whose semantic is equivalent to $e$, i.e., $\sem{\textrm{Plus}(e_1',e_2')}=\sem{e_1'}-\sem{e_2}=\sem{\textrm{Minus}(e_1,e_2)}$. The analysis for $X^-$ in $h(G)$ is similar.
	\end{itemize}
At last, terms produced by $\Start$ in $G$ are semantically equivalent to terms produced by $\Start$ in $h(G)$, and hence $G$ is semantically equivalent to $\mkadditive(G)$
\end{proof}
 \fi

\subsection{Grammar Flow Analysis Using Semi-Linear Sets}
\label{Se:SemiLinearSets}

Thanks to \sectref{commutativeGrammars}, we can assume that the
\sygus grammar $G$ only produces LIA$^+$ terms.
In this section, we use grammar-flow analysis to generate equations
such that  the solutions to the  equations assign a semi-linear set to each
nonterminal $X$ that, for the finitely many examples in $E$,
\emph{exactly} describes the set of possible values produced by
any term in $L_G(X)$.

We start by defining the complete combine semilattice
$(\slint, \oplus)$ of \emph{semi-linear sets}
(see \cite[\S2.3.3]{EsparzaKL10} and \cite[\S3.4.4]{POPL:BET03}).
We then use them, together with the set of examples $E$,
to define a specific family of GFA problems:
$\calG_E = (G, \slint)$, where $G = (N,\Sigma,S,\delta)$ is an
$\LIAPLUS$ grammar. For simplicity, we use notation $\slint$ for both the semilattice and its domain
%For every nonterminal $X$ in grammar $G$,
%the solution to the GFA problem is a semi-linear set $m_{\calG_E}(X)$
%that \emph{exactly} describes the set of possible output vectors
%produced by evaluating each term in $L_G(X)$ 
%on
%the examples in $E$.

In the terminology of abstract interpretation, $\slint$ is
an abstract domain that we can use to represent, for every nonterminal $X$,
the set of possible output vectors produced by evaluating each term in $L_G(X)$ on
the examples in $E$.
Moreover, the representation is \emph{exact}; i.e.,
$\gamma(m_{\calG_E}(X)) = \{ \sem{e}_E \mid e \in L_G(X) \}$
where $\gamma$ denotes the usual operation of concretization.

\begin{definition}[Semi-linear Set]
  A \textit{linear  set} $\linset{\vec{u},\{\vec{v}_1, \cdots, \vec{v}_n\}}$ denotes
  the set of integer vectors $\{\vec{u}+\lambda_1\vec{v}_1 + \cdots + \lambda_n\vec{v}_n \mid \lambda_1,\ldots,\lambda_n \in \nat\}$,
  where $\vec{u},\vec{v}_1,...,\vec{v}_n\in\mathbb{Z}^d$ and $d$ is the dimension of the linear set.
  A \textit{semi-linear set} is a finite union $\bigcup_i \linset{\vec{u}_i,V_i}$ of linear sets,
 also denoted by $\{ \langle \vec{u}_i, V_i \rangle \}_i$.

  The \emph{concretization} of a semi-linear set $\textit{sl} = \{ \langle \vec{u}_i, V_i \rangle \}_i$,
  denoted by $\gamma(\textit{sl})$, is the set of vectors
  \[
    \bigcup_i \{\vec{u}_i + \lambda_{1,i}\vec{v}_{1,i} + \cdots + \lambda_{n,i}\vec{v}_{n,i} \mid \lambda_{1,i}, \ldots, \lambda_{n,i} \in \nat\}.
  \]
\end{definition}

\iffull
Semi-linear sets were originally used in a well-known result in
formal-language theory: Parikh's theorem~\cite{Parikh66}.
Parikh's theorem states that, given a context-free grammar $G$ with
terminals $(t_1,\ldots, t_n)$, if one looks only at the number of
occurrences of each terminal symbol in each word in a context-free
language, without regard to their order---i.e., each word $w$ is
represented by a vector $v_w = \B{c_1,\ldots, c_n}$, which denotes that each terminal $t_i$
appears exactly $c_i$ times in $w$---the set of vectors $\{v_w \mid w\in \lang(G)\}$
is representable by a semi-linear set.
If a grammar for an LIA \sygus problem only uses addition (which is a commutative operation),
we can represent any term in the language of the grammar by simply counting the number of times
each terminal (i.e., a constant or a variable) appears in the term.
Consequently, we can use a domain of values similar to the ones used
in Parikh's theorem to represent the set of possible terms (or, more precisely, their
semantics) as a semi-linear set.

While the details of Parikh's theorem are not relevant to this paper,
the core idea behind its proof is that grammars over commutative operators can be transformed
into regular languages and therefore regular expressions.
Then, to compute the set of all possible count vectors that the grammar can produce
one needs to ``evaluate'' the regular expressions using operators analogous to
the regular-expression concatenation, union, and star.
For semi-linear sets, these operators are $\otimes$, $\oplus$ and $\ostar$,
defined as follows \cite[\S3.4.4]{POPL:BET03}:
\else
To apply Newton's method for solving equations (\lemref{NPAForCISemirings}),
we need a commutative idempotent semiring over semi-linear sets.
Fortunately, such a semiring exists \cite[\S3.4.4]{POPL:BET03}, with
the operators $\otimes$, $\oplus$ and $\ostar$ defined as follows:
\fi
\begin{align}
  \hspace{-1mm}\{\langle\vec{u}_{1,i},V_{1,i}\rangle\}_i\oplus\{\langle\vec{u}_{2,j},V_{2,j}\rangle\}_j  & =  \{\langle\vec{u}_{1,i},V_{1,i}\rangle\}_i\cup\{\langle\vec{u}_{2,j},V_{2,j}\rangle\}_j   \notag \\
  \label{Eq:SLExtend}
  \hspace{-1mm}\{\langle\vec{u}_{1,i},V_{1,i}\rangle\}_i\otimes\{\langle\vec{u}_{2,j},V_{2,j}\rangle\}_j  & =  \bigcup_{i,j}\{\langle \vec{u}_{1,i}+\vec{u}_{2,j},V_{1,i}\cup V_{2,j}\rangle\}\ \ \ \ \ \ \ \ \   \notag
\\
  (\{\langle\vec{u}_{i},V_{i}\rangle\}_i)^{\ostar} & =  \{ \langle \vec{0},\bigcup_{i}(\{\vec{u}_i\}\cup V_i)\rangle \}  \raisetag{3mm}
\end{align}
The semi-linear sets 
$\mathbf{0} \eqdef \emptyset$ and $\mathbf{1} \eqdef \{\langle \vec{0},\emptyset\rangle\}$ are the identity
elements for $\oplus$ and $\otimes$, respectively.
We use $(\slint, \oplus)$ to denote the complete combine semilattice of semi-linear sets\iffull with the least element $\mathbf{0}$\fi.

We define the GFA problem
$\calG_E = (G, \slint)$ by giving the following
interpretations to $\LIAPLUS$ operators:
\begin{eqnarray}
  \label{Eq:LIAPLUSGFAPlus}
  \semgamma{\textrm{Plus}}_E(sl_1,sl_2) \hspace{-3mm} & = &\hspace{-3mm} sl_1\otimes sl_2 \\
  \label{Eq:LIAPLUSGFANum}
  \semgamma{\textrm{Num}(c)}_E         \hspace{-3mm} & = &\hspace{-3mm} \{\langle\B{c,\cdots,c},\emptyset\rangle\} \\
  \label{Eq:LIAPLUSGFAVar}
  \semgamma{\textrm{Var}(x)}_E         \hspace{-3mm} & = &\hspace{-3mm} \{\langle \mu_E(x),\emptyset\rangle\} \\
  \label{Eq:LIAPLUSGFANegVar}
  \semgamma{\textrm{NegVar}(x)}_E      \hspace{-3mm} & = &\hspace{-3mm} \{\langle -\mu_E(x),\emptyset\rangle\}
\end{eqnarray}
Now consider the combine-over-all-derivations value $m_{\calG_E}(X)=\bigoplus_{e \in L_G(X)} \semgamma{e}_E$
for the grammar-flow-analysis problem $\calG_E$.
For an arbitrary tree $e \in L_{G}(X)$, in the computation of $\semgamma{e}_E$ via
\eqseqref{LIAPLUSGFAPlus}{LIAPLUSGFANegVar}, there is never any use of the $\oplus$ operation of $\slint$.
Consequently, the computation of $\semgamma{e}_E$ produces a semi-linear set that consists of
a \emph{single vector}---the same vector, in fact, that is produced by the computation of
$\sem{e}_E$ \iffull via \eqseqref{LIAPLUSPlus}{LIAPLUSNegVar}.
\else shown in \exref{LIA}.
\fi 
\begin{changebar}
In particular, $\oplus$ two lines above \eqref{SLExtend} preserves singleton sets, and hence for singleton sets, $\otimes$ one line above
\eqref{SLExtend} emulates 
\end{changebar}\iffull\eqref{LIAPLUSPlus}.
\else
$\sem{\textrm{Plus}}_E$.
\fi
Therefore, the combine-over-all-derivations value $m_{\calG_E}(X) = \bigoplus_{e \in L_G(X)} \semgamma{e}_E$
is exactly the set of vectors $\{ \sem{e}_E \mid e \in L_G(X) \}$.
In other words, $m_{\calG_E}(X)$ is an \emph{exact} abstraction of the $\sem{\cdot}_E$ semantics of the
terms in $L_{G}(X)$, i.e., $\gamma(m_{\calG_E}(X)) = \{ \sem{e}_E \mid e \in L_G(X) \}$.
Because  $\semgamma{\textrm{Plus}}_E$ is infinitely distributive over $\oplus$
(\cite[Defn.\ 2.1 and \S2.3.3]{EsparzaKL10}),
$m_{\calG_E}(X) = n_{\calG_E}(X)$ holds by \theoref{Coincidence},
and thus we can compute $m_{\calG_E}(X)$ by solving a set of equations
in which, for each $X_0 \in N$, there is an equation of the form
\begin{eqnarray}
  \label{Eq:eqs}
  n_{\calG_E}(X_0)\hspace{-1mm} = \Combine_{X_0 \to g(X_1, \ldots, X_k) \in \delta}\hspace{-1mm} \semgamma{g}_E(n_{\calG_E}(X_1), \ldots, n_{\calG_E}(X_k)).
\end{eqnarray}

\iffull
The argument given in the previous paragraph is captured by the following lemma:

	\begin{lemma}
		\label{Le:slisexact}
		 Given an LIA$^+$ grammar $G = (N,\Sigma,S,\delta)$, a finite set of examples $E$,
		 $\calG_E=(G, \Se)$ is an exact abstraction of the semantics of the languages $L_G(X)$, for all $X \in N$ (with respect
		 to LIA and $E$).
	\end{lemma}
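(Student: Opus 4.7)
The plan is to prove by structural induction on trees $e \in L_G(X)$ the stronger statement that $\semgamma{e}_E = \{\langle \sem{e}_E, \emptyset\rangle\}$, i.e., for every term $e$ the abstract semantics evaluates to a \emph{singleton} linear set with empty period set whose constant part is exactly the concrete output vector. Once this is in hand, the lemma follows immediately by unfolding the combine-over-all-derivations value: since $\oplus$ on semi-linear sets is just set union, we get
\[
\gamma(m_{\calG_E}(X)) \;=\; \gamma\!\left(\bigoplus_{e \in L_G(X)} \{\langle \sem{e}_E, \emptyset\rangle\}\right) \;=\; \{\sem{e}_E \mid e \in L_G(X)\},
\]
which is precisely the definition of exact abstraction from Definition~\ref{def:gfa-from-examples}.

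For the base cases, the claim is read off directly from \eqref{LIAPLUSGFANum}, \eqref{LIAPLUSGFAVar}, and \eqref{LIAPLUSGFANegVar}: each of $\semgamma{\textrm{Num}(c)}_E$, $\semgamma{\textrm{Var}(x)}_E$, and $\semgamma{\textrm{NegVar}(x)}_E$ is literally defined as a singleton linear set with empty period whose constant vector equals the corresponding concrete denotation of Example~\ref{Exa:LIA} (extended to $\textrm{NegVar}$). For the inductive step on $e = \textrm{Plus}(e_1, e_2)$, I would apply the induction hypothesis to conclude $\semgamma{e_i}_E = \{\langle \sem{e_i}_E, \emptyset\rangle\}$, then invoke \eqref{LIAPLUSGFAPlus} and unfold the definition of $\otimes$ on semi-linear sets. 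Because both operands are singletons with empty period sets, the double union in the definition of $\otimes$ collapses, and the union of empty period sets remains empty, yielding $\{\langle \sem{e_1}_E + \sem{e_2}_E, \emptyset \rangle\} = \{\langle \sem{e}_E, \emptyset \rangle\}$.

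The only point that deserves explicit mention—more a sanity check than a genuine obstacle—is that $\oplus$ is never triggered during the evaluation of a \emph{single} derivation tree $e$; it enters only when aggregating over the possibly infinite set $L_G(X)$. This is visible from the shapes of the right-hand sides of \eqref{Eq:LIAPLUSGFAPlus}--\eqref{Eq:LIAPLUSGFANegVar}: no production function uses $\oplus$. Thus no periods ever get introduced by $\oplus$ mid-derivation, so the inductive invariant that $\semgamma{e}_E$ remains a singleton with empty period set is preserved. With exactness established, the hypotheses of Theorem~\ref{The:Coincidence} (infinite distributivity of $\semgamma{\textrm{Plus}}_E$ over $\oplus$, as noted in the text) additionally give $m_{\calG_E}(X) = n_{\calG_E}(X)$, so the equational fixed point computed in practice inherits the exactness guarantee.
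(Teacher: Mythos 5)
Your proposal is correct and follows essentially the same route as the paper's proof: both argue that no production function ever invokes $\oplus$, so the abstract evaluation of any single tree $e$ yields a singleton linear set with empty period whose constant part is exactly $\sem{e}_E$ (your structural induction just makes the paper's informal observation explicit), and exactness of $m_{\calG_E}(X)$ then follows by taking the union over all derivations. The closing remark about \theoref{Coincidence} matches the paper's surrounding discussion as well.
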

	\begin{proof}We can show that for any expression
		$e$, the abstract semantics  $\semgamma{e}_E$ is always a singleton set $\{\sem{e}_E\}$,
		where the element of the singleton set is exactly the semantics of $e$. 
		For an arbitrary tree $e \in L_{G}(X)$, in the computation of $\semgamma{e}_E$ via
		\eqseqref{LIAPLUSGFAPlus}{LIAPLUSGFANegVar}, there is never any use of the $\oplus$ operation of $\slint$.
		Consequently, the computation of $\semgamma{e}_E$ produces a semi-linear set that consists of
		a \emph{single vector}---the same vector, in fact, that is produced by the computation of
		$\sem{e}_E$ \iffull via \eqseqref{LIAPLUSPlus}{LIAPLUSNegVar}.
		\else shown in \exref{LIA}.
		\fi 
		In particular, \eqref{SLExtend} preserves singleton sets, and hence for singleton sets,
		\eqref{SLExtend} emulates \iffull\eqref{LIAPLUSPlus}.
		\else
			$\sem{\textrm{Plus}}_E$.
		\fi
		Therefore, the combine-over-all-derivations value $m_{\calG_E}(X) = \bigoplus_{e \in L_G(X)} \semgamma{e}_E$
		is exactly the set of vectors $\{ \sem{e}_E \mid e \in L_G(X) \}$.
		In other words, $m_{\calG_E}(X)$ is an \emph{exact} abstraction of the $\sem{\cdot}_E$ semantics of the
		terms in $L_{G}(X)$, i.e., $\gamma(m_{\calG_E}(X)) = \{ \sem{e}_E \mid e \in L_G(X) \}$.

		Therefore, $\calG_E$ is an exact abstraction of the semantics of $L_G(X)$.
	\end{proof}
\fi

\begin{example}
Consider again the $\LIAPLUS$ grammar $G_1$ from \eqref{grammar1}, written out in
the expanded form given in \footnoteref{ExpandedGrammar}\iffull:
\[
\begin{array}{c}
  \Start	 ::=  \textrm{Plus}(\SOne, \Start) \mid \textrm{Num}(0) 
  \hspace{5mm}
  \SOne ::= \textrm{Plus}(\STwo, \textrm{Var}(x))\hspace{5mm}\\
   \STwo ::= \textrm{Plus}(\SThree, \textrm{Var}(x))\hspace{5mm}
  \SThree ::= \textrm{Var}(x).
\end{array}
\]
\else
.
\fi
Let $E$ be $\{ 1, 2 \}$, and thus $\mu_E(x) = \B{1,2}$.
The equation system for the GFA problem $\calGOne_E$ is as follows:
\[
  \begin{array}{@{\hspace{0ex}}r@{\hspace{1.0ex}}c@{\hspace{1.0ex}}l@{\hspace{0ex}}}
    &&\hspace{-4mm}n_{\calGOne_E}(\Start)   =  n_{\calGOne_E}(\SOne) \otimes n_{\calGOne_E}(\Start) \oplus \{\langle (0,0), \emptyset \rangle\}    \\
        &&\hspace{-4mm}n_{\calGOne_E}(\SOne)    =  n_{\calGOne_E}(\STwo) \otimes \{\langle (1,2), \emptyset \rangle\}  \\
        &&\hspace{-4mm}n_{\calGOne_E}(\STwo)    =  n_{\calGOne_E}(\SThree) \otimes \{\langle (1,2), \emptyset \rangle\} \enspace \hspace{1.5mm}
        n_{\calGOne_E}(\SThree)  =  \{\langle (1,2), \emptyset \rangle\}
  \end{array}
\]
which has the solution
\[
  \begin{array}{@{\hspace{0ex}}r@{\hspace{1.0ex}}c@{\hspace{1.0ex}}l@{\hspace{3.0ex}}r@{\hspace{1.0ex}}c@{\hspace{1.0ex}}l@{\hspace{0ex}}}
    n_{\calGOne_E}(\Start)  & = & \{\langle (0, 0), \{ (3,6) \} \rangle\}  &    n_{\calGOne_E}(\STwo)   & = & \{\langle (2,4), \emptyset \rangle\} \\
    n_{\calGOne_E}(\SOne)   & = & \{\langle (3,6), \emptyset \rangle\}  &     n_{\calGOne_E}(\SThree) & = & \{\langle (1,2), \emptyset \rangle\}.
  \end{array}
\]
The concretizations of semi-linear sets in the solution are
\[
  \begin{array}{@{\hspace{0ex}}r@{\hspace{1.0ex}}c@{\hspace{1.0ex}}l@{\hspace{0ex}}}
    \gamma(n_{\calGOne_E}(\Start))  & = & \{(0, 0) + \lambda(3,6) \mid \lambda \in \nat \} \}  \\
    \gamma(n_{\calGOne_E}(\SOne))   & = & \{ (3,6) \}  \qquad \gamma(n_{\calGOne_E}(\STwo))    =  \{ (2,4) \} \\	
    \gamma(n_{\calGOne_E}(\SThree))  & = &  \{ (1,2) \}.
  \end{array}
\]
\end{example}

%\subsection{Solving Equations for LIA$^+$ Grammars}
%\label{Se:equationsinliaplus}

The following proposition shows that the equations generated in \eqref{eqs} can be solved using 
Newton's method.

\begin{proposition}
  $(\slint,\oplus,\otimes,\mathbf{0},\mathbf{1})$ is a commutative,
  idempotent, $\omega$-continuous semiring. 
\end{proposition}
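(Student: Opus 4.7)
The plan is to verify the axioms in the natural order, reducing each to well-known properties of vector addition, set union, and Cartesian-product-style indexing. I would begin with $\oplus$: since $sl_1 \oplus sl_2$ is defined as the union of the index families describing $sl_1$ and $sl_2$, associativity, commutativity, idempotence, and the identity $\mathbf{0} = \emptyset$ of $\oplus$ all reduce directly to the corresponding properties of set-theoretic union. Idempotence already gives the combine-semilattice structure, and the induced order $sl_1 \sqsubseteq sl_2$ (defined by $sl_1 \oplus sl_2 = sl_2$) corresponds, up to representational equivalence, to inclusion of concretizations, which is a genuine partial order.

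Next, for $\otimes$, observe that the definition
\[
  sl_1 \otimes sl_2 = \bigcup_{i,j} \{\langle \vec{u}_{1,i} + \vec{u}_{2,j},\, V_{1,i} \cup V_{2,j}\rangle\}
\]
is built pointwise, over pairs of linear components, out of vector addition of base vectors and union of period sets; both of these underlying operations are associative and commutative, so the same holds for $\otimes$. The identity $\mathbf{1} = \{\langle \vec{0}, \emptyset\rangle\}$ acts correctly because $\vec{0}$ is the identity for vector addition and $\emptyset$ for set union. Absorption $\mathbf{0} \otimes x = \mathbf{0}$ holds since an empty Cartesian index produces the empty union. Distributivity $a \otimes (b \oplus c) = (a \otimes b) \oplus (a \otimes c)$ is obtained by expanding both sides: on the left, the second-factor index ranges over the union of the index sets of $b$ and $c$, which by distribution of Cartesian product over union splits exactly into the two halves appearing on the right. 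Together these establish the commutative idempotent semiring axioms.

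The main obstacle will be $\omega$-continuity. The partial-order requirement is immediate from the translation to concretization inclusion noted above. For the existence of suprema of $\omega$-chains inside $\slint$, and for the infinite-distributivity laws of $\otimes$ over countable combines, I would appeal to the detailed developments in \cite{EsparzaKL10} and \cite{POPL:BET03}. The essential point is that the $\omega$-chains arising from $\oplus$/$\otimes$-polynomials (in particular the partial sums $\bigoplus_{i \le n} c_i$ defining $\wsum c_i$ and the iterates defining $a^{\ostar}$) have concretizations whose increasing unions can themselves be represented as semi-linear sets; infinite distributivity of $\otimes$ over such suprema then lifts from finite distributivity by commuting $\otimes$ past each finite combine and passing to the supremum, yielding the three $\omega$-continuity identities.
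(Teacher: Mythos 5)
Your verification of the finite semiring axioms ($\oplus$ and $\otimes$ associative/commutative, $\oplus$ idempotent, the identities $\mathbf{0}$ and $\mathbf{1}$, annihilation, and distributivity) is correct, and it is worth noting that the paper itself offers \emph{no} proof of this proposition at all---it is asserted, with the operator definitions credited to \citet{POPL:BET03} and the domain to \citet{EsparzaKL10}. So on the algebraic part you have supplied more than the paper does, and your reductions to properties of vector addition and set union are exactly the right ones.

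The gap is in $\omega$-continuity, and it is a real one. First, the definition requires \emph{every} $\omega$-chain to have a supremum, whereas you only argue for chains arising as partial sums of $\oplus/\otimes$-polynomial iterates. Second, the claim you lean on---that increasing unions of semi-linear concretizations "can themselves be represented as semi-linear sets"---is false in general. In dimension one, let $a_n = \{\linset{2^k,\emptyset} \mid k \le n\}$; each $a_n$ is a finite set of points and hence semi-linear, the chain is increasing under the order induced by $\oplus$, but $\bigcup_n \gamma(a_n) = \{2^k \mid k \in \nat\}$ is not ultimately periodic and hence not semi-linear, and one can check this chain has no \emph{least} semi-linear upper bound either (any ultimately periodic superset contains a removable non-power of two). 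So the domain of \emph{finite} unions of linear sets is not literally closed under suprema of $\omega$-chains, under either the descriptor-inclusion order or the concretization order. The reading under which the proposition holds---and the one the cited references actually use---is to complete the domain under arbitrary combines, i.e., to work inside the $\omega$-continuous powerset semiring $(2^{\integ^d}, \cup, +)$, of which $\slint$ is a sub-semiring closed under $\oplus$, $\otimes$, and $\ostar$; one then proves separately that the Kleene star and the Newton iterates of semi-linear sets remain semi-linear. That closure fact is the substantive content your sketch needs to isolate and cite; the blanket claim about increasing unions cannot be the justification.
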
 

\iffull
Moreover, $(\slint,\oplus,\otimes,\mathbf{0},\mathbf{1})$ has infinite ascending chains;
consequently, \lemref{NPAForCISemirings} is directly relevant to our setting.
Henceforth, we use the term ``semiring''---and symbol $\Se$---to mean a
\emph{commutative, idempotent, $\omega$-continuous semiring}.
\fi

\iffull
Before concluding this section, we analyze the size of the semi-linear set computed by the NPA method
when solving equations generated by LIA$^+$ grammars.
\fi
For a semi-linear set $sl=\{\linset{\vec{u}_i,V_i}_i\}$, let its \emph{size} be $\sum_i(|V_i|+1)$. 
%\iffull
%\begin{theorem}
%\label{The:sizesl}
%Given an LIA (or LIA$^+$)  grammar $G=(N,\Sigma,S,\delta)$, a finite set of examples $E$ and a nonterminal $X\in N$, 
%NPA yields a semi-linear set $n_{\calG_E}(X)$  of size $2^{2^{O(n\log (n))}}$ where $n:=|N|$.
%\end{theorem}
%\heping{proof missing}
%\else
%For a given LIA grammar $G=(N,\Sigma,S,\delta)$ and a finite set of examples $E$, the semi-linear set $n_{\calG_E}(S)$ computed by NPA has size $2^{2^{O(n\log (n))}}$, where $n:=|N|$.
%\fi
Given an LIA   grammar , a finite set of examples $E$ and a nonterminal $X\in N$, 
the semi-linear set $n_{\calG_E}(X)$ yielded by NPA can contain exponentially many linear sets \cite{kopczynski2010parikh}. 
%In \sectref{evaluation}, we evaluate NPA on real \sygus benchmarks and demonstrate that how the size of \sygus problems effect the size of computed semi-linear sets.

\subsection{Checking Unrealizability}
\label{Se:LIAchecksat}

We now show how symbolic concretization for $\slint$ can be used
to prove that no element $\vec{o}$ in $n_\calG(\Start)$ 
satisfies the specification $\psi^E(\vec{o})$ of the \sygus problem.
The logic $\LIA$  supports symbolic concretization for $\slint$.
For instance, for a linear set $\{\langle\vec{u},\{\vec{v}_1,\ldots,\vec{v}_n\}\rangle\}$,
its symbolic concretization $\gammaHat( \langle\vec{u},\{\vec{v}_1,\ldots,\vec{v}_n\}\rangle, \vec{o} )$ is defined as follows:
\[
  \exists \lambda_1 \in \nat, \ldots, \lambda_n \in \nat . (\vec{o} = \vec{u} + \lambda_1\vec{v}_1 + \cdots + \lambda_n\vec{v}_n).
\]
Thus, the symbolic concretization for a semi-linear set  is:
\begin{eqnarray}
	\label{Eq:gammahatsl}
	\gammaHat( \{\langle\vec{u}_i,V_i\rangle\}_i, \vec{o}) \eqdef \bigvee_i \gammaHat(\langle\vec{u}_i,V_i\rangle, \vec{o}).
\end{eqnarray}
Note that $\vec{o}$ is shared among all disjuncts.
\iffull
The set of satisfying assignments to $\vec{o}$ consist of exactly the vectors in
$\gamma( \{\langle\vec{u}_i,V_i\rangle\}_i)$.
\fi

Our decidability result follows directly from \theoref{GFAisSoundAndComplteForUnrealizable}.
\begin{theorem}
  Given an LIA \sygus problem $\sy$ and a finite set of examples $E$,
  it is decidable whether the \sygus problem $\sy^E$ is realizable.
\end{theorem}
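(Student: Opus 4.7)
The plan is to assemble the decidability result by chaining together the machinery developed in Sections 4 and 5, showing that each step in \algref{unrealizability} can be carried out effectively and that the final SMT check is a decidable query in LIA.

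First, I would reduce to the LIA$^+$ case by applying the rewriting $h$ of \sectref{commutativeGrammars} to the grammar $G$ of $\sy$, obtaining an LIA$^+$ grammar $h(G)$ whose language is semantically equivalent to $L(G)$ on every input, hence in particular on the examples $E$. This step is crucial because the semi-linear-set semiring is commutative, and we need commutativity to apply Newton's method; the Minus operator, being non-commutative, is incompatible with a single semiring operation $\otimes$. Since $L(h(G))$ and $L(G)$ realize the same set of functions, $\sy^E$ is realizable iff the corresponding problem over $h(G)$ is.

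Next, I would form the GFA problem $\calG_E = (h(G), \slint)$ using the interpretations in \eqseqref{LIAPLUSGFAPlus}{LIAPLUSGFANegVar}, together with the projection $\mu_E$ on variable terminals. The argument given right after those equations shows that $m_{\calG_E}(X)$ is an \emph{exact} abstraction: $\gamma(m_{\calG_E}(X)) = \{ \sem{e}_E \mid e \in L_{h(G)}(X) \}$. Since $\semgamma{\textrm{Plus}}_E$ is infinitely distributive over $\oplus$, \theoref{Coincidence} gives $m_{\calG_E}(X) = n_{\calG_E}(X)$, so it suffices to solve the equation system \eqref{eqs}. Because $(\slint,\oplus,\otimes,\mathbf{0},\mathbf{1})$ is a commutative, idempotent, $\omega$-continuous semiring, \lemref{NPAForCISemirings} guarantees that Newton's method reaches the least fixed point in at most $|N|$ iterations, each of which is effectively computable since the semi-linear set operations $\oplus$, $\otimes$, and $\ostar$ (one line above \eqref{SLExtend}) are all finitely representable and computable.

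Finally, I would symbolically concretize $n_{\calG_E}(\Start)$ via the formula $\gammaHat(n_{\calG_E}(\Start), \vec{o})$ defined in \eqref{gammahatsl}, which is an existentially quantified LIA formula with free variables $\vec{o}$. Since the specification $\psi$ is itself an LIA formula, the query
\[
\calP \;\eqdef\; \gammaHat(n_{\calG_E}(\Start), \vec{o}) \wedge \bigwedge_{i_j \in E} \psi(\vec{o}_j, i_j)
\]
is a formula in Presburger arithmetic and hence its satisfiability is decidable. By part (2) of \theoref{GFAisSoundAndComplteForUnrealizable}, since $\calG_E$ is an exact abstraction, $\sy^E$ is unrealizable iff $\calP$ is unsatisfiable; equivalently, it is realizable iff $\calP$ is satisfiable. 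Therefore \algref{unrealizability} always terminates with the correct answer for LIA problems on finite $E$.

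The only subtle point, and the step most deserving of care, is verifying that no latent source of undecidability sneaks in: one must check that the size of $n_{\calG_E}(\Start)$ (which may be exponential in $|N|$ as noted after the proposition in \sectref{SemiLinearSets}) still yields a finite LIA formula, and that the $\lambda_i \in \nat$ existential quantifiers introduced by symbolic concretization keep the query inside Presburger arithmetic rather than pushing it into an undecidable fragment. Both hold because semi-linear sets are finite unions of finitely-generated linear sets, so $\gammaHat$ produces a finite disjunction of $\exists$-LIA formulas, and the conjunction with $\psi$ remains in LIA.
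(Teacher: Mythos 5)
Your proposal is correct and follows essentially the same route as the paper's own proof, which simply chains the exactness of the semi-linear-set abstraction (\lemref{slisexact}), symbolic concretization (\eqref{gammahatsl}), and \theoref{GFAisSoundAndComplteForUnrealizable}. You additionally spell out the effectiveness of each step (the LIA$^+$ rewriting, Newton's method terminating in $|N|$ iterations, and decidability of the resulting Presburger query), details the paper's terse proof leaves implicit but which are established elsewhere in \sectref{ProvingUnrealizabilityOfSyGuSProblemsInLIA}.
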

\iffull
\begin{proof}
	We have shown that $n_{\calG_E}(x)$  is an exact abstraction for LIA grammars (Lemma~\ref{Le:slisexact}) and LIA supports symbolic concretization (\eqref{gammahatsl}). According to \theoref{GFAisSoundAndComplteForUnrealizable}, $\calG_E = (G, \Se)$ is sound and complete for proving unrealizability of LIA \sygus problems for finitely example, and hence decidable.
\end{proof}
\fi

%%% Local Variables: 
%%% mode: latex
%%% TeX-master: "main.tex"
%%% End: 

% -*- TeX-master: t; TeX-PDF-mode: t -*-

\section{Proving Unrealizability of CLIA \sygus Problems with Examples}
\label{Se:SolvingSyGuSProblemsInCLIA}

In this section, we instantiate the framework from \sectref{gfa}  to obtain a 
\emph{decision procedure} for realizability of \sygus problems in
\emph{conditional linear integer arithmetic} (CLIA), where the specification is given by examples.
The decision procedure follows the same steps as the one for LIA in \sectref{ProvingUnrealizabilityOfSyGuSProblemsInLIA}.
The main difference is a technique for solving equations generated from
grammars that involve both Boolean and integer operations.

\subsection{Conditional Linear Integer Arithmetic}
\label{Se:CLIAdef}
%CLIA terms add conditionals to LIA terms.
The grammar of all CLIA terms is the following:
\[ 
\begin{array}{rcl}
T_\integ & ::= & \textrm{IfThenElse}(T_{\bool}, T_{\integ},T_{\integ}) \mid \textrm{Plus}(T_{\integ},T_{\integ}) \\
&\mid&  \textrm{Minus}(T_{\integ},T_{\integ}) \mid Num(c) \mid Var(x)\\
T_\bool & ::= & \textrm{And}(T_{\bool}, T_{\bool}) \mid \textrm{Not}(T_\bool)  \mid \textrm{LessThan}(T_\integ,T_\integ) 
\end{array}
\]
where $c\in \integ$ is a constant and $x\in \var$ is a input variable to the function being synthesized.
Notice that the definitions of $T_\integ$ and $T_\bool$ are mutually recursive. \iffull\footnote{For any \sygus problem over CLIA terms, the goal will
be to synthesize a term with a specific type---i.e., either $\bool$ or $\integ$.}\fi
The example grammar presented in \eqref{grammar2} in \sectref{overview} is a CLIA grammar.

We now define the semantics of CLIA terms.
Given an integer vector $\vec{v}\in\integ^d$ and a  Boolean vector $\vec{b}\in\bool^d$,
let $\projectint(\vec{v},b)$ be the integer vector obtained by
keeping the vector elements of $\vec{v}$ corresponding to the indices
for which $\vec{b}$ is true, and zeroing out all other elements:
\begin{eqnarray*}
  \lefteqn{\projectint(\B{u_1,\ldots,u_d}, \B{b_1,\ldots,b_d})}\\
    & = &\hspace{-2mm} \B{\text{if}(b_1)\text{ then }u_1\text{ else }0,\ldots,\text{if}(b_d)\text{ then }u_d\text{ else }0}
\end{eqnarray*}

\noindent The semantics of symbols that are not in LIA is as follows:
\[
  \begin{array}{@{\hspace{0ex}}r@{\hspace{1.0ex}}c@{\hspace{1.0ex}}l@{\hspace{0ex}}}
    \sem{\textrm{IfThenElse}}_E(\vec{b},\vec{v_1},\vec{v_2}) & = & \projectint(\vec{v_1},\vec{b})+\projectint(\vec{v_2},\neg \vec{b})\\
    \sem{\textrm{Not}}_E(\vec{b}) = \neg \vec{b}  &&   \sem{\textrm{And}}_E(\vec{b_1},\vec{b_2}) = \vec{b_1}\wedge \vec{b_2} \\
    \sem{\textrm{LessThan}}_E(\vec{v_1},\vec{v_2})           & = & \vec{v_1}< \vec{v_2}
  \end{array}
\]
where the operations $+$, $\wedge$, $<$, and $\neg$ are performed element-wise---e.g.,
$\vec{u}< \vec{v}=\B{b_1,\ldots,b_n}$ such that $b_i\Leftrightarrow u_i<v_i$.

Similarly to what we did in \sectref{commutativeGrammars}, any CLIA grammar $G$
can be rewritten into an equivalent $\CLIAPLUS$ grammar $\mkadditive(G)$ that does not contain any
occurrences of Minus, but may contain the symbol NegVar.
%as the tuple $(N\cup N^-,\Sigma^\CLIAPLUS,S,\delta^-)$ where 
%\begin{eqnarray*}
%	\Sigma^\CLIAPLUS=\{\textrm{Plus}, \textrm{Var}(x), \textrm{NegVar}(x), \textrm{Num}(c), \textrm{IfThenElse}, \textrm{And}\\
%	, \textrm{Not}, \textrm{LessThan}\},
%\end{eqnarray*}
%and $\delta^-$ contains all the productions described in \sectref{commutativeGrammars} 
%as well as the productions
%introduced by the following rule:
%\begin{itemize}
%\item For each production $X\to\textrm{IfThenElse}(X_1,X_2,X_3)$ in $\delta$, then $\delta^-$ contains the productions $X\to\textrm{IfThenElse}(X_1,X_2,X_3)$ and $X^-\to \textrm{IfThenElse}(X_1,X_2^-,X_3^-)$.
%\end{itemize}
%We can show that this transformation is sound using a proof identical to that from \lemref{rewritesound}.
%We thus assume that grammars do not contain the symbol Minus, but may contain the symbol  NegVar.

The rest of the section is organized as follows. 
First, we present the abstract domains used to represent Boolean and integer terms (\sectref{abstractsemanticsCLIA}).
Second, we
show how to compute an exact abstraction of Boolean nonterminals in grammars without IfThenElse (\sectref{booleanonly}).
Third,  we show how to solve \sygus problems with CLIA grammars containing arbitrary operators, in particular IfThenElse and mutual recursion (\sectref{mutualrecursion}). 

\subsection{Abstract Semantics for CLIA}
\label{Se:abstractsemanticsCLIA}

We use sets of Boolean vectors as the abstract domain for Boolean nonterminals,
and semi-linear sets as the abstract domain for integer nonterminals. 
We use  $b$ to denote a Boolean vector and $\bset$ to denote sets of Boolean
vectors.

Given a semi-linear set $sl{\in}\slint$ and a Boolean vector $\vec{b}{\in}\bool^d$,
let $\project(sl,\vec{b})$ be the semi-linear set obtained by
zeroing out \iffull for each vector in $sl$ the \fi elements at all index positions for which $\vec{b}$ is false:
\[
\begin{array}{@{\hspace{0ex}}r@{\hspace{1.0ex}}c@{\hspace{1.0ex}}l@{\hspace{0ex}}}
    \project(\{\langle\vec{u}_{i},\Omega_{i}\rangle\}_i, \vec{b})      & = & \{\projectls(\langle\vec{u}_{i},\Omega_{i}\rangle, \vec{b})\}_i\\
    \projectls(\linset{\vec{u},\{\vec{v}_1,...,\vec{v}_n\}}, \vec{b}) & = & \linset{\projectint(\vec{u},\vec{b}),\{\projectint(\vec{v}_i,\vec{b})\}_i}
\end{array}
\]

Next, we lift the concrete semantics to semi-linear sets
and define the abstract semantics of CLIA operators\iffull that are not in LIA\fi.
\[
\begin{array}{@{\hspace{0ex}}r@{\hspace{1.0ex}}c@{\hspace{1.0ex}}l@{\hspace{0ex}}}
  \semgamma{\textrm{IfThenElse}}_E(\bset,sl_1,sl_2) & = & \\
		\multicolumn{3}{r}{\bigoplus_{\vec{b}\in \bset}
				\project(sl_1,\vec{b})\otimes \project(sl_2,\neg \vec{b})} \\[1mm]
  \semgamma{\textrm{LessThan}}_E(sl_1,sl_2)        & = & \{v_1{<}v_2 \mid v_1\in sl_1,v_2\in sl_2\} \\
  \semgamma{\textrm{Not}}_E(\bset)                  & = & \bigcup_{\vec{b}\in \bset}\{\neg \vec{b}\} \\
  \semgamma{\textrm{And}}_E(\bset_1,\bset_2)         & = & \bigcup_{\vec{b_1}\in \bset_1,\vec{b_2}\in \bset_2}\{\vec{b_1}\wedge \vec{b_2}\}
\end{array}
\]

\iffull
\begin{example}
	Consider a set of Boolean vectors $\bset:=\{(\tr,\fa),(\tr,\tr)\}$ and two semi-linear sets $sl_1:=\{\linset{(1,2),\{(3,4)\}}\}$ and $sl_2:=\{\linset{(5,6),\{(7,8)\}}\}$. Then \rone
	$\semgamma{\textrm{Not}}_E(\bset)=\{(\fa,\tr),(\fa,\fa)\}$, and \linebreak
	\rtwo $\semgamma{\textrm{LessThan}}_E(sl_1,sl_2)=\{(\tr,\tr),(\tr,\fa),(\fa,\fa)\}$ since $(1,2)<(5,6)=(\tr,\tr)$, $(1,2)+(3,4)<(5,6)=(\tr,\fa)$ and $(1,2)+2(3,4)<(5,6)=(\fa,\fa)$.
	Finally,
\begin{eqnarray*}
  &&\semgamma{\textrm{IfThenElse}}_E(\bset,sl_1,sl_2)\\
  & = & \{\linset{(1,0),\{(3,0)\}}\}\otimes\{\linset{(0,6),\{(0,8)\}}\} \\
&&\oplus\{\linset{(1,2),\{(3,4)\}}\}\otimes \{\linset{(0,0),\{(0,0)\}}\}\\
\hspace{-2mm} & = &\hspace{-2mm}\{\linset{(1,6),\{(3,0),(0,8)\}},\linset{(1,2),\{(3,4),(0,0)\}}\}
\end{eqnarray*}\qed
\end{example}
\fi

\noindent
Operationally, the semantics of the LessThan symbol can be implemented
using an SMT solver.
As shown in \sectref{LIAchecksat}, a semi-linear set $sl$ can be
symbolically concretized as a formula $\gammaHat(sl,\vec{o})$ in LIA
(a decidable SMT theory).
Therefore, the set $\semgamma{\textrm{LessThan}}_E(sl_1,sl_2) = \bset$ can
be computed by performing $2^{|E|}$ SMT queries---i.e.,
for every Boolean vector $\vec{b} = \B{b_1,\ldots,b_{|E|}}$, we have that
$\vec{b}\in \bset$ iff the following formula is satisfiable:
$\gammaHat(sl_1,\vec{o}_1)\wedge \gammaHat(sl_2,\vec{o}_2)\wedge \vec{b}=\vec{o_1}<\vec{o_2}$.

Similarly to how we defined $\semgamma{\cdot}_E$ for multisorted
terms, we overload $\oplus$ as the union of sets of Boolean vectors,
and define a multisorted semilattice 
$\calD_{\text{CLIA}^+}:=(2^{\mathbb{B}}\uplus\slint,\oplus)$ over sets of
Boolean vectors and semi-linear sets.
We use $\calGclia := (G,\calD_{\text{CLIA}^+})$ to denote the GFA problem
for a CLIA$^+$ grammar $G$ and finitely many examples $E$.
$\calGclia$ is an exact abstraction of the semantics of CLIA$^+$ grammars.
\iffull
\begin{lemma}
	\label{Le:cliaexact}
  Given CLIA$^+$ grammar\ \ $G = (N,\Sigma,S,\delta)$, finite set of examples $E$,
  $\calGclia$ is an exact abstraction of the semantics of the languages $L_G(X)$, for all $X \in N$ (with respect
  to LIA and $E$).
\end{lemma}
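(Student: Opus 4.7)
The plan is to mirror the structure of the proof of Lemma \ref{Le:slisexact} and establish, by mutual induction on tree structure, that for every tree $e$ the abstract value $\semgamma{e}_E$ is a ``singleton'' that exactly represents $\sem{e}_E$. More precisely, if $e$ is integer-valued, I would show that $\semgamma{e}_E = \{\langle \sem{e}_E, \emptyset \rangle\}$ (a single linear set with no generators); and if $e$ is Boolean-valued, I would show that $\semgamma{e}_E = \{\sem{e}_E\}$ (a single Boolean vector). From this, exactness follows immediately: $m_{\calGclia}(X) = \Combine_{e \in L_G(X)} \semgamma{e}_E$ concretizes to precisely $\{\sem{e}_E \mid e \in L_G(X)\}$, since $\oplus$ on singleton abstract values is just set union of their concrete representatives.

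The base cases $\textrm{Num}(c)$, $\textrm{Var}(x)$, and $\textrm{NegVar}(x)$ are exactly as in the LIA proof: each of their abstract semantics is a linear set whose offset vector coincides with its concrete value and whose generator set is empty. For the inductive step, I would dispatch the $\LIAPLUS$ operator $\textrm{Plus}$ exactly as in Lemma \ref{Le:slisexact}, since $\otimes$ preserves linear sets with empty generator sets. For $\textrm{And}$ and $\textrm{Not}$ on singleton sets of Boolean vectors, the result is immediate from the defining equations of $\semgamma{\textrm{And}}_E$ and $\semgamma{\textrm{Not}}_E$. For $\textrm{LessThan}(e_1, e_2)$, the induction hypothesis makes the inputs to $\semgamma{\textrm{LessThan}}_E$ singleton linear sets with empty generators, so the defining set-builder expression yields the single Boolean vector $\sem{e_1}_E < \sem{e_2}_E$, matching $\sem{\textrm{LessThan}}_E$.

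The most delicate case is $\textrm{IfThenElse}(b, e_1, e_2)$: by the induction hypothesis, $\semgamma{b}_E = \{\vec{b}\}$ with $\vec{b} = \sem{b}_E$, and $\semgamma{e_i}_E = \{\langle \vec{v}_i, \emptyset\rangle\}$ with $\vec{v}_i = \sem{e_i}_E$. The outer big-combine $\bigoplus_{\vec{b}\in\semgamma{b}_E}$ has only one term, and each $\project$ of a singleton linear set with empty generators is again a singleton linear set with empty generators, namely $\{\langle \projectint(\vec{v}_i, \vec{b}), \emptyset\rangle\}$ (respectively with $\neg \vec{b}$). Taking $\otimes$ of these two singletons produces the singleton $\{\langle \projectint(\vec{v}_1,\vec{b}) + \projectint(\vec{v}_2, \neg\vec{b}), \emptyset\rangle\}$, whose sole element coincides with $\sem{\textrm{IfThenElse}}_E(\vec{b}, \vec{v}_1, \vec{v}_2)$ by the concrete definition.

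The step I expect to require the most care is the IfThenElse case, because the abstract operator simultaneously touches both components of the multisorted domain $\calD_{\text{CLIA}^+}$ and uses $\project$ to ``mask'' coordinates; I will need a small auxiliary observation that $\project$ and $\otimes$ both preserve the property of being a singleton linear set with empty generator set, which is what keeps the induction tight. Once this is in place, the combine-over-all-derivations value $m_{\calGclia}(X)$ is the union of singletons $\{\sem{e}_E\}$ ranging over $e \in L_G(X)$, which is exactly $\{\sem{e}_E \mid e \in L_G(X)\}$, establishing the lemma.
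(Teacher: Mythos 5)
Your proposal is correct and takes essentially the same route as the paper: the paper's proof simply asserts that, by the same argument as in the LIA case, $\semgamma{e}_E$ is always a singleton whose element is $\sem{e}_E$, so that $m_{\calGclia}(X)=\bigoplus_{e\in L_G(X)}\semgamma{e}_E$ equals $\{\sem{e}_E\mid e\in L_G(X)\}$. You merely spell out the case analysis (including the IfThenElse case, where $\project$ and $\otimes$ preserve singleton linear sets with empty generator sets) that the paper leaves implicit.
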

\begin{proof}
  Using a similar argument as in \sectref{SemiLinearSets}, we can show that for any expression
  $e$, the abstract semantics  $\semgamma{e}_E$ is always a singleton set $\{\sem{e}_E\}$,
  where the element of the singleton set is exactly the semantics of $e$.
  Therefore, $m_{\calGclia}=\bigoplus_{e\in L_G(X)}\semgamma{e}_E$ is exactly $\{\sem{e}_E\mid e\in L_G(X)\}$. 
\end{proof}
\fi

\subsection{CLIA Equations Without Mutual Recursion}
\label{Se:booleanonly}
A CLIA grammar $G$ contains Boolean and integer nonterminals.
A nonterminal $X$ is a Boolean nonterminal if $\sem{X}\in\bool$,
and is an integer nonterminal if $\sem{X}\in \integ$.
In this subsection, we assume that there exists no mutual recursion,
i.e., $G$ contains no IfThenElse productions.
Under this assumption, the only operator that connects Boolean nonterminals
and integer nonterminals is LessThan, and hence no Boolean nonterminal
appears in the productions of an integer nonterminal.
Therefore, we can proceed by first solving the equations that involve integer nonterminals,
using the technique presented in \sectref{NewtonMethod},
and then plugging the corresponding values into the equations that involve
Boolean nonterminals.
\iffull
\begin{example}
\label{Exa:nomutual}
Consider the following grammar $G_b$:
\begin{align}
\hspace{-2mm}
\begin{array}{r@{~}r@{~~}ll}	
BExp		&::=	& \textrm{LessThan}(X, N2) \mid \textrm{LessThan}(N0, Exp)\\
&\mid&
\textrm{And}(BExp,BExp)\\
Exp		&::=	&	 \textrm{Plus}(X, Exp) \mid \textrm{Num}(0)\qquad
X			::=   \textrm{Var}(x) \\
N0			&::=&	   \textrm{Num}(0)\qquad
N2			::=	   \textrm{Num}(2)\\
\end{array}
\label{Eq:grammar3}
\end{align}
Assume that the given set of examples is $E=\{1,2\}$.
If we consider the equations generated by grammar flow analysis for this grammar, 
all the variables corresponding to the integer nonterminals $Exp, X,N0,N2$ do not
depend on any of the variables for the Boolean nonterminals.
Therefore, we can solve the corresponding
set of equations using the techniques presented in \sectref{ProvingUnrealizabilityOfSyGuSProblemsInLIA}. For each such nonterminal $X$,
by plugging the value of each $n_\calGclia(X)$ in the equations corresponding to $BExpr$ we get the 
following equation:
\begin{align}
\hspace{-3mm}
\begin{array}{r@{~}r@{~~}ll}
n_\calGclia(BExp)		&=&	 \{(\tr, \fa)\} \oplus \{(\tr, \tr),(\fa, \fa)\}\\
&&\oplus
\semgamma{\textrm{And}}(n_{\calGclia}(BExp),n_{\calGclia}(BExp))
\end{array}
\label{Eq:booleanex}
\end{align}
where $\oplus$ is the set union operator.
\qed
\end{example}
\fi
After this step, we are left with a set of equations $\eqs_{\bool}$
that involve only Boolean nonterminals and Boolean symbols.
\iffull
Concretely, for every nonterminal $X$ in the 
set of Boolean nonterminals $N_{\bool}$, $\eqs_{\bool}$ contains an equation 
\begin{eqnarray}
n_{\calGclia}(X)=\smashoperator{\bigoplus_{X\to g(X_1,...,X_k)\in \delta }}~~\semgamma{g}_E (n_{\calGclia}(X_1),...,n_{\calGclia}(X_k))
\end{eqnarray} 
\fi
Because the domain of sets of Boolean vectors is finite, 
the least fixed point of $\eqs_\bool$ can be found using an  algorithm \iffull\textsc{SolveBool}\fi
that iteratively computes finer under-approximations of $n_{\calGclia}$ as $n_{\calGclia}^{\texttt{k}}$---i.e., the under-approximation
at iteration $\texttt{k}$---until it reaches the least fixed point, which---by \theoref{Coincidence}---is an exact abstraction.
\iffull
The initial under-approximation is $n_{\calGclia}^{(0)}(X)=\emptyset$ for all Boolean nonterminals in $X$.
The  under-approximation
of each terminal $X$ at iteration $k$ is the following expression:
\begin{eqnarray*}
	&&n_{\calGclia}^{(i)}(X)\\
	&=&n_{\calGclia}^{(i-1)}(X)\oplus\\
	&& \bigoplus_{X\to g(X_1,...,X_n)\in \delta }\semgamma{g}_{E} (n_{\calGclia}^{(i-1)}(X_1),...,n_{\calGclia}^{(i-1)}(X_n)) \mid X\in N_{\bool}).
\end{eqnarray*}
Notice that $\semgamma{g}_{E}$ is computable for every operator $g$ (\sectref{abstractsemanticsCLIA}).
\fi
This algorithm terminates in at most $2^{|E|}|N_\bool|$ iterations
because the set of Boolean vectors has size at most $2^{|E|}$,
and each iteration adds at least one Boolean vector to 
one of the variables until the least fixed point is reached. 
\iffull
\begin{example}
Recall \eqref{booleanex} from \exref{nomutual}.
In the first iteration of the iterative algorithm $n_{\calGclia}^{(0)}(X)=\emptyset$.
We compute $n_{\calGclia}^{(1)}(BExp)$ as follows:	
\begin{eqnarray*}
	n_{\calGclia}^{(1)}(BExp)	&	=&	 \{(\tr, \fa)\} \oplus \{(\tr, \tr),(\fa, \fa)\}
	\\&\oplus&\semgamma{\textrm{And}}(n_{\calGclia}^{(0)}(BExp),n_{\calGclia}^{(0)}(BExp))
\end{eqnarray*}

and obtain $n_{\calGclia}^{(1)}(BExp)=\{(\tr, \fa),(\tr, \tr),(\fa, \fa)\}$.
If we compute $n_{\calGclia}^{(2)}(BExp)$ using the same technique, we reach a fixed point---i.e., $n_{\calGclia}^{(2)}(BExp)=n_{\calGclia}^{(1)}(BExp)$.
\qed
\end{example}

\begin{lemma}
\label{Le:boolean}
Given a set of equations involving only Boolean-nonterminal variables
and representing the abstract semantics of $k$ examples,
 the iterative algorithm
\textsc{SolveBool} computes a fixed-point solution in at most $n2^k$ iterations, where $n$ is the number
of nonterminal variables.
\end{lemma}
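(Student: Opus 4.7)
The plan is to prove termination by a monotonicity-plus-capacity argument. I would first establish that \textsc{SolveBool} produces a monotonically increasing sequence: for every Boolean nonterminal $X$ and every iteration index $i \geq 0$, $n_{\calGclia}^{(i)}(X) \subseteq n_{\calGclia}^{(i+1)}(X)$. This follows by induction on $i$. The base case $n_{\calGclia}^{(0)}(X) = \emptyset \subseteq n_{\calGclia}^{(1)}(X)$ is immediate. For the inductive step, observe that each right-hand side is built from $\oplus$ (set union) and the abstract Boolean operators $\semgamma{g}_E$, which are monotone in each argument on the lattice of sets of Boolean vectors (this is a direct check from the definitions in \sectref{abstractsemanticsCLIA}). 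Composition of monotone functions is monotone, so growth of the variables from step $i-1$ to step $i$ propagates to growth from step $i$ to step $i+1$.

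Next, I would establish a strict-progress lemma: if the iteration has not yet reached a fixed point at step $i$, then $n_{\calGclia}^{(i+1)}(X) \supsetneq n_{\calGclia}^{(i)}(X)$ for at least one Boolean nonterminal $X$. This is essentially the contrapositive of the definition of fixed point together with the explicit inclusion of $n_{\calGclia}^{(i-1)}(X)$ as a summand in the update rule for $n_{\calGclia}^{(i)}(X)$: if no variable strictly grew from iteration $i$ to $i+1$, the tuple $\B{n_{\calGclia}^{(i)}(X)}_{X \in N_{\bool}}$ already satisfies all the equations in $\eqs_{\bool}$.

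Now the capacity argument gives the bound. Each $n_{\calGclia}^{(i)}(X)$ is a subset of $\bool^k$, which has exactly $2^k$ elements, so $|n_{\calGclia}^{(i)}(X)| \leq 2^k$ for all $X$ and all $i$. Let $\Phi(i) = \sum_{X \in N_{\bool}} |n_{\calGclia}^{(i)}(X)|$ be the total number of Boolean vectors accumulated across all $n = |N_{\bool}|$ variables. By monotonicity, $\Phi$ is non-decreasing, and by strict progress each non-terminal iteration increases $\Phi$ by at least one. Since $0 \leq \Phi(i) \leq n \cdot 2^k$, the iteration cannot make strict progress for more than $n \cdot 2^k$ steps, so a fixed point is reached within at most $n \cdot 2^k$ iterations. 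By \theoref{Coincidence} together with the exactness established in \lemref{cliaexact} (restricted to the Boolean portion of the grammar, since we have removed the integer dependencies by substitution), this fixed point coincides with $m_{\calGclia}(X)$, so it is also the least (and exact) solution.

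The main obstacle is the strict-progress lemma: care is needed to ensure that the update rule explicitly includes $n_{\calGclia}^{(i-1)}(X)$ as a summand (or, equivalently, that the iteration is phrased as a Kleene-style ascending chain rather than arbitrary re-evaluation), because otherwise a purely equational formulation might oscillate rather than climb monotonically. Once that modelling point is pinned down, the rest is bookkeeping.
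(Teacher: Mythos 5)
Your proposal is correct and follows essentially the same route as the paper's proof: monotonicity of the iterates (guaranteed by including $n_{\calGclia}^{(i-1)}(X)$ as a summand in the update), strict progress until a fixed point is reached, and the capacity bound $\sum_{X}|n_{\calGclia}^{(i)}(X)|\le n2^k$ on the total number of Boolean vectors. The extra detail you supply (explicit induction for monotonicity and the appeal to the coincidence theorem for exactness) is consistent with, and slightly more careful than, the paper's own argument.
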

\begin{proof}
Note that $n_{\calGclia}^{(i-1)}(X)\subseteq n_{\calGclia}^{(i)}(X)$ for all $i$ and $X$, and the size of a set of Boolean vector with dimension $k$ is at most $2^k$. Then the size of underapproximations is strictly increasing (otherwise the least fixed point is reached) and bounded by $n2^k$, i.e., $\sum_{X\in N}|n_{\calGclia}^{(i-1)}(X)|< \sum_{X\in N}|n_{\calGclia}^{(i)}(X)|\le n2^k$, for all $i$. Therefore, the iteration number $i$ can be at most $n2^k$.	
\end{proof}
\fi

\subsection{CLIA Equations With Mutual Recursion}
\label{Se:mutualrecursion}

We have seen how to 
compute exact abstractions for grammars without mutual recursion, for both integer (\sectref{SemiLinearSets}) 
and Boolean (\sectref{booleanonly}) nonterminals. 
In this section, we show how to handle grammars that involve $\textrm{IfThenElse}$ symbols, 
which introduce mutual recursion between Boolean and integer nonterminals.
See \eqref{mutualrecursion} in \sectref{overview} for an example of
equations that involve mutual recursion.
To solve mutually recursive equations, 
we cannot simply compute the abstraction for one type and use the corresponding values
to compute the abstraction for the other type, like we did in \sectref{booleanonly}. 
However, we show that if we repeat such substitutions in an iterative fashion, 
we obtain an algorithm \textsc{SolveMutual} that computes an exact abstraction for
a grammar with mutual recursion.

At the \texttt{k}-th iteration, for every nonterminal $X$, the algorithm computes an under-approximation $n_{\calGclia}^{\texttt{k}}(X)$
of  $n_{\calGclia}(X)$.
Initially, $n_{\calGclia}^{\texttt{-1}}(X)=\mathbf{0}$ for all nonterminals $X$ of type $\integ$. 
At iteration $\texttt{k}\geq 0$ the algorithm does the following:

\paragraph{Step 1} 
Replace each integer nonterminal $Z$ with the value $n_\calGclia^{\texttt{k-1}}(Z)$ from iteration \texttt{k{-}1}
and use the technique in \sectref{booleanonly} to
compute $n_\calGclia^{\texttt{k}}(B)$ for each Boolean nonterminal
$B$.
\iffull
Formally, for each Boolean nonterminal $B\in N_\bool$ we have the  equation:
 \begin{eqnarray}
n_{\calGclia}^{\texttt{k}}(B)=\smashoperator{\bigoplus_{B\to g(X_1,...,X_n)\in \delta }}~~\semgamma{g}_E (n_{\calGclia}^{(i_1)}(X_1),...,n_{\calGclia}^{(i_n)}(X_n)) 
\end{eqnarray} 
where 
each $i_j$ is equal to $k$ if $X_j\in N_{\bool}$ and $k-1$  if $X_j\in N_{\integ}$.
\fi

\paragraph{Step 2} 
Replace each Boolean nonterminal $B$ with the value $n_\calGclia^{\texttt{k}}(B)$ from \emph{\textbf{Step 1}}
and
compute $n_\calGclia^{\texttt{k}}(Z)$ for each integer nonterminal
$Z$ (see \eqref{iteonlyexpression} in \sectref{overview} for an example).

\iffull
Formally, for each integer nonterminal $Z\in N_\integ$ we have the  equation:
 \begin{eqnarray}
 \label{Eq:integerEqsForRemoveIf}
n_{\calGclia}^{\texttt{k}}(Z)=\smashoperator{\bigoplus_{Z\to g(X_1,...,X_n)\in \delta }}~~\semgamma{g}_E (n_{\calGclia}^{\texttt{k}}(X_1),...,n_{\calGclia}^{\texttt{k}}(X_n)) 
\end{eqnarray} 
where for each  $X_j\in N_{\bool}$,  $n_{\calGclia}^{\texttt{k}}(X_j)$ is the value computed in \emph{\textbf{Step 1}}.
\fi
The equations obtained at \emph{\textbf{Step 2}} only contain integer nonterminals,
but they may contain IfThenElse symbols for which the abstract semantics 
\iffull contains
the $\project$ operator that 
\fi
is not directly supported by the equation-solving technique
presented in \sectref{NewtonMethod}.
In the rest of this section, we present a way to transform the given set of equations 
into a new set of equations that faithfully
describes the abstract semantics of IfThenElse symbols, using only $\otimes$ and
$\oplus$ operations over semi-linear sets.
\iffull
The resulting equations can be solved using the technique presented in
\sectref{NewtonMethod}.
\fi

The iterative algorithm \textsc{SolveMutual} is guaranteed to terminate in $|N|2^{|E|}$ iterations.

\iffull
\begin{lemma}
\label{Le:booleanTimeComplexity}
Given a set of equations involving both Boolean-
and integer-nonterminal variables that
represent the abstract semantics of\ \ $k$ examples,
 the iterative algorithm
\textsc{SolveMutual} computes a fixed-point solution in at most $n2^k$ iterations, where $n$ is the number
of nonterminal variables.
\end{lemma}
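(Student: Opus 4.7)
The plan is to argue termination via two ingredients: (\emph{i}) monotonicity of the iterates $n_{\calGclia}^{\texttt{k}}$, and (\emph{ii}) the fact that Boolean values live in a finite domain. The essential observation is that, although integer nonterminals range over the infinite lattice $\slint$, the only source of ``new information'' across outer iterations is the Boolean component: once Step 1 of iteration $\texttt{k}$ produces the same Boolean values as iteration $\texttt{k}{-}1$, the integer equations solved in Step 2 are literally the same system as at iteration $\texttt{k}{-}1$, and we have reached a fixed point. Thus the iteration count is controlled entirely by how many times the Boolean nonterminals can strictly grow.

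First I would prove by induction on $\texttt{k}$ that for every nonterminal $X$, $n_{\calGclia}^{\texttt{k-1}}(X) \sqsubseteq n_{\calGclia}^{\texttt{k}}(X)$. The base case uses the initialization $n_{\calGclia}^{\texttt{-1}}(Z) = \mathbf{0}$ for integer nonterminals, so Step 1 of iteration $0$ operates on the bottom element. The inductive step relies on the fact that every abstract operator used in the equations is monotone in each argument: $\oplus$ and $\otimes$ on $\slint$, set union on $2^\bool$, and the componentwise lifts of $\textrm{And}$, $\textrm{Not}$, $\textrm{LessThan}$, and $\textrm{IfThenElse}$ (including the $\project$/$\projectls$ operations, which are monotone in the semi-linear-set argument for a fixed mask). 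Because both Step 1 and Step 2 apply only monotone operators (Step 2 additionally uses Newton's method on a monotone system, whose least fixed point is monotone in the system's parameters), $n_{\calGclia}^{\texttt{k}}(X)$ is monotonically non-decreasing in $\texttt{k}$ for every $X$.

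Next I would bound the iteration count. Each Boolean nonterminal $B$ takes values in $2^{\bool^{|E|}}$, which has cardinality $2^{2^{|E|}}$, but the relevant measure is the number of distinct \emph{enlargements} of $n_{\calGclia}^{\texttt{k}}(B)$, i.e., the number of Boolean vectors that can be added. Each Boolean nonterminal can accumulate at most $2^k$ Boolean vectors over the entire run, so summing over all $n_\bool \le n$ Boolean nonterminals, the total number of vectors that can ever be added is at most $n \cdot 2^k$. Suppose iteration $\texttt{k}$ is not a fixed point; then by monotonicity combined with the equality observation above, the Boolean component must strictly grow at Step 1 of iteration $\texttt{k}$, so at least one new Boolean vector is added to some $n_{\calGclia}^{\texttt{k}}(B)$. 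Since this can happen at most $n \cdot 2^k$ times, a fixed point is reached within $n \cdot 2^k$ iterations.

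The main obstacle I anticipate is the monotonicity argument through Step 2, specifically the IfThenElse handling illustrated in \eqref{itesplitting}: the transformation that eliminates IfThenElse introduces auxiliary variables $\Start^{\vec{b}}$ parameterized by Boolean vectors, and it must be verified that enlarging the Boolean set $\bset$ can only enlarge the resulting least fixed point of the LIA subsystem solved by Newton's method. This reduces to checking that $\semgamma{\textrm{IfThenElse}}_E(\bset,sl_1,sl_2)$ is monotone in $\bset$ in the sense that the corresponding LIA equations have a monotone family of least fixed points; this follows because the system is obtained by a finite $\oplus$ over $\bset$ of monotone LIA equations, and the least fixed point of a monotone system depends monotonically on its coefficients. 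Once this step is in hand, the remainder of the argument is bookkeeping.
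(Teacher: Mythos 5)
Your proof is correct and follows essentially the same route as the paper's: the paper's (much terser) argument likewise observes that every non-final iteration must strictly enlarge the set of Boolean vectors of at least one Boolean nonterminal, and that each such set can be updated at most $2^{|E|}$ times, giving the $|N|2^{|E|}$ bound. The monotonicity of the iterates (and hence the fact that ``updated'' means ``strictly grew''), which you work out explicitly including the IfThenElse/Newton step, is left implicit in the paper.
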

\begin{proof}
	In each iteration, at least one of the set $n_\calGclia^{\texttt{k}}(B)$ of Boolean vectors should be different from the set $n_\calGclia^{\texttt{k-1}}(B)$ in the previous iteration. Each set of Boolean vectors can be only updated at most $2^{|E|}$ times. Therefore there can be at most $|N|2^{|E|}$ iterations.
\end{proof}
\fi

\subsubsection*{$\semgamma{\textrm{IfThenElse}}_E$ using Semi-Linear-Set Operations}
\label{Se:itesolving}
\begin{changebar}
	In this section, we show how to solve  equations that involve IfThenElse symbols. 
Recall the definition of the abstract semantics of IfThenElse symbols:
\begin{align*}
\hspace*{-1.5mm}\semgamma{\textrm{IfThenElse}}_E(\bset,sl_1,sl_2)=\bigoplus_{b\in \bset}
				& \project(sl_1,b) \\
				&\otimes\project(sl_2,\neg b)
\end{align*}
%The semantics of IfThenElse symbols 
%involves the $\project$ operator, which cannot be directly described using only semi-linear-set operations.
%Hence, we cannot directly apply Newton's method as presented in~\sectref{ProvingUnrealizabilityOfSyGuSProblemsInLIA}.

In the rest of this section, we show how equations that involve the semantics of
IfThenElse symbols can be rewritten into equations that involve only
$\oplus$ and $\otimes$ operations, so that they can be solved using
Newton's method.
\end{changebar}
For every possible Boolean vector $b$, the new set of
equations contains a new variable $n_\calGclia^{\texttt{k}}(X^b)$, so
that the solution to the set of equations for this variable is
$\project(n_\calGclia^{\texttt{k}}(X), b)$.
 
Let $\eqs$ be a set of equations over a set of integer nonterminals $N$.
We write $x/y$ to denote the substitution of every occurrence of $x$ with $y$.
We generate a set of equations \iffull$\removeif(\eqs)=\eqs'$\else $\eqs'$ \fi over the set of variables $N^{\mathbb{B}^d}$
as follows. 
For every equation $n^{\texttt{k}}_\calGclia(X) = \bigoplus_i \alpha_i$ in $\eqs$ and $b\in\bool^d$, there exists an equation $n^{\texttt{k}}_\calGclia(X^b)=\bigoplus_i\pi_b(\alpha_i)$ in $\eqs'$, where $\pi_b$ applies the following substitution in this order:
\begin{enumerate}
	\item 	For every $X\in N$ and $b'\in \bool^d$, $\pi_b$ applies the substitution $\project(n^{\texttt{k}}_\calGclia(X),b')/ n^{\texttt{k}}_\calGclia(X^{b\wedge b'})$.
	\item 
	For every $X\in N$, $\pi_b$ applies $n^{\texttt{k}}_\calGclia(X)/ n^{\texttt{k}}_\calGclia(X^b)$.
	%if $\alpha$ is of the form $\semgamma{\textrm{Plus}}_E(n_\calGclia(X_1),n_\calGclia(X_2))$
	%then $\eqs'$ contains an equation \linebreak
	%$n_\calGclia(X^b) = \semgamma{\textrm{Plus}}_E(n_\calGclia(X_1^b),n_\calGclia(X_2^b))$ for 
	%each $b\in \mathbb{B}^d$.
	
	\item 
	For any semi-linear set $sl$ appearing in $\eqs$, $\pi_b$ applies the substitution $sl/\project(sl,b)$. Because $sl$ is a constant, this substitution  yields  a constant semi-linear set.
	\end{enumerate}
\begin{changebar}
	\begin{example}
	Figure~\ref{fig:rewriteite} illustrates how \eqref{iteonlyexpression} is rewritten into \eqsref{itesplitting}. We omit equations for variables $n_{2,E}^{\texttt{1}}(\Start^{\{\fa,\fa\}})$ and $n_{2,E}^{\texttt{1}}(\Start^{\{\tr,\fa\}})$ because they do not contribute to the solving of $n_{2,E}^{\texttt{1}}(\Start^{\{\tr,\tr\}})$. After expanding the definition of $\semgamma{\text{IfThenElse}}$, we apply the substitutions to obtain \eqsref{itesplitting}. Substitution 2 is not applied because there are no variables of the form $n_{2,E}^{\texttt{1}}(X)$ after applying substitution 1.
\begin{figure}[h]
	
	\begin{gather*}
	{\small
		\begin{array}{@{\hspace{0ex}}r@{~}r@{~~}l@{\hspace{0ex}}}
		{n_{2,E}^{\texttt{1}}(\Start)} & = & \semgamma{\textrm{IfThenElse}}_E(\{(\tr, \fa)\}, \{(0,0)+\lambda (3,6)\}, \\
		&   & n_{2,E}^{\texttt{1}}(\Start))  \oplus  \{(0,0)+\lambda (2,4)\}  \oplus  \{(0,0)+\lambda (3,6)\}	
		\end{array}
	}\\
	\hspace*{2.95cm}\wideDownarrow \text{\small Generate equations for $Start^b$}  \\
	{\small
		\begin{array}{@{\hspace{0ex}}r@{~}r@{~~}l@{\hspace{0ex}}}
		{n_{2,E}^{\texttt{1}}(\Start^{(\tr,\tr)})}      & = & \pi_{\{\tr,\tr\}}\big(\semgamma{\textrm{IfThenElse}}_E(\{(\tr, \fa)\}, \{(0,0)+\lambda (3,6)\}, \\
		&   & n_{2,E}^{\texttt{1}}(\Start))\big)  \oplus  \pi_{\{\tr,\tr\}}\big(\{(0,0)+\lambda (2,4)\}\big)  \\
		&\oplus&  \pi_{\{\tr,\tr\}}\big(\{(0,0)+\lambda (3,6)\}\big)\\			
		{n_{2,E}^{\texttt{1}}(\Start^{(\fa,\tr)})}      & = & \pi_{\{\fa,\tr\}}\big(\semgamma{\textrm{IfThenElse}}_E(\{(\tr, \fa)\}, \{(0,0)+\lambda (3,6)\}, \\
		&   & n_{2,E}^{\texttt{1}}(\Start))\big)  \oplus  \pi_{\{\fa,\tr\}}\big(\{(0,0)+\lambda (2,4)\}\big)  \\
		&\oplus&  \pi_{\{\fa,\tr\}}\big(\{(0,0)+\lambda (3,6)\}\big)
		\label{Eq:expandITE}
		\end{array}
	}\\
	\hspace*{3.55cm}\wideDownarrow  \text{\small Expand definition of $\semgamma{\text{IfThenElse}}$}\\
	{\small
		\begin{array}{@{\hspace{0ex}}r@{~}r@{~~}l@{\hspace{0ex}}}
		{n_{2,E}^{\texttt{1}}(\Start^{(\tr,\tr)})}      & = & \pi_{\{\tr,\tr\}}\big(\project(\{(0,0)+\lambda (3,6)\},\{\fa,\tr\})\big)  \\
		&\otimes&  \pi_{\{\tr,\tr\}}\big(\project(n_{2,E}^{\texttt{1}}(\Start),(\fa,\tr))\big) \\
		&\oplus&  \pi_{\{\tr,\tr\}}\big(\{(0,0)+\lambda (2,4)\} \big)
		\oplus \pi_{\{\tr,\tr\}}\big(\{(0,0)+\lambda (3,6)\}\big)\\			
		{n_{2,E}^{\texttt{1}}(\Start^{(\fa,\tr)})}      & = &\pi_{\{\fa,\tr\}}\big(\project(\{(0,0)+\lambda (3,6)\},\{\fa,\tr\})\big)  \\
		&\otimes&  \pi_{\{\fa,\tr\}}\big(\project(n_{2,E}^{\texttt{1}}(\Start),(\fa,\tr))\big) \\
		&\oplus&  \pi_{\{\fa,\tr\}}\big(\{(0,0)+\lambda (2,4)\}\big)
		\oplus \pi_{\{\fa,\tr\}}\big(\{(0,0)+\lambda (3,6)\}\big)
		\end{array}
	}\\
	\hspace*{2.58cm}\bigwideDownarrow    \begin{aligned}
	&\text{\small Apply $\project$ to constants}\\
	&\text{\small Apply substitution 1}
	\end{aligned}\\
	{\small
		\begin{array}{@{\hspace{0ex}}r@{~}r@{~~}l@{\hspace{0ex}}}
		{n_{2,E}^{\texttt{1}}(\Start^{(\tr,\tr)})}      & = & \pi_{\{\tr,\tr\}}\big(\{(0,0)+\lambda (3,0)\}\big)  \otimes  n_{2,E}^{\texttt{1}}(\Start^{(\tr,\tr)\wedge(\fa,\tr) })\\
		&\oplus&  \pi_{\{\tr,\tr\}}\big(\{(0,0)+\lambda (2,4)\}\big)
		\oplus \pi_{\{\tr,\tr\}}\big(\{(0,0)+\lambda (3,6)\}\big)\\			
		{n_{2,E}^{\texttt{1}}(\Start^{(\fa,\tr)})}      & = &\pi_{\{\fa,\tr\}}\big(\{(0,0)+\lambda (3,0)\}\big)  \otimes  n_{2,E}^{\texttt{1}}(\Start^{(\fa,\tr)\wedge(\fa,\tr)}) \\
		&\oplus&  \pi_{\{\fa,\tr\}}\big(\{(0,0)+\lambda (2,4)\}\big)
		\oplus \pi_{\{\fa,\tr\}}\big(\{(0,0)+\lambda (3,6)\}\big)
		\end{array}
	}\\
	\hspace*{1.68cm}\wideDownarrow  \text{\small Apply substitution 3}\\
	{\small
		\begin{array}{@{\hspace{0ex}}r@{~}r@{~~}l@{\hspace{0ex}}}
		{n_{2,E}^{\texttt{1}}(\Start^{(\tr,\tr)})}      & = & \{(0,0)+\lambda (3,0)\}  \otimes  n_{2,E}^{\texttt{1}}(\Start^{(\fa,\tr) })\\
		&\oplus&  \{(0,0)+\lambda (2,4)\} 
		\oplus \{(0,0)+\lambda (3,6)\}\\			
		{n_{2,E}^{\texttt{1}}(\Start^{(\fa,\tr)})}      & = &\{(0,0)+\lambda (0,0)\}  \otimes  n_{2,E}^{\texttt{1}}(\Start^{(\fa,\tr)}) \\
		&\oplus&  \{(0,0)+\lambda (0,4)\}
		\oplus \{(0,0)+\lambda (0,6)\}
		\end{array}
	}
	\end{gather*}
	\caption{Rewriting \eqref{iteonlyexpression} into \eqsref{itesplitting}.}
	\label{fig:rewriteite}
\end{figure}
\end{example}
\end{changebar}

\iffull
\begin{lemma}
Given a set of equations $\eqs$ involving only variables $V:=\{n_\calGclia(X)\}_{X\in N}$,
the set of equations $\removeif(\eqs)$ has at most $|V|2^{|E|}$ variables,
and an assignment $\sigma'$ is a solution of $\removeif(\eqs)$ iff
there exists a solution $\sigma$ of $\eqs$ such that $\sigma(n_\calGclia(X))=\sigma'(n_\calGclia(X^{\vec{\tr}}))$ for all $X\in N$.
\end{lemma}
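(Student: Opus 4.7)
The cardinality bound is immediate: the rewriting creates at most one fresh variable $n_\calGclia(X^b)$ for each pair $(X, b) \in N \times \mathbb{B}^{|E|}$, giving at most $|V|\cdot 2^{|E|}$ variables in total. (Not every such variable must actually appear in $\removeif(\eqs)$, but this is an upper bound.)

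For the correctness direction, the plan is to establish the invariant
\[
  \sigma'(n_\calGclia(X^b)) \;=\; \project(\sigma(n_\calGclia(X)), b) \qquad \text{for all } X \in N,\ b \in \mathbb{B}^{|E|},
\]
as the bijection between solutions. The main technical step is to verify a short list of algebraic properties of $\project$:
(i) $\project$ distributes over $\oplus$ and $\otimes$, i.e., $\project(sl_1 \oplus sl_2, b) = \project(sl_1,b)\oplus \project(sl_2,b)$ and $\project(sl_1\otimes sl_2, b) = \project(sl_1,b)\otimes \project(sl_2,b)$;
(ii) $\project$ is idempotent under conjunction, i.e., $\project(\project(sl,b'),b) = \project(sl, b\wedge b')$;
(iii) $\project(sl,\vec{\tr}) = sl$.
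These let me push $\project(\cdot,b)$ through the right-hand side of an equation. In particular, applying (i)–(ii) to the expansion
\[
  \semgamma{\textrm{IfThenElse}}_E(\bset, sl_1, sl_2) \;=\; \bigoplus_{b' \in \bset} \project(sl_1, b') \otimes \project(sl_2, \neg b')
\]
yields
\[
  \project(\semgamma{\textrm{IfThenElse}}_E(\bset, sl_1, sl_2), b) \;=\; \bigoplus_{b' \in \bset} \project(sl_1, b \wedge b') \otimes \project(sl_2, b \wedge \neg b'),
\]
which matches precisely substitution rule~1 of $\pi_b$, while rules~2 and~3 match the trivial cases where the argument is a plain variable or a constant semi-linear set. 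Thus the equation for $n_\calGclia(X^b)$ in $\removeif(\eqs)$ is exactly the projection $\project(\cdot, b)$ of the equation for $n_\calGclia(X)$ in $\eqs$, evaluated under the invariant.

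Given the invariant, both directions of the iff follow. For the forward direction, starting from a solution $\sigma$ of $\eqs$, I define $\sigma'(n_\calGclia(X^b)) \eqdef \project(\sigma(n_\calGclia(X)), b)$ and check each equation of $\removeif(\eqs)$ by substituting and applying the distributivity identities above. For the backward direction, starting from $\sigma'$, I set $\sigma(n_\calGclia(X)) \eqdef \sigma'(n_\calGclia(X^{\vec{\tr}}))$; using (iii), the equation for $X^{\vec{\tr}}$ reduces to exactly the equation for $X$ in $\eqs$ (after replacing each $n_\calGclia(X^{\vec{\tr}\wedge b'}) = n_\calGclia(X^{b'})$ by $\project(\sigma(n_\calGclia(X)), b')$, which holds by applying the forward direction to the restriction of $\sigma'$).

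The main obstacle will be the bookkeeping for substitution~1: it is applied inside the enclosing $\pi_b$ and therefore must produce $n_\calGclia(X^{b \wedge b'})$ rather than $n_\calGclia(X^{b'})$. The cleanest way to handle this is to prove, by structural induction on right-hand-side expressions $\alpha$, the lemma $\pi_b(\alpha)\llbracket \sigma'\rrbracket = \project(\alpha\llbracket \sigma\rrbracket, b)$, treating the four expression forms (constant, variable, $\oplus/\otimes$ of subexpressions, $\semgamma{\textrm{IfThenElse}}$ application) case by case and invoking (i)–(iii) as needed. The IfThenElse case is the one that actually uses substitution~1 non-trivially; every other case is handled by substitutions~2 and~3 together with distributivity. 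Once this structural lemma is in hand, both implications of the iff are immediate.
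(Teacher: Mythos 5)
Your proposal is correct and follows essentially the same route as the paper's proof: the same variable count, the same invariant $\sigma'(n_\calGclia(X^b)) = \project(\sigma(n_\calGclia(X)),b)$, and the same key identity $\pi_b(\alpha)[\sigma'] = \project(\alpha[\sigma],b)$ established by a case analysis on the form of the right-hand-side summands (the paper enumerates the four concrete shapes $\project(n_\calGclia(X_1),b_1)\otimes\project(n_\calGclia(X_2),b_2)$, $n_\calGclia(X_1)\otimes n_\calGclia(X_2)$, constant, and bare variable, which matches your structural cases). Your explicit listing of the $\project$ algebra --- distributivity over $\oplus$ and $\otimes$, composition under conjunction, identity at $\vec{\tr}$ --- only makes precise what the paper invokes in passing, and the reverse direction is treated at the same level of detail in both.
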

\begin{proof}
The variables in \removeif(\eqs) are of form $n_\calGclia(X^b)$ for $X\in V$ and $b\in 2^{|E|}$. Therefore there are at most $|N|2^{|E|}$ variables in $\removeif(\eqs)$. 
	
To show that every solution to $\eqs$ is also a solution to $V^{\vec{\tr}}:=\{n_\calGclia(X^{\vec{\tr}})\}$ in  $\eqs'$, it is sufficient to prove that for all $b\in\bool^d$, if an assignment $\sigma:V\to\slint$ is a solution to $\eqs$, then the assignment $\sigma'(n_\calGclia(X^b)):=\project(\sigma(n_\calGclia(X)),b)$ is a solution to $\eqs'$.

Actually, equations in $\eqs$ are of the form $X=\bigoplus_i\alpha_i$ (\eqref{integerEqsForRemoveIf}). Therefore, all we need  to show is that $\project(\alpha_i[\sigma],b)=\pi_b(\alpha_i)[\sigma']$ for all $\alpha_i$ and $b\in \bool^d$, where $[\sigma]:=[\text{for each }x\in V.x/\sigma(x)]$. Note that $\alpha$ must be one of the following form:
\begin{itemize}	
	\item if $\alpha=\project(n_\calGclia(X_1),b_1)\otimes \project(n_\calGclia(X_2),b_2)$, we have 
	\begin{eqnarray*}
		\project(\alpha[\sigma],b)\hspace{-2mm} & = &\hspace{-2mm}\project(\sigma(n_\calGclia(X_1)),b_1\wedge b)\\&\otimes& \project(\sigma(n_\calGclia(X_2)),b_2\wedge b)\\
		\hspace{-2mm} & = &\hspace{-2mm}\sigma'(n_\calGclia(X_1^{b_1\wedge b}))\otimes \sigma'(n_\calGclia(X_2^{b_2\wedge b}))\\
		\hspace{-2mm} & = &\hspace{-2mm}\pi_b(\alpha_i)[\sigma']
	\end{eqnarray*}
	\item if $\alpha=n_\calGclia(X_1)\otimes n_\calGclia(X_2)$, we have 
	\begin{eqnarray*}
		\pi_b(\alpha_i)[\sigma']\hspace{-2mm} & = &\hspace{-2mm}\left(n_\calGclia(X_1^b)\otimes n_\calGclia(X_2^b)\right)[\sigma']\\
				      \hspace{-2mm} & = &\hspace{-2mm}\project(\sigma(n_\calGclia(X_1)),b)\\&\otimes& \project(\sigma(n_\calGclia(X_2)),b)\\
				      \hspace{-2mm} & = &\hspace{-2mm}\project(\alpha[\sigma],b),
	\end{eqnarray*}
	since $\project$ is distributive over $\otimes$.
	\item if $\alpha=sl$, it is obvious that $\project(\alpha[\sigma],b)\hspace{-1mm}=\hspace{-1mm}\pi_b(\alpha_i)[\sigma']$.
	\item if $\alpha=n_\calGclia(X_1)$, we have 
	\begin{eqnarray*}
		\project(\alpha[\sigma],b)&=&\project(\sigma(n_\calGclia(X_1)),b)\\
		&=&\sigma'(n_\calGclia(X_1^b))=\pi_b(\alpha_i)[\sigma'].
	\end{eqnarray*}

\end{itemize}
	Therefore any solution to $\eqs$ is a solution to $V^{\vec{\tr}}$ in $\eqs'$.
	
	For the other direction, we need to show that assume $\sigma'$ is a solution to $\eqs'$, $\sigma(\cdot):=\sigma'(\cdot,\vec{tr})$ is a solution to $\eqs$. The argument for this case is similar to the previous one.
\end{proof}
\fi

%\iffull
%Before concluding this section, we analyze the size of the semi-linear set computed by the NPA method
%when solving equations generated by CLIA grammars.
%The technique $\removeif$ generates 
%equations with $|N|2^{|E|}$ nonterminals. 
%Using the analysis 
%from \theoref{sizesl} we get the following theorem.
%\begin{theorem}
%\label{The:cliasizesl}
%Given an CLIA grammar~~$~G=(N,\Sigma,S,\delta)$, a finite set of examples $E$ and a nonterminal $X\in N$, 
%NPA yields a semi-linear set $n_{\calGclia}(X)$  of size $2^{2^{O(n2^{|E|}\log n)}}$ where $n:=|N|$.
%\end{theorem}
%\heping{proof}
%\fi

\subsection{Checking Unrealizability}
\label{Se:CLIAchecksat}

Using the symbolic-concretization technique described in \sectref{LIAchecksat} ,
and the complexities described throughout this section, we obtain the following decidability theorem.
\begin{theorem}
  Given a CLIA \sygus problem $\sy$ and a finite set of examples $E$,
  it is decidable whether the \sygus problem $\sy^E$ is (un)realizable.
\end{theorem}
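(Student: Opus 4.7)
The plan is to assemble the decision procedure from the pieces already developed in this section together with the decidability argument used in the LIA case (Section~\ref{Se:ProvingUnrealizabilityOfSyGuSProblemsInLIA}). Given a CLIA \sygus problem $\sy^E=(\psi^E,G)$, I would first apply the rewriting function $\mkadditive$ to obtain an equivalent CLIA$^+$ grammar $\mkadditive(G)$ (so that the only non-commutative integer operator, Minus, is eliminated), and then form the GFA problem $\calGclia = (\mkadditive(G), \calD_{\text{CLIA}^+})$ over the multisorted semilattice of Boolean-vector sets and semi-linear sets. By the lemma that $\calGclia$ is an exact abstraction of the semantics of CLIA$^+$ grammars, $\gamma(m_{\calGclia}(\Start)) = \{ \sem{e}_E \mid e \in L_{\mkadditive(G)}(\Start)\}$, and by \theoref{Coincidence} this coincides with $\gamma(n_{\calGclia}(\Start))$.

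Next I would run \textsc{SolveMutual} to compute $n_{\calGclia}(X)$ for every nonterminal $X$. The termination argument has two levels. The outer loop terminates in at most $|N|2^{|E|}$ iterations because in every iteration at least one Boolean-vector set must strictly grow, and each such set is bounded in size by $2^{|E|}$. Each iteration itself decomposes into two solvable subproblems: \emph{(Step 1)} equations over only Boolean-vector sets, which are finite and converge by iteration in $|N_\bool|\cdot 2^{|E|}$ steps; and \emph{(Step 2)} equations over only semi-linear sets, possibly containing IfThenElse. The IfThenElse symbols are eliminated via the $\removeif$ rewriting, introducing at most $|N|\cdot 2^{|E|}$ new variables of the form $n_{\calGclia}^{\texttt{k}}(X^b)$ whose defining equations use only $\oplus$ and $\otimes$; the resulting system lies in the commutative, idempotent, $\omega$-continuous semiring $(\slint,\oplus,\otimes,\mathbf{0},\mathbf{1})$, so by \lemref{NPAForCISemirings} Newton's method reaches the least fixed-point in a number of iterations bounded by the new variable count. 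The key sub-step is evaluating the abstract production function $\semgamma{\textrm{LessThan}}_E(sl_1,sl_2)$, which reduces to at most $2^{|E|}$ LIA SMT queries of the form $\gammaHat(sl_1,\vec{o}_1) \wedge \gammaHat(sl_2,\vec{o}_2)\wedge \vec{b} = (\vec{o}_1<\vec{o}_2)$, each decidable since LIA is decidable.

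Once $n_{\calGclia}(\Start)$ is in hand as an explicit semi-linear set, I would build the LIA formula
\[
  \calP \;\eqdef\; \gammaHat(n_{\calGclia}(\Start), \vec{o}) \,\wedge\, \bigwedge_{i_j\in E}\psi(\vec{o}_j, i_j),
\]
using the symbolic concretization $\gammaHat$ from \eqref{gammahatsl}, which produces a disjunction of existentially quantified linear arithmetic constraints over the parameters $\lambda_k$ of each linear set plus the output variables $\vec{o}$. Since $\psi^E$ for a CLIA specification on concrete examples yields a quantifier-free LIA formula and the concretization is in LIA, $\calP$ lies in a decidable fragment and its satisfiability can be resolved by an SMT solver.

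Finally, decidability follows by invoking \theoref{GFAisSoundAndComplteForUnrealizable}(2): because $\calGclia$ is an \textbf{exact} abstraction and LIA supports symbolic concretization for $\slint$, $\sy^E$ is unrealizable if and only if $\calP$ is unsatisfiable. Each ingredient above is effective, so the overall procedure is a decision procedure. The main obstacle I would expect is the handling of IfThenElse under mutual recursion, since this is where the approach forks from the plain LIA case: without the $\removeif$ transformation the equations cannot be fed to Newton's method, and without the outer iteration of \textsc{SolveMutual} the Boolean-guard information cannot be refined enough to achieve exactness. Once both are in place, however, exactness is preserved at every step and the theorem follows by combining the LIA decidability result with a bounded outer loop.
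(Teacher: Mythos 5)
Your proposal is correct and follows essentially the same route as the paper's own proof: the paper likewise combines the exactness of $\calGclia$ (Lemma~\ref{Le:cliaexact}), the effectiveness of \textsc{SolveMutual} (with the $\removeif$ rewriting and Newton's method handling the integer equations), and symbolic concretization into LIA, and then invokes \theoref{GFAisSoundAndComplteForUnrealizable}(2). You simply spell out the termination bounds and the LessThan SMT-query subroutine in more detail than the paper's compressed argument does.
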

\iffull
\begin{proof}
	We have shown that $n_{\calGclia_E}$ is an exact abstraction for CLIA grammars (Lemma~\ref{Le:cliaexact}) and the domain of semi-linear sets supports symbolic concretization. Besides, we have shown a sound and complete algorithm \textsc{SolveMutual}  to solve $n_{\calGclia_E}$. According to the \theoref{GFAisSoundAndComplteForUnrealizable}, GFA is sound and complete for proving unrealizability of CLIA \sygus problems for finitely example, and hence decidable.
\end{proof}
\fi

%%% Local Variables: 
%%% mode: latex
%%% TeX-master: "main.tex"
%%% End: 

%\input{otherDomain}
% -*- TeX-master: t; TeX-PDF-mode: t -*-

\section{Implementation}
\label{Se:implementation}

We implemented a tool \name that can return two-sided answers to unrealizability
problems of the form $\sy = (\psi, G)$. 
When it returns \emph{\textbf{unrealizable}}, no 
term in $L(G)$ satisfies $\psi$;
when it returns \emph{\textbf{realizable}}, some $e \in L(G)$ satisfies $\psi$;
\name can also time out. 
\name consists of three components: 
1) a verifier (the SMT solver CVC4~\cite{cvc4}), which verifies 
the correctness of candidate solutions and produces counterexamples, 
\begin{changebar}
2) a synthesizer (ESolver---the enumerative solver introduced in \cite{alur2016sygus}), which synthesizes solutions from examples, and 
\end{changebar}
3) an unrealizability verifier, which proves whether the problem is unrealizable on the current set of examples. 

\begin{changebar}
\algref{CEGIS} shows \name's  \cegis loop. Given a \sygus problem $\sy= (\psi, G)$, \name first initialize $E$ with a random input example 
with values in the range $[-50,50]$(\lineref{initializaeE}),  and then, in parallel,
\ballns{1} calls ESolver to find a solution of $\sy^E$ (\lineref{callESolver}), and
\ballns{2} uses grammar flow analysis (\algref{unrealizability}) to decide whether $\sy^{E\cup E_r}$ is unrealizable (\lineref{checkUnrealizable}), where  $E_r$ is a set of randomly generated temporary examples. Randomly generated examples are used when the problem is
proven to be realizable by GFA, but we do not have a candidate solution $e^*$---ESolver did not return yet---that can be used to issue an SMT query to possibly obtain a counterexample.
During each CEGIS iteration, the following three events can happen:
1) If GFA returns unrealizable, $\name$ terminates and outputs unrealizable (\lineref{returnUnrealizable}). 
2) If GFA returns realizable, \name adds a temporary random example to $E_r$ (\lineref{randomExample}), and reruns
GFA with $E\cup E_{r}$. 
3) If ESolver returns a candidate solution $e^*$, the problem $\sy^{E}$ is realizable.
(ESolver never uses the temporary random examples.)
Therefore, \name kills 
the GFA process and then issues an SMT query to check if $e^*$ is a solution to the \sygus problem $\sy$ (\lineref{checkSolution}): if not, \name
adds a counterexample  to $E$ (\lineref{addCex}) and triggers the next CEGIS iteration, otherwise, \name return $e^*$ as a solution to the given \sygus problem \sy\  (\lineref{returnSolution}).
\end{changebar}

\name currently has two modes: \namehorn and \namesl.

\namehorn implements the constrained-Horn-clauses technique for solving equations 
presented in \sectref{algorithm}, and uses \z3's Horn-clause solver, Spacer~\cite{z3},
to solve the Horn clauses.	
	
\namesl implements the decision procedures presented in
\sectref{ProvingUnrealizabilityOfSyGuSProblemsInLIA} and~\sectref{SolvingSyGuSProblemsInCLIA} for solving LIA and CLIA 
problems.  
\namesl also implements two optimizations:
(i) \namesl eagerly removes a linear set from a semi-linear set whenever it is trivially subsumed by
another linear set; and
(ii) \namesl uses the optimization presented in the following paragraph.

\begin{changebar}
\begin{algorithm}{
	\small
	\DontPrintSemicolon 
	\SetKwInOut{Function}{Function}
	\SetKwBlock{DoParallel}{do in parallel}{end}
	\SetKw{Continue}{continue}
	\SetKw{Kill}{kill}
	\Function{\textsc{Nay$(G,\psi)$}}
	\KwIn{Grammar $G$, specification $\psi$}
	$i \gets \textsc{Random}(-50,50)$\ \ \ \ \ Set of examples $E\gets \{i\}$\label{Li:initializaeE}\;
	\While{\textnormal{True}}
	{
		\DoParallel{
			%\If{\textnormal{ESolver}($G,\psi,E$)}{then-block}
			\ballns{1} \{$e^*\gets$\textsc{ESolver}($G,\psi,E$)\label{Li:callESolver}\;
				\ \ \ \ \ \ \Kill \ballns{2}\;
				\ \ \ \ \ \ \If{$\exists i_{cex}.\neg\psi(\sem{e^*},i_{cex})$\label{Li:checkSolution}}{\ \ \ \ \ \ $E\gets E\cup\{i_{cex}\}$\label{Li:addCex}\;
					\ \ \ \ \ \ \Continue}
				\ \ \ \ \ \ \Else{\ \ \ \  \ \ \Return $e^*$      \label{Li:returnSolution}      \}}
			\ballns{2} \{\label{Li:checkUnrealizable}
			$E_r\gets \emptyset$\;
			\ \ \ \ \ \ \ \While{\textnormal{True}}{
				\ \ \ \ $result\gets$\textsc{CheckUnrealizable}($G,\psi,E\cup E_r$)\;
				\ \ \ \ \If{$result=$\textnormal{Unrealizable}}{
					\ \ \ \  \Kill \ballns{1}\;\ \ \ \ \Return Unrealizable\label{Li:returnUnrealizable}}
				\ \ \ \ $i\gets\textsc{Random}(-50,50)$\;
				\ \ \ \ $E_r\gets E_r\cup\{i\}$\label{Li:randomExample}\;
				\ \ \ \ \Continue				\}
			}
	}}}
	\caption{CEGIS with random examples }
	\label{Alg:CEGIS}
\end{algorithm}
\end{changebar}

\paragraph{Solving GFA Equations via Stratification.}
%The equations that arise in a GFA problem are amenable to the
%standard optimization technique of identifying ``strata'' of dependences among%nonterminals, and solving the equations by finding values for
%nonterminals of lower ``strata'' first, working up to higher strata in
%an order that respects dependences among the equations.
%This idea can be formalized in terms of the strongly connected
%components (SCCs) of a dependence graph, where the nodes
%are the nonterminals of $G$ and
%the edges represent the dependence of a left-hand-side %nonterminal on a right-hand-side
%nonterminal.
%For instance, if $G$ has the productions $X_0 \to g(X_1, X_2) %\mid h(X_2, X_3)$,
% the dependence graph has three edges into node $X_0$:
%$X_1{\rightarrow}X_0$, $X_2{\rightarrow}X_0$, and %$X_3{\rightarrow}X_0$.
%To find an order in which to solve the equations:
%1) find the SCCs of the dependence graph,
%2) collapse each SCC into a single node, to form a directed acyclic graph (DAG).
%The equation solver can work through nonterminals 
%in each SCC-node (strata) following any topological order.

\begin{changebar}

The $n_\calG$ equations (\eqref{GFA}) that arise in a GFA problem are amenable to the
standard optimization technique of identifying ``strata'' of dependences among
nonterminals, and solving the equations by finding values for
nonterminals of lower ``strata'' first, working up to higher strata in
an order that respects dependences among the equations.

This idea can be formalized in terms of the strongly connected
components (SCCs) of a dependence graph, defined as follows: the nodes
are the nonterminals of $G$;
the edges represent the dependence of a left-hand-side nonterminal on a right-hand-side
nonterminal.
For instance, if $G$ has the productions $X_0 \to g(X_1, X_2) \mid h(X_2, X_3)$,
then the dependence graph has three edges into node $X_0$:
$X_1 \rightarrow X_0$, $X_2 \rightarrow X_0$, and $X_3 \rightarrow X_0$.
There are three steps to finding an order in which to solve the equations:
\begin{itemize}
	\item
	Find the SCCs of the dependence graph.
	\item
	Collapse each SCC into a single node, to form a directed acyclic graph (DAG).
	\item
	Find a topological order of the DAG.
\end{itemize}
The set of nonterminals associated with a given node of the DAG
corresponds to one of the strata referred to earlier.
The equation solver can work through the strata in any topological order of the DAG.
\end{changebar}

%%% Local Variables: 
%%% mode: latex
%%% TeX-master: "main.tex"
%%% End: 

% -*- TeX-master: t; TeX-PDF-mode: t -*-

\section{Evaluation}
\label{Se:evaluation}

In this section, we evaluate the effectiveness
and performance of \namesl and \namehorn.\footnote{All the experiments were performed on an Intel Core i7 4.00GHz CPU, with 32GB of RAM.
We used version 1.8 of CVC4 and
commit d37c50e of ESolver.
The timeout for each individual \name/ESolver call
is set at 10 minutes.}
\iffull 
First, we present the set of benchmarks we adopt in our experiments.
Second, we evaluate how \name compares to the state-of-the-art tool \nope (\sectref{nopevsnay}).
Third, we evaluate how the performance of \name is affected by
the number of examples required to prove unrealizability 
and by the number of nonterminals in the input grammar (\sectref{timevsexamples}).
Last, we evaluate the effectiveness of the stratification technique presented in \sectref{implementation} (\sectref{optimization-effect})
\fi

\paragraph{Benchmarks.}
We perform our evaluation using \numBenchmarks variants of the 
60 CLIA benchmarks from the CLIA \sygus competition track~\cite{alur2016sygus}.
These benchmarks are the same ones used in the  evaluation of the tool we compare against, \nope~\cite{cav19}, which
like \name only supports LIA and CLIA \sygus problems.

The benchmarks are divided into three categories, and arise from a
tool used to synthesize terms in which a certain syntactic feature
appears a minimal number of times~\cite{HuD18}.
%\begin{enumerate}
%	\item 
\textsc{LimitedPlus} (resp.\ \textsc{LimitedIf}) contains \numPlusBenchmarks
(resp.\ \numIfBenchmarks) benchmarks in which the grammar bounds the number
of times  a \textrm{Plus} (resp.\ \textrm{IfThenElse}) operator can appear
in an expression-tree to be one less than the number
required to solve the original synthesis problem.
\textsc{LimitedConst} contains \numConstBenchmarks benchmarks that
restrict what constants appear in the grammar.
\iffull
The numbers of benchmarks in the three suites differ because for certain benchmarks
it did not make sense to create a limited variant---e.g., if the optimal term
consistent with the specification contains no \textrm{IfThenElse} operators,
no variant is created for the \textsc{LimitedIf} benchmark.
\fi
In each of the benchmarks, the grammar that specifies the search space
generates infinitely many terms.

\subsection{Effectiveness of \name}
\label{Se:nopevsnay}

\begin{resq}%{ProveUnrealizability}
	How effective is \name at proving unrealizability? %(\sectref{EQ1})
\end{resq}

\setlength{\tabcolsep}{4pt}

	\begin{table}[t!]
	\caption{
		Performance of \name and \nope for \textsc{LimitedIf} and \textsc{LimitedPlus} benchmarks.\protect\footnotemark 
		\ The table shows the number of nonterminals ($|N|$), productions ($|\delta|$), and variables ($|V|$)
		in the problem grammar; the number of examples required to prove unrealizability ($|E|$);
		%the number of iterations required to find a fixed point for the IfThenElse guards  (\textbf{ITE iters.}),
		%the total number of Boolean vectors in the final abstraction of Boolean nonterminals ({ $|bset|$}),
		%the size of the final semi-linear set (\textbf{SL size}),
		%how many of the 5 random input examples \name terminated on (\textbf{\name res}),
		%the \textbf{total} time spent by \name as well as the time spent solving integer equations (\textbf{SL})
		%and checking for satisfiability when solving Boolean equations and checking for correctness against
		%the specification (\textbf{SMT}), 
		and the average running time of \namesl, \namehorn, and \nope.
		\xmark\ denotes a timeout.
		}\vspace{-2mm}
		\footnotesize
		\centering
		\begin{tabular}{cc|rrr|r|rr|r}
		&	 \multirow{2}{*}{\bf Problem} & \multicolumn{3}{c|}{\bf Grammar}& \multirow{2}{*}{\bf $|E|$} &  \multicolumn{3}{c}{\bf time (s)}   \\
			%& \multirow{2}{*}{\bf extra col} & \multirow{2}{*}{\bf extra col} \\
			%multicolumn{1}{C{10mm}}{\bf Time} &  \multicolumn{2}{|c}{\bf Time single line} [sec]\\
			 &   &  {\bf $|N|$}  &  {\bf $|\delta|$} & {\bf $|V|$} & &\multirow{1}{*}{\bf \namesl} & \multirow{1}{*}{\bf \namehorn} &\multirow{1}{*}{\bf \nope}  \\
			    \hline
\parbox[t]{-11mm}{\multirow{11}{*}{\rotatebox[origin=c]{90}{\textsc{LimitedPlus}}}}
& guard1	&	7	&	24	&	3	&	2	&	0.24	&	\xmark 	& 	\xmark 	\\
& guard2	&	9	&	34	&	3	&	3	&	12.86	&	\xmark 	&	\xmark 	\\
& guard3	&	11	&	41	&	3	&	1	&	0.07	&	\xmark 	&	\xmark 	\\
& guard4*	&	11 	& 	72 	& 	3 	& 	3.5 	&  147.50	&	\xmark 	&	\xmark 	\\
& plane1	&	2	&	5	&	2	&	1	&	0.07	&	0.55 	& 	0.69\\
& plane2	&	17	&	60	&	2	&	1.6	&	0.90	&	\xmark	&	\xmark\\
& plane3	&	29	&	122	&	2	&	1.5	&	15.73	&	\xmark	&	\xmark\\
& ite1*   	&	7	&	2	&	3	&	2	&	1.05	&	\xmark	&	\xmark\\
& ite2*      & 	9	& 	34 	& 	3 	& 	4 	&   294.88 	&	\xmark	&	\xmark\\
&  sum\_2\_5& 	11 	& 	40 	& 	2 	& 	4 	&	15.48 	& 	\xmark	&	\xmark	\\
&  search\_2&	5	&	16	&	3	&	3	&	1.21	&	\xmark	&	\xmark\\
&  search\_3& 	7 	& 	25 	& 	4	& 	4 	&	2.65	&	\xmark	&	\xmark\\
\hdashline	
\parbox[t]{1mm}{\multirow{13}{*}{\rotatebox[origin=c]{90}{\textsc{LimitedIf}}}}
&max2	&	1	&	5	&	2	&	4	&	0.13	&	1.13	&	1.48 \\
&max3	&	3	&	15	&	3	&	-	&	\xmark 	&	9.67	&	58.57\\
&  sum\_2\_5	&	1	&	5	&	2	&	3	&	0.17	&	0.61	&	0.69\\
&  sum\_2\_15	&	1	&	5	&	2	&	3	&	0.17	& 	0.56	&	0.87\\
&  sum\_3\_5	&	3	&	15	&	3	&	-	&	\xmark	&	17.85	&	101.44 \\
&  sum\_3\_15	&	3	&	15	&	3	&	-	&	\xmark	&	16.65	&	134.87\\
&  search\_2	&	3	&	15	&	3	&	-	&	\xmark	&	25.85	&	112.78\\
& example1	&	3	&	10	&	2	&	3	&	0.14	&	0.73	&	1.12\\
& guard1	&	1	&	6	&	2	&	4	&	0.13	&	0.44	&	0.43\\
& guard2	&	1	&	6	&	2	&	4	&	0.22	& 	0.33	&	0.49\\
& guard3	&	1	&	6	&	2	&	4	&	0.16	&	0.27	&	0.46\\
& guard4	&	1	&	6	&	2	&	4	&	0.11	&	0.72	&	0.58\\
& ite1	&	3	&	15	&	3	&	-	&	\xmark	&	2.68	&	369.57\\
%			 ite1 &	8  & 32 & 3 & - & - & - & - & - & \xmark & \xmark & \xmark &  \xmark &  \xmark\\
%			 ite2 & 10 & 42 & 3 & - & - & - & - & - & \xmark & \xmark & \xmark &  \xmark &  \xmark\\
			
%			 guard4 & 14 & 60 & 3 &  - & - & - & - & - & \xmark & \xmark & \xmark &  \xmark &  \xmark\\
%			 example1 & 52 & 336 & 2 &- & - & - & - & - & \xmark & \xmark & \xmark &  \xmark &  \xmark\\
%			  sum\_2\_5 & 12 & 47 & 2 &- & - & - & - & - & \xmark & \xmark & \xmark &  \xmark &  \xmark\\

\end{tabular}
\label{Ta:results}
\end{table}
\vspace{-2mm}
We compare \namesl and \namehorn against \nope, the state-of-the-art 
tool for proving unrealizability of \sygus problems~\cite{cav19}. 
For each benchmark, we run each tool 5 times on different random
seeds, therefore generating different random sets of examples, and
report whether a tool successfully terminated on at least one run.
This process guarantees that all tools are evaluated on the same
final example set that causes a problem to be unrealizable.
Table~\ref{Ta:results} shows the results for the \textsc{LimitedPlus}
and \textsc{LimitedIf} benchmarks that at least one of the three tools could solve.
Because both tools use a \cegis loop to produce input examples, only
the last iteration of \cegis is unrealizable.
For \namesl and \nope, that iteration is the one that dominates the runtime.
On average, it accounts for 60.4\% of the running time for \namesl and 90.3\% for \nope, but
only 8.3\% for \namehorn.
(For \namehorn, counterexample generation is the most costly step.)
\iffull
Table~\ref{Ta:results2} in \sectref{apptable} shows the detailed result for the \textsc{LimitedConst} benchmarks.
\else
The \textsc{LimitedConst} benchmarks could be solved by all tools, and
results are given in the supplementary material.
\fi
\footnotetext{
	We discovered that three of the benchmarks from~\cite{cav19} were
	actually realizable (marked with *).  Because these benchmarks were created by bounding the number of \textrm{Plus} operators,
	we further reduced the bound by one to make them unrealizable.
}

 \paragraph{Findings.} 
\namesl solved \solved/\numBenchmarks benchmarks,
with an average running time of \avgtime.\iffull\footnote{
  Most of the benchmarks for which  \namesl times out are actually crashes caused by a memory leak in CVC4.
  We have reported the bug.
}\fi
\namehorn and \nope solved identical sets of \solvednope/\numBenchmarks benchmarks,
with an average running time of \avgtimehorn and \avgtimenope, respectively. 
All tools can solve all the \textsc{LimitedConst} benchmarks with similar performance.
These benchmarks are easier than the other ones.

\namesl can solve \naymore \textsc{LimitedPlus} benchmarks that \nope cannot solve. 
These benchmarks involve large grammars, a known weakness of  \nope (see~\cite{cav19}).
\begin{changebar}
	In particular, NaySL can handle grammars with up to 29 nonterminals while Nope can only handle grammars with up to 3 nonterminals.
\end{changebar}
For 8 benchmarks, \namesl only terminated for some of the random runs (certain 
random seeds triggered more CEGIS iterations, making the final problem harder for \name to solve). 

\nope solved 5 \textsc{LimitedIf} benchmarks that \namesl cannot solve.
\nope solves these benchmarks using between 7 and 9 examples in the CEGIS loop.
Because the size of the semi-linear sets computed by \namesl 
depends heavily on the number of examples, \namesl only solves benchmarks that require at most
4 examples. 
\sectref{timevsexamples}  analyzes the effect of the number of examples on 
\namesl's performance.	
When \namesl terminated, it took 1 to 15 iterations (avg.\ 6.6) to find a fixed point for \textrm{IfThenElse} guards, 
and the final abstract domain of each guard contained 2 to 16 Boolean vectors (avg.\ 5.9).	
On average, the running time for computing semi-linear sets is 70.6\% of the total running time.
On the benchmarks that all tools solved, all tools terminated in less than 2s.
	
\namehorn and \nope solved exactly the same set of benchmarks.
This outcome is not surprising because \nope uses \seahorn,
a verification solver based on Horn clauses that builds on Spacer,
which is the constrained-Horn-clause solver used by \namehorn.
\namehorn directly encodes the equation-solving problem, while \nope
reduces the unrealizability problem to a verification problem that is
then translated into a potentially complex constrained-Horn-clause
problem.
For this reason, \namehorn is on average 19 times faster than \nope. 
On benchmarks for which \nope took more than 2 seconds,
\namehorn is 82x faster than \nope (computed as the geometric mean).
\begin{changebar}
	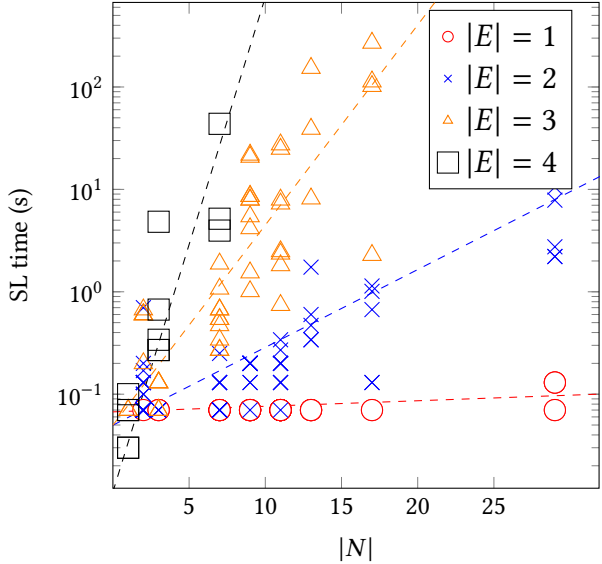
\begin{figure}[t]
	\centering
	\begin{tikzpicture}[every mark/.append style={mark size=4pt}]
	\begin{axis}[%
	legend style={nodes={scale=1.25, transform shape},at={(0.65,0.8)},anchor=west},
	ymode = log,
	ylabel absolute, ylabel style={yshift=0mm},
	xlabel absolute, xlabel style={yshift=0mm},
	xtick ={5,10,15,20,25},
	width=0.95\linewidth,
	height=0.95\linewidth,
	scatter/classes={%
		a={mark=*,draw=red,
			mark options={solid},
			style={solid, fill=white}},
		b={mark=x,draw=blue,
			mark options={solid},
			style={solid, fill=white}},
		c={mark=triangle,draw=orange,
			mark options={solid},
			style={solid, fill=white}},
		d={mark=square,draw=black,
			style={solid, fill=black}}},
	xmin = 0,
	ymin = 0,
	xlabel={$|N|$},
	ylabel={SL time (s)}]
	\addplot[scatter,only marks,%
	scatter src=explicit symbolic]%\
	table[meta=label] {
		x y label
2	0.07	a
11	0.07	a
3	0.07	a
3	0.07	a
3	0.07	a
17	0.07	a
17	0.07	a
29	0.07	a
29	0.07	a
29	0.13	a
29	0.13	a
29	0.13	a
7	0.07	a
7	0.07	a
9	0.07	a
9	0.07	a
9	0.07	a
9	0.07	a
7	0.07	a
7	0.07	a
7	0.07	a
7	0.07	a
9	0.07	a
9	0.07	a
9	0.07	a
9	0.07	a
9	0.07	a
11	0.07	a
11	0.07	a
11	0.07	a
11	0.07	a
11	0.07	a
13	0.07	a
13	0.07	a
13	0.07	a
13	0.07	a
13	0.07	a
11	0.07	a
11	0.07	a
11	0.07	a
11	0.07	a
11	0.07	a
2	0.07	b
2	0.07	b
2	0.07	b
2	0.07	b
2	0.07	b
2	0.07	b
2	0.1	b
2	0.1	b
2	0.13	b
2	0.13	b
2	0.13	b
2	0.17	b
2	0.2	b
2	0.13	b
2	0.07	b
2	0.07	b
2	0.07	b
2	0.07	b
2	0.07	b
2	0.07	b
2	0.07	b
2	0.07	b
2	0.7	b
2	0.07	b
2	0.07	b
2	0.07	b
2	0.07	b
2	0.07	b
2	0.07	b
7	0.25	b
29	2.21	b
3	0.07	b
3	0.07	b
3	0.07	b
17	1.01	b
17	0.67	b
17	0.13	b
17	0.13	b
17	1.14	b
29	19.46	b
29	10.54	b
29	7.85	b
29	2.75	b
29	2.21	b
7	0.13	b
7	0.13	b
7	0.07	b
7	0.07	b
7	0.13	b
9	0.2	b
9	0.07	b
9	0.2	b
9	0.2	b
9	0.13	b
7	0.07	b
7	0.07	b
7	0.07	b
7	0.13	b
7	0.07	b
9	0.2	b
9	0.13	b
9	0.2	b
9	0.2	b
9	0.2	b
11	0.2	b
11	0.2	b
11	0.34	b
11	0.2	b
11	0.27	b
13	0.34	b
13	0.34	b
13	1.74	b
13	0.47	b
13	0.6	b
11	0.13	b
11	0.13	b
11	0.2	b
11	0.13	b
11	0.07	b
2	0.2	c
2	0.6	c
2	0.67	c
2	0.6	c
2	0.6	c
1	0.07	c
1	0.07	c
9	7.92	c
3	0.13	c
3	0.07	c
3	0.07	c
3	0.13	c
3	0.13	c
17	112.81	c
17	101.87	c
17	2.28	c
17	269.37	c
7	0.34	c
7	0.54	c
7	0.54	c
7	0.67	c
7	1.88	c
9	8.59	c
9	1.01	c
9	8.59	c
9	20.74	c
7	0.27	c
7	0.27	c
7	0.47	c
7	1.07	c
7	0.67	c
9	5.44	c
9	4.16	c
9	21.74	c
9	1.54	c
9	7.92	c
11	7.25	c
11	24.83	c
11	27.38	c
11	7.92	c
13	38.92	c
13	153.01	c
13	8.12	c
11	1.81	c
11	2.48	c
11	2.35	c
11	0.74	c
1	0.07	d
1	0.03	d
1	0.03	d
1	0.1	d
1	0.07	d
3	0.27	d
3	0.27	d
3	0.67	d
3	4.83	d
3	0.34	d
7	43.69	d
7	3.96	d
7	5.17	d
	};
	\legend{$|E|=1$,$|E|=2$,$|E|=3$,$|E|=4$}
	%\addplot+[line width=1pt,mark size=.5pt] coordinates {(0,0) (300000,300000)};
	%\addplot[color=red] coordinates { (1,0) (3,1) (6, 3) (9, 9) (10,15) (11,25) (12,40) (13, 60) (14,85) (16,140) (20,250)(21,280) (22,320)};
	%\node [below, xshift=-1ex] at (axis cs:25.69,500) {\small TO};
	\draw [dashed, color=red] (0,-2.7) -- (400,-2.2);
	\draw [dashed, color=blue] (0,-3) -- (400,4);
	\draw [dashed, color=orange] (0,-3) -- (400,15);
	\draw [dashed, color=black] (0,-4.5) -- (400,40);
	\end{axis}
	\end{tikzpicture}	\vspace{-3mm}
	\caption{Time to compute semi-linear set vs.\ $|N|$.\label{Fig:complexity}}	
	\vspace{-2mm}
\end{figure}
	The reason we use random examples in \algref{CEGIS} is that there is a trade-off between the size of solutions and the number of examples when we are proving the realizability of \sygus-with-examples problems. On the one hand, ESolver is not affected by the number of examples, and can efficiently synthesize a solution when a small solution exists.
	On the other hand the time required to prove realizability by \namesl only depends on the size of grammars and the number of examples but not on the size of solutions.
For the realizable \sygus-with-examples problems produced during the CEGIS loop of our experiments, ESolver terminates on average in 1.9 seconds when there exists a solution with size no more than 10, but terminates on average in 54.5 seconds when there exists a solution with size greater than 10 (the largest solution has size 24). For the same problems, \namesl could not prove realizability for problems with more than 5 examples, but it did prove realizability for 7 problems on which ESolver failed. On the problems both ESolver and \namesl solved, ESolver is 87\%  faster than \namesl calculated as a geometric mean.
\end{changebar}

To answer \textbf{EQ 1}: if both \name techniques are considered together,
\emph{\name solved \naymore benchmarks that \nope did not solve, and was faster
on the benchmarks that both tools solved.}

\subsection{The Cost of Proving Unrealizability}
\label{Se:timevsexamples}
\begin{resq}
	How does the size of the grammar and the number of examples affect the 
	performance of different solvers? %(\sectref{EQ2})
\end{resq}
\vspace{-1mm}
%\iffull
%The size of the  semi-linear sets computed by 
%\namesl can be double	 exponential in the number $n$ of nonterminals 
%in the grammar and the size $k$ of the Boolean abstraction set $bset$, which is in the worst case bounded by
%$2^{|E|}$.
%We analyze whether this worst case analysis happens in practice.
%\fi
%Figure~\ref{Fig:complexity} plots the time taken to compute the  semi-linear set against the number of nonterminals (y-axis log-scale).

\paragraph{Finding.}
\begin{changebar}
First, consider \namesl: when we fix the number of examples (different marks in Fig.~\ref{Fig:complexity}),
the time taken to compute the semi-linear set
grows roughly exponentially.
%(instead of the theoretical doubly-exponential worst-case) with  $|N|$.
Also, the time grows roughly exponentially 
%(instead of double-exponentially)
 with respect to $2^{|E|}$.
 	
 	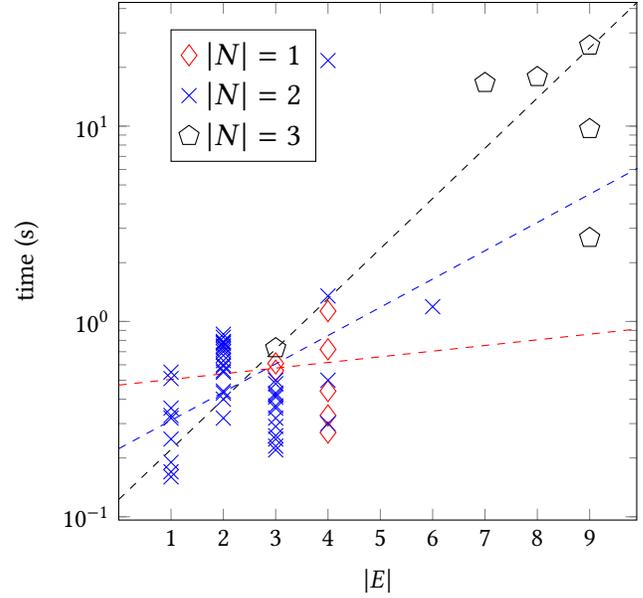
\begin{figure}[t!]

	\begin{tikzpicture}[>=latex]
	\begin{axis}[%
	legend style={nodes={scale=1.2, transform shape},at={(0.1,0.82)},anchor=west},
	ymode = log,
	ylabel absolute, ylabel style={yshift=0},
	xlabel absolute, xlabel style={yshift=0},
	xtick ={1,2,3,4,5,6,7,8,9,10},
	width=\linewidth,
	height=\linewidth,
	scatter/classes={%
		a={mark=diamond,draw=red,mark size=4pt,
			mark options={solid},
			style={solid, fill=white}},
		b={mark=x,draw=blue,mark size=4pt,
			mark options={solid},
			style={solid, fill=white}},
		c={mark=pentagon,draw=black,mark size=4pt,
			mark options={solid},
			style={solid, fill=white}}},
	xmin = 0,
	ymin = 0,
	xlabel={$|E|$},
	ylabel={time (s)}]
	\addplot[scatter,only marks,%
	scatter src=explicit symbolic]%\
	table[meta=label] {
		x y label
				1	0.55	b
		4	1.13	a
		9	9.67	c
		3	0.61	a
		3	0.56	a
		8	17.85	c
		7	16.65	c
		9	25.85	c
		3	0.73	c
		4	0.44	a
		4	0.33	a
		4	0.27	a
		4	0.72	a
		9	2.68	c
		2	0.32	b
		3	0.43	b
		3	0.22	b
		3	0.5	b
		3	0.41	b
		3	0.26	b
		3	0.36	b
		3	0.48	b
		3	0.37	b
		3	0.32	b
		3	0.25	b
		3	0.29	b
		3	0.42	b
		3	0.23	b
		2	0.77	b
		2	0.83	b
		2	0.86	b
		2	0.75	b
		2	0.6	b
		2	0.56	b
		2	0.72	b
		2	0.44	b
		2	0.55	b
		2	0.7	b
		2	0.59	b
		2	0.78	b
		2	0.63	b
		2	0.67	b
		2	0.4	b
		2	0.79	b
		2	0.74	b
		2	0.43	b
		1	0.17	b
		4	0.3	b
		1	0.33	b
		1	0.16	b
		1	0.36	b
		6	1.19	b
		4	1.35	b
		4	0.5	b
		4	21.67	b
		1	0.19	b
		1	0.25	b
		1	0.32	b
		1	0.51	b
	};
	\legend{$|N|=1$,$|N|=2$,$|N|=3$}
	%\addplot+[line width=1pt,mark size=.5pt] coordinates {(0,0) (300000,300000)};
	%\addplot[color=red] coordinates { (1,0) (3,1) (6, 3) (9, 9) (10,15) (11,25) (12,40) (13, 60) (14,85) (16,140) (20,250)(21,280) (22,320)};
	%\node [below, xshift=-1ex] at (axis cs:25.69,500) {\small TO};
	\draw [dashed, color=red] (0,-0.75) -- (1200,0.05);
	\draw [dashed, color=blue] (0,-1.5) -- (1200,2.5);
	\draw [dashed, color=black] (0,-2.1) -- (1200,5);
	\end{axis}
	\end{tikzpicture}
	\caption{Running time of \namehorn vs. number of examples.}
	\label{fig:nayhorn}
	\vspace{-1mm}
\end{figure}%
 	\namehorn and \nope (shown in Fig.~\ref{fig:nayhorn} and Fig.~\ref{fig:nope}, respectively) can only solve benchmarks involving up to 3 nonterminals.  
 	When we fix the number of nonterminals, the running time of these two tools grows roughly  exponentially with respect to the number of examples.

To answer \textbf{EQ 2}: the running time of \namesl grows exponentially with respect to $|N|2^{|E|}$, and the running time of \namehorn and \nope grows exponentially with respect to $|E|$.
 \end{changebar}

\subsection{Effectiveness of Grammar Stratification}
\label{Se:optimization-effect}
\begin{resq}%{HowManyExamplesNeeded}
	Is the stratification optimization from \sectref{implementation} effective? %(\sectref{EQ2})
\end{resq}
\vspace{-1mm}

\iffull
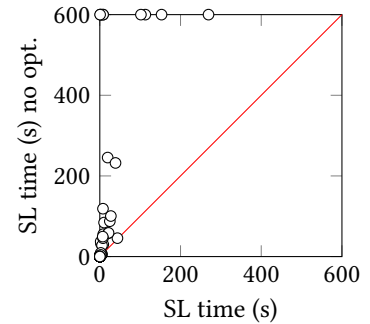
\begin{wrapfigure}{r}{0.3\textwidth}
	\centering
	\vspace{-0.4cm}	
	\begin{tikzpicture}[scale=1,every mark/.append style={mark size=2pt}]
	\begin{axis}[%
	ylabel absolute, ylabel style={yshift=-3mm},
	xlabel absolute, xlabel style={yshift=1mm},
	xtick =data,
	width=0.9\linewidth,
	height=0.9\linewidth,
	scatter/classes={%
		a={mark=*,draw=black,
			mark options={solid},
			style={solid, fill=white}},
		b={mark=x,draw=black,
			mark options={solid, mark size=2pt},
			style={solid, fill=white}, mark size =3pt},
		c={mark=square*,draw=black,
			mark options={solid, mark size=2pt},
			style={solid, fill=white}, mark size =2pt},
		d={mark=triangle,draw=black,
			style={solid, fill=black}}},
	xmin = 0,
	ymin = 0,
	xmax = 600,
	ymax = 600,
	xlabel={SL time (s)},
	ylabel={SL time (s) no opt.},
	xtick = {0,200,400,600}
	]
	\addplot[scatter,only marks,%
	scatter src=explicit symbolic]%
	table[meta=label] {
		x y label
0.07	0	a
0.07	0.07	a
0.13	0.13	a
0.27	0.27	a
0.07	0	a
0.07	0.07	a
0.07	0.07	a
0.27	0.27	a
0.01	0.07	a
0.01	0.07	a
0.07	0.07	a
0.67	0.6	a
0.07	0	a
0.07	0.07	a
0.13	0.07	a
4.83	4.83	a
0.07	0	a
0	0.07	a
0.13	0.13	a
0.34	0.27	a
0.07	0.07	a
1.01	4.83	a
112.81	600	a
0.07	0.07	a
0.67	2.21	a
101.87	600	a
0.07	0.07	a
0.13	0.74	a
2.28	600	a
0.07	0.07	a
0.13	0.54	a
0.07	0.13	a
1.14	5.77	a
269.37	600	a
0.07	0.27	a
19.46	245.55	a
0.07	0.2	a
10.54	84.36	a
0.13	0.2	a
7.85	118.72	a
0.13	0.2	a
2.75	32.08	a
0.13	0.2	a
2.21	37.38	a
0.07	0.07	a
0.13	0.07	a
0.34	0.87	a
0	0.07	a
0.13	0.13	a
0.54	1.48	a
0	0.07	a
0.07	0.07	a
0.54	0.81	a
0.07	0.07	a
0.07	0.13	a
0.67	4.7	a
0	0	a
0.13	0.07	a
1.88	1.21	a
43.69	45.7	a
0.07	0.07	a
0.2	0.4	a
8.59	29.93	a
0.07	0.07	a
0.07	0.13	a
1.01	1.21	a
0	0.07	a
0.2	0.4	a
8.59	55.57	a
0.07	0.07	a
0.2	0.4	a
20.74	57.85	a
0.07	0.07	a
0.13	0.27	a
0.07	0.07	a
0.07	0.07	a
0.27	0.81	a
0.07	0.07	a
0.07	0.07	a
0.27	1.07	a
0.07	0.07	a
0.07	0.07	a
0.47	0.54	a
3.96	4.97	a
0.07	0.07	a
0.13	0.07	a
1.07	0.6	a
5.17	6.91	a
0.01	0.07	a
0.07	0.13	a
0.67	1.41	a
0.07	0.13	a
0.2	0.27	a
5.44	27.78	a
0.07	0.13	a
0.13	0.2	a
4.16	5.1	a
0.07	0.13	a
0.2	0.47	a
21.74	58.72	a
0.07	0.13	a
0.2	0.47	a
1.54	600	a
0.07	0.07	a
0.2	0.4	a
7.92	45.63	a
0.07	0.07	a
0.2	0.4	a
7.25	49.53	a
0.07	0.07	a
0.2	0.6	a
24.83	87.64	a
0.07	0.07	a
0.34	0.74	a
27.38	100.53	a
0.07	0.07	a
0.2	0.13	a
7.92	600	a
0.07	0.07	a
0.27	0.4	a
0.07	0.07	a
0.34	0.54	a
38.92	232.2	a
0.07	0.07	a
0.34	0.81	a
0.07	0.13	a
1.74	1.41	a
153.01	600	a
0.07	0.07	a
0.47	1.61	a
8.12	600	a
0.07	0.13	a
0.6	1.41	a
0.07	0.07	a
0.13	0.13	a
1.81	4.36	a
0.07	0.07	a
0.13	0.13	a
2.48	9.13	a
0.07	0.13	a
0.2	0.27	a
2.35	4.09	a
0.07	0.07	a
0.13	0.13	a
0.07	0.07	a
0.07	0.13	a
0.74	600	a
	};
%	\legend{$|E|=1$,$|E|=2$,$|E|=3$,$|E|=4$}
	%\addplot+[line width=1pt,mark size=.5pt] coordinates {(0,0) (300000,300000)};
	%\addplot[color=red] coordinates { (1,0) (3,1) (6, 3) (9, 9) (10,15) (11,25) (12,40) (13, 60) (14,85) (16,140) (20,250)(21,280) (22,320)};
	%\node [below, xshift=-1ex] at (axis cs:25.69,500) {\small TO};
	\draw [solid, color=red] (0,0) -- (600,600);
	\end{axis}
	\end{tikzpicture}
	\vspace{-0.4cm}
	\caption{Stratification speedup.	\label{Fi:optimization}}
\end{wrapfigure}
Figure~\ref{Fi:optimization} shows the time taken to compute the final semi-linear with and without the 
 stratification technique from \sectref{implementation}. Every point above the diagonal line
is a benchmark for which \namesl performs better with the optimization.
\fi

\paragraph{Finding.}
Using stratification, \namesl can compute the semi-linear sets for 9 benchmarks 
for which \namesl times out without the optimization.
On benchmarks that take more than 1s to solve,
the optimization results on average in a 3.1x speedup.
To answer \textbf{EQ 3}: the grammar-stratification optimization is highly effective.

%%% Local Variables: 
%%% mode: latex
%%% TeX-master: "main.tex"
%%% End: 

% -*- TeX-master: t; TeX-PDF-mode: t -*-

\section{Related Work}
\label{Se:RelatedWork}
\mypar{Unrealizability in \sygus.}
%While proving whether
%a given \sygus problem is realizable is an undecidable problem~\cite{CaulfieldRST15},
Several \sygus solvers 
compete in yearly \sygus competitions~\cite{alur2016sygus}, and can
produce solutions to \sygus problems when a solution exists.
If the problem is unrealizable, these solvers only terminate if the
language of the grammar is finite or contains finitely many
functionally distinct programs, which is not the case in our
benchmarks.

\nope~\cite{cav19}, the tool we compare against in
\sectref{evaluation}, is the only tool that can prove unrealizability
for non-trivial \sygus problems.
\nope reduces the problem of proving unrealizability to one of
proving unreachability in a recursive non-deterministic program, and
uses off-the-shelf verifiers to solve the unreachability problem.
%As we showed \name and \nope are quite comparable.
Unlike \name, \nope does not provide any insights into how we can
devise specialized techniques for solving unrealizability, because
\nope reduces a constrained \sygus problem to a full-fledged
program-reachability problem.
In contrast, the approach presented in this paper gives a
characterization of unrealizability in terms of solving a set of
equations.
Using the equation-solving framework, we provided the first
\textit{decision procedures} for LIA and CLIA \sygus problems over
examples.
Moreover, the equation-based approach allows us to use known
equation-solving techniques, such as Newton's method and constrained
Horn clauses.
\begin{figure}[t]
	\begin{tikzpicture}[>=latex]
	\begin{axis}[%
	legend style={nodes={scale=1.2, transform shape},at={(0.1,0.82)},anchor=west},
	ymode = log,
	ylabel absolute, ylabel style={yshift=0},
	xlabel absolute, xlabel style={yshift=0},
	xtick ={1,2,3,4,5,6,7,8,9,10},
	width=1\linewidth,
	height=\linewidth,
	scatter/classes={%
	a={mark=diamond,draw=red,mark size=4pt,
		mark options={solid},
		style={solid, fill=white}},
	b={mark=x,draw=blue,mark size=4pt,
		mark options={solid},
		style={solid, fill=white}},
	c={mark=pentagon,draw=black,mark size=4pt,
		mark options={solid},
		style={solid, fill=white}}},
	xmin = 0,
	ymin = 0,
	xlabel={$|E|$},
	ylabel={time (s)}]
	\addplot[scatter,only marks,%
	scatter src=explicit symbolic]%\
	table[meta=label] {
		x y label
		
		1	0.69	b
		4	1.48	a
		9	58.5	c
		3	0.69	a
		3	0.87	a
		8	101.44 	c
		7	134.87	c
		9	112.78	c
		3	1.12	c
		4	0.43	a
		4	0.49	a
		4	0.46	a
		4	0.58	a
		9	369.57	c
		2	0.78	b
		3	1.26	b
		3	1.25	b
		3	1.01	b
		3	0.87	b
		3	0.85	b
		3	0.97	b
		3	0.7	b
		3	0.8	b
		3	1.09	b
		3	1.13	b
		3	0.73	b
		3	0.77	b
		3	1.06	b
		2	1.3	b
		2	1.46	b
		2	1.31	b
		2	1.28	b
		2	2.52	b
		2	1.35	b
		2	1.41	b
		2	1.43	b
		2	2.37	b
		2	1.56	b
		2	0.76	b
		2	1.87	b
		2	1.33	b
		2	1.53	b
		2	1.5	b
		2	1.44	b
		2	2.29	b
		2	0.87	b
		1	0.36	b
		4	0.5	b
		1	0.57	b
		1	0.44	b
		1	0.99	b
		6	3.08	b
		4	2.49	b
		4	1.83	b
		4	24.18	b
		1	0.33	b
		1	0.41	b
		1	0.47	b
		1	0.74	b
	};
	\legend{$|N|=1$,$|N|=2$,$|N|=3$}
	%\addplot+[line width=1pt,mark size=.5pt] coordinates {(0,0) (300000,300000)};
	%\addplot[color=red] coordinates { (1,0) (3,1) (6, 3) (9, 9) (10,15) (11,25) (12,40) (13, 60) (14,85) (16,140) (20,250)(21,280) (22,320)};
	%\node [below, xshift=-1ex] at (axis cs:25.69,500) {\small TO};
	
	\draw [dashed, color=red] (0,-0.35) -- (1200,0.05);
	\draw [dashed, color=blue] (0,-1) -- (1200,4.8);
	\draw [dashed, color=black] (0,4) -- (1200,5.5);
	%	\draw [dashed, color=black] (0,-4.5) -- (400,40);
	\end{axis}
	\end{tikzpicture}
	\caption{Running time of \nope vs. number of examples.}
	\label{fig:nope}
	\vspace{-1mm}
\end{figure}
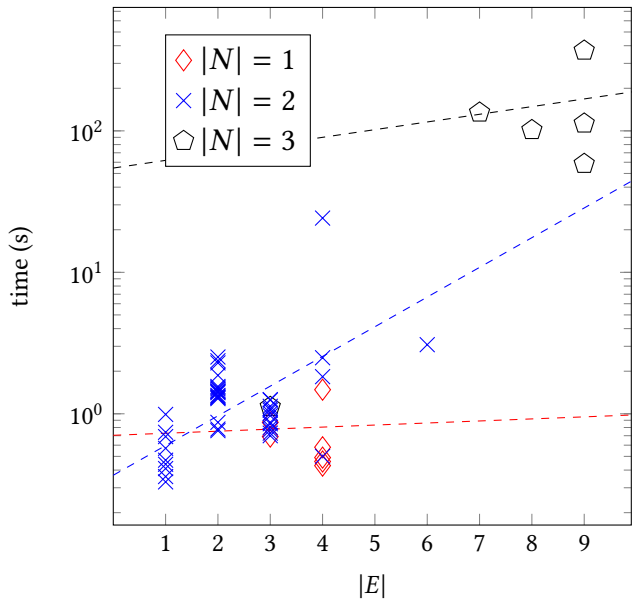

\mypar{Unrealizability in Program Synthesis.}
For certain synthesis problems---e.g., reactive
synthesis~\cite{Bloem15}---realizability is decidable.
However, \sygus is orthogonal to such problems.

Mechtaev et al.~\cite{Mechtaev18} propose to use unrealizability to   prune irrelevant paths
 in symbolic-execution engines.
%For each path $\pi$ in the program, a synthesis
%problem $p_\pi$ is generated so that if $p_\pi$ is unrealizable, the
%path $\pi$ is infeasible.
The synthesis problems generated by Mechtaev et al. are not directly expressible
in \sygus. Moreover, these problems are decidable
because they can be encoded as SMT formulas.

\mypar{Abstractions in Program Synthesis.}
SYNGAR~\cite{WangDS18} uses predicate abstraction to prune the search space of a  
synthesis-from-examples problem.
Given an input example $i$ and a regular-tree grammar $A$ representing the search space,
SYNGAR builds a new grammar $A_\alpha$ in which each nonterminal is a pair $(q,a)$,
where $q$ is a nonterminal of $A$ and $a$ is a predicate of a
predicate-abstraction domain $\alpha$.
Any term that can be derived from $(q,a)$ is guaranteed to produce 
an output satisfying the predicate $a$ when fed the input $i$. 
$A_\alpha$ is constructed iteratively by adding  nonterminals in
a bottom-up fashion;
it is guaranteed to terminate because the set $\alpha$ is finite.
SYNGAR can be viewed as a special case of our framework in which   
the set of values $n_\calG(X)$ is based on predicate abstraction (see \sectref{algorithm}).
%When $n_\calG(X)$ is finite, a fixed-point algorithm based on iterations  is always guaranteed to terminate, which is why SYNGAR employs a simple bottom-up
%approach. SYNGAR's bottom-up approach cannot terminate when the domain of $\alpha$ is infinite 
%because the final tree automaton would require infinitely many states. 
SYNGAR's approach is tied to finite abstract domains,
while our equational approach extends to infinite domains---e.g., semi-linear sets---because
it does not specify how the equations must be solved.

%%% Local Variables: 
%%% mode: latex
%%% TeX-master: "main.tex"
%%% End: 

%\input{conclusion}

%% Acknowledgments
%% Acknowledgments

\begin{acks}                            %% acks environment is optional
	
	%% contents suppressed with 'anonymous'
	
	%% Commands \grantsponsor{<sponsorID>}{<name>}{<url>} and
	
	%% \grantnum[<url>]{<sponsorID>}{<number>} should be used to
	
	%% acknowledge financial support and will be used by metadata
	
	%% extraction tools.
	
	Supported, in part,
	by a gift from Rajiv and Ritu Batra;
	by \grantsponsor{GS100000003}{ONR}{https://www.onr.navy.mil/
	} under grants
~\grantnum{GS100000003}{N00014-17-1-2889} and
~\grantnum{GS100000003}{N00014-19-1-2318}; by \grantsponsor{GS100000003}{NSF}{} under grants ~\grantnum{GS100000003}{1763871} and
~\grantnum{GS100000003}{1750965}; and by a Facebook fellowship.
The U.S.\ Government is authorized to reproduce and distribute
reprints for Governmental purposes notwithstanding any copyright
notation thereon.
Opinions, findings, conclusions, or recommendations
expressed in this publication are those of the authors,
and do not necessarily reflect the views of the sponsoring
agencies.
\end{acks}

%% Bibliography
%%% -*-BibTeX-*-
%%% Do NOT edit. File created by BibTeX with style
%%% ACM-Reference-Format-Journals [18-Jan-2012].

%% Appendix
\iffull
\appendix
\section{Additional Table}
\label{Se:apptable}
This appendix contains Table~\ref{Ta:results2}, which contains additional statistics on the comparison 
of \name and \nope.

\begin{table}[h]		\caption{
		Performance of \name and \nope for \textsc{LimitedConst} benchmarks.
		See caption of Table~\ref{Ta:results} for the description of the table columns.
	}
	\footnotesize
	\centering
		\begin{tabular}{cc|rrr|r|rr|r}
			&	 \multirow{2}{*}{\bf Problem} & \multicolumn{3}{c|}{\bf Grammar}& \multirow{2}{*}{\bf $|E|$} &\multicolumn{3}{c}{\bf time (s)}   \\
			%& \multirow{2}{*}{\bf extra col} & \multirow{2}{*}{\bf extra col} \\
			%multicolumn{1}{C{10mm}}{\bf Time} &  \multicolumn{2}{|c}{\bf Time single line} [sec]\\
			&   &  {\bf $|N|$}  &  {\bf $|\delta|$} & {\bf $|V|$} & &\multirow{1}{*}{\bf \namesl} & \multirow{1}{*}{\bf \namehorn} &\multirow{1}{*}{\bf \nope}  \\
			\hline
		\parbox[t]{1mm}{\multirow{45}{*}{\rotatebox[origin=c]{90}{\textsc{LimitedConst}}}}&array\_search\_2	&	2	&	10	&	3	&	2	&	0.17	&	0.04	&	0.78\\	
		&array\_search\_3	&	2	&	11	&	4	&	2	&	0.30	&	0.04	&	1.26\\	
		&array\_search\_4	&	2	&	12	&	5	&	2	&	0.47	&	0.01	&	1.25\\	
		&array\_search\_5	&	2	&	13	&	6	&	2	&	0.57	&	0.04	&	1.01\\	
		&array\_search\_6	&	2	&	14	&	7	&	2	&	0.77	&	0.03	&	0.87\\	
		&array\_search\_7	&	2	&	15	&	8	&	2	&	0.97	&	0.03	&	0.85\\	
		&array\_search\_8	&	2	&	16	&	9	&	2	&	1.28	&	0.04	&	0.97\\	
		&array\_search\_9	&	2	&	17	&	10	&	2	&	1.58	&	0.04	&	0.70\\	
		&array\_search\_10	&	2	&	18	&	11	&	2	&	1.88	&	0.04	&	0.80\\	
		&array\_search\_11	&	2	&	19	&	12	&	2	&	2.21	&	0.01	&	1.09\\	
		&array\_search\_12	&	2	&	20	&	13	&	2	&	2.62	&	0.02	&	1.13\\	
		&array\_search\_13	&	2	&	21	&	14	&	2	&	3.05	&	0.05	&	0.73\\	
		&array\_search\_14	&	2	&	22	&	15	&	2	&	3.49	&	0.05	&	0.77\\	
		&array\_search\_15	&	2	&	23	&	16	&	2	&	3.79	&	0.03	&	1.06\\	
		&array\_sum\_2\_5	&	2	&	9	&	2	&	2	&	0.13	&	0.04	&	1.30\\	
		&array\_sum\_2\_15	&	2	&	9	&	2	&	2	&	0.14	&	0.01	&	1.46\\	
		&array\_sum\_3\_5	&	2	&	10	&	3	&	2	&	0.07	&	0.01	&	1.31\\	
		&array\_sum\_3\_15	&	2	&	10	&	3	&	2	&	0.07	&	0.04	&	1.28\\	
		&array\_sum\_4\_5	&	2	&	11	&	4	&	2	&	0.13	&	0.03	&	2.52\\	
		&array\_sum\_4\_15	&	2	&	11	&	4	&	2	&	0.34	&	0.05	&	1.35\\	
		&array\_sum\_5\_5	&	2	&	12	&	5	&	2	&	0.07	&	0.02	&	1.41\\	
		&array\_sum\_5\_15	&	2	&	12	&	5	&	2	&	0.34	&	0.07	&	1.43\\	
		&array\_sum\_6\_5	&	2	&	13	&	6	&	2	&	0.14	&	0.10	&	2.37\\	
		&array\_sum\_6\_15	&	2	&	13	&	6	&	2	&	0.34	&	0.02	&	1.56\\	
		&array\_sum\_7\_5	&	2	&	14	&	7	&	2	&	0.14	&	0.01	&	0.76\\	
		&array\_sum\_7\_15	&	2	&	14	&	7	&	2	&	0.34	&	0.08	&	1.87\\	
		&array\_sum\_8\_5	&	2	&	15	&	8	&	2	&	0.07	&	0.09	&	1.33\\	
		&array\_sum\_8\_15	&	2	&	15	&	8	&	2	&	0.13	&	0.10	&	1.53\\	
		&array\_sum\_9\_5	&	2	&	16	&	9	&	2	&	0.07	&	0.01	&	1.50\\	
		&array\_sum\_9\_15	&	2	&	16	&	9	&	2	&	0.34	&	0.03	&	1.44\\	
		&array\_sum\_10\_5	&	2	&	17	&	10	&	2	&	0.07	&	0.03	&	2.29\\	
		&array\_sum\_10\_15	&	2	&	17	&	10	&	2	&	0.27	&	0.07	&	0.87\\	
		&mpg\_example1	&	2	&	9	&	2	&	1	&	0.07	&	0.05	&	0.36\\	
		&mpg\_example2	&	2	&	9	&	3	&	3	&	5.17	&	0.09	&	0.50\\	
		&mpg\_example3	&	2	&	10	&	3	&	1	&	0.07	&	0.03	&	0.57\\	
		&mpg\_example4	&	2	&	11	&	4	&	1	&	0.07	&	0.04	&	0.44\\	
		&mpg\_example5	&	2	&	9	&	2	&	1	&	0.01	&	0.08	&	0.99\\	
		&mpg\_guard1	&	2	&	10	&	3	&	3	&	15.84	&	0.01	&	3.08	\\
		&mpg\_guard2	&	2	&	10	&	3	&	3	&	16.44	&	0.03	&	2.49	\\
		&mpg\_guard3	&	2	&	10	&	3	&	3	&	15.57	&	0.08	&	0.44	\\
		&mpg\_guard4	&	2	&	10	&	3	&	3	&	15.70	&	1.44	&	24.18	\\
		&mpg\_ite1	&	2	&	10	&	3	&	1	&	0.01	&	0.02	&	0.33\\	
		&mpg\_ite2	&	2	&	10	&	3	&	1	&	0.07	&	0.18	&	0.41\\	
		&mpg\_plane2	&	2	&	10	&	3	&	1	&	0.07	&	0.12	&	0.47\\	
		&mpg\_plane3	&	2	&	10	&	3	&	1	&	0.07	&	0.08	&	0.74	
			\end{tabular}
			\label{Ta:results2}
			\vspace{-2mm}
\end{table}
\fi

\end{document}